\newcommand{\myeq}[1]{\stackrel{\mathclap{\footnotesize\mbox{#1}}}{=}}
\definecolor{blue}{rgb}{0,0.2,1}
\definecolor{red}{rgb}{0.9,0,0}
\newcommand{\vect}[1]{\boldsymbol{#1}}
\newtheorem{theorem}{Theorem}
\newtheorem{lemma}[theorem]{Lemma}
\newtheorem{definition}[theorem]{Definition}
\newtheorem{example}[theorem]{Example}
\newtheorem{remark}[theorem]{Remark}
\begin{document}

\title{Quantum simulation for time-dependent Hamiltonians -- with applications to non-autonomous ordinary and partial differential equations}

\date{\today}
\author{Yu Cao}
\affiliation{School of Mathematical Sciences, Shanghai Jiao Tong University, Shanghai, 200240, China}
\affiliation{Institute of Natural Sciences, Shanghai Jiao Tong University, Shanghai 200240, China}
\affiliation{Ministry of Education Key Laboratory in Scientific and Engineering Computing, Shanghai Jiao Tong University, Shanghai 200240, China}

\author{Shi Jin}
\affiliation{School of Mathematical Sciences, Shanghai Jiao Tong University, Shanghai, 200240, China}
\affiliation{Institute of Natural Sciences, Shanghai Jiao Tong University, Shanghai 200240, China}
\affiliation{Ministry of Education Key Laboratory in Scientific and Engineering Computing, Shanghai Jiao Tong University, Shanghai 200240, China}

\affiliation{Shanghai Artificial Intelligence Laboratory, Shanghai, China}

\author{Nana Liu}
\email{nana.liu@quantumlah.org}
\affiliation{School of Mathematical Sciences, Shanghai Jiao Tong University, Shanghai, 200240, China}
\affiliation{Institute of Natural Sciences, Shanghai Jiao Tong University, Shanghai 200240, China}
\affiliation{Ministry of Education Key Laboratory in Scientific and Engineering Computing, Shanghai Jiao Tong University, Shanghai 200240, China}
\affiliation{Shanghai Artificial Intelligence Laboratory, Shanghai, China}
\affiliation{University of Michigan-Shanghai Jiao Tong University Joint Institute, Shanghai 200240, China.}

\begin{abstract} 
Non-autonomous dynamical systems appear in a very wide range of interesting applications, both in classical and quantum dynamics, where in the latter case it  corresponds to having a time-dependent Hamiltonian. However, the quantum simulation of these systems often needs to appeal to rather complicated procedures involving the Dyson series, considerations of time-ordering, requirement of time steps to be discrete and/or requiring multiple measurements and postselection. These procedures are generally much more complicated than the quantum simulation of time-independent Hamiltonians. Here we propose an alternative formalism that turns any non-autonomous unitary dynamical system into an autonomous unitary system, i.e., quantum system with a time-{\it independent} Hamiltonian, in one higher dimension, \textit{while keeping time continuous}. This makes the simulation with time-dependent Hamiltonians not much more difficult than that of time-independent Hamiltonians, and can also be framed in terms of an analogue quantum system evolving continuously in time. We show how our new quantum protocol for time-dependent Hamiltonians can be performed in a resource-efficient way and without measurements, and can be made possible on either continuous-variable, qubit or hybrid systems. Combined with a technique called Schr\"odingerisation, this dilation technique can be applied to the quantum simulation of any linear ODEs and PDEs, and nonlinear ODEs and certain nonlinear PDEs, with time-dependent coefficients. 
\end{abstract}
\maketitle 

\section{Introduction}
Non-autonomous dynamical systems \cite{kloeden2011nonautonomous, kiguradze2012asymptotic, gunzburger2007reduced} appear in broad areas of applications, in both classical \cite{clemson2014discerning} and quantum physical systems \cite{poulin2011quantum}, in adiabatic quantum simulation \cite{albash2018adiabatic}, in biological systems \cite{thieme2000uniform, coutinhoa2006threshold}, and in finance such as the Black-Scholes equations with time-dependent coefficients -- for example the  risk-free interest rate \cite{lo2003simple}. Numerical simulation of such systems are challenging \cite{gunzburger2007reduced}, especially when the time-dependent operator in the system do not commute at different times. In this case, the evolution operator needs time-ordering which is usually interpreted by Dyson's theory \cite{reed1975methods}, in the case of unitary dynamics. Quantum simulation of such systems -- with time-dependent Hamiltonians -- also become more difficult. For instance, when using Trotter splitting, each step in the split involves a different Hamiltonian,  corresponding to {\it different quantum circuits at different times.}\\

In this paper we introduce a strategy based on extending non-autonomous dynamical systems, including both ordinary differential equations (ODEs) and partial differential equations (PDEs), to new systems of autonomous PDEs in one higher dimension. This new dimension, which can be treated like a spatial dimension, gives rise to an extra linear convection term and acts like a clock parameter. While previous classical methods \cite{guckenheimer2013nonlinear, wiggins1991introduction} have also used such a clock parameter, they turn non-autonomous systems of linear ODEs into a system of nonlinear ODEs in one higher dimension. In our approach, the autonomous PDEs in one higher dimension {\it remains linear}. Such linearity not only reduces resources but is also much more suitable for quantum simulation. \\

This idea can be naturally used for unitary dynamics, allowing us to avoid using time ordering and Dyson's series -- which is the foundation of many of the existing quantum algorithms for time-dependent Hamiltonians (e.g ., \cite{poulin2011quantum,wiebe2010higher, low2018hamiltonian,berry2020time,chen2020quantum,AnFangLin, an2022time}) in constructing similarly efficient quantum algorithms. Our formulation has several key benefits: 1) We do not require discrete time-steps at the outset, which is necessary, for instance, for a formalism based on Dyson's series. Our formulation instead transforms the original problem into unitary evolution with a time-independent Hamiltonian in \textit{continuous time} using only one additional mode, thus making it also a candidate for analogue quantum simulation. 2) Unlike many previous methods, we do not require any measurements or postselection. 3)  The initial ancilla states can also be a fully mixed state, which can further simplify implementation. 4) Furthermore, unlike some previous algorithms, we also do not require extra access to operations dealing with time-ordering or the requirement of implementing a different unitary at a different time-step. \\

We can easily extend this formalism also to non-unitary dynamics, while still easily maintaining continuity in the time parameter, as well as all other parameters. For this we will use the  Schr\"odingerisation technique \cite{2023analog,jin2212quantum, jin2022quantum}, which can turn any non-unitary dynamics into unitary dynamics while preserving continuity in time. Schr\"odingerisation can thus be directly integrated with our new method. This means that no discretisation of either time or spatial parameters at any stage is necessary for the formalism to apply. \\

Our protocol finds applications not only in linear ODEs and PDES with time-dependent coefficients, but also some nonlinear PDEs including scalar hyperbolic balance laws, and Hamilton-Jacobi equations. These systems can be transferred to linear transport equations via the level set techniques \cite{JinOsher, 2022nonlinearpde}, and then the aforementioned strategy applies. \\

Before moving on, we also mention here a few related, but in-equivalent, works in the mathematical physics community and in the foundations of quantum physics. To the best of our knowledge, Howland firstly proposed an augmented Hamiltonian, also known as \enquote{Howland’s clock Hamiltonian} to study scattering theories \cite{howland_stationary_1974}. This formalism was recently adopted by e.g., \citet{burgarth_control_2023} to construct a  dilation method for open quantum systems. However, these tools were used more as a mathematical method for the study of quantum systems rather than an explicit proposal of a quantum protocol to allow the quantum simulation of non-autonomous systems, which we demonstrate explicitly in this paper along with various applications. We emphasise that having the same Hamiltonian does not mean that one has the same protocol. In the foundations of quantum physics community, a related set of works relies on the Page-Wooters mechanism \cite{page1983evolution}, which looks at turning time-dependent quantum dynamics into time-independent quantum dynamics, rather than just autonomous quantum mechanics. It can be used for instance in finding a quantum description for time \cite{giovannetti2015quantum}. We briefly summarise the mechanism in Appendix~\ref{app:PW}. Although it is clear that it is distinct from our proposal, it is interesting to speculate on how our protocol could inform future studies in this foundational area. \\

Although our initial formulation is chiefly based on continuous-variables, this is for the purpose of conveying the central ideas in as clean a manner as possible. Our approach can be easily adopted to either continuous-variable quantum systems (qumodes), qubits or hybrid qubit-qumode formulations, which we describe later in the paper. \\

In Section~\ref{sec:nonautotransformation} we will formulate our main idea of turning a non-autonomous unitary system into a linear autonomous PDE system with unitary dynamics, so this applies also to classical simulation. In Section~\ref{sec:quantumsimintro} we describe how general linear non-autonomous systems can be mapped onto quantum simulation with a time-independent Hamiltonian operating only on one extra mode. We look at non-autonomous unitary dynamics and non-unitary dynamics separately, where in the latter case we apply Schr\"odingerisation. In this section, we also describe the qubit formulation and hybrid formulations. See Figures~\ref{fig:quantumdynamics} and~\ref{fig:perfect} for a summary of the protocols. In Section~\ref{sec:applications} we describe applications to quantum systems and more general linear and nonlinear ODEs and PDEs. In Section~\ref{sec:numerical},  we demonstrate the effectiveness of our quantum algorithms via numerical examples, in particular, emulating a 1D time-dependent Fokker-Planck equation with 20 qubits in total. Finally, we include a brief discussion in Section~\ref{section::discussion}.

\subsection*{Notation}
Here we summarise our notation below. A vector $\vect{u}(t)$ is time-dependent and in the ket-notation $|u(t)\rangle \equiv \vect{u}(t)/\|\vect{u}(t)\|$ denotes its normalised counterpart, where $\| . \|$ is the $l_2$-norm. When $\vect{u}(t)$ obeys unitary evolution, then $\|\vect{u}(t)\|=\|\vect{u}(0)\|$, where we can assume $\|\vect{u}(0)\|=1$ unless otherwise stated, so $|u(t)\rangle=\vect{u}(t)$. According to the context, $\vect{u}(t)$ could act on either finite-dimensional or infinite-dimensional Hilbert spaces. \\

Quantum information processing protocols are often introduced in terms of qubits, which are the quantum counterparts to classical bit-strings $\{0,1\}^{\otimes m}$. A qubit uses the eigenbasis $\{|0\rangle, |1\rangle\}$ for the computational basis. An $J$-qubit system is a tensor product of $J$ qubits which lives in $2^J$-dimensional Hilbert space, so $|u(t)\rangle \propto \sum_{j=1}^J \vect{u}_j(t)|j\rangle$, where $\vect{u}_j(t)$ is the $j^{\text{th}}$-component of $\vect{u}(t)$. A continuous-variable (CV) quantum state, or {\it `qumode'}, on the other hand, spans an infinite-dimensional Hilbert space. A qumode is the quantum analogue of a continuous classical degree of freedom, like position, momentum or energy before being quantised. A qumode is equipped with observables with a continuous spectrum, such as the position $\hat{x}$ and momentum $\hat{p}$ observables of a quantum particle. Its eigenbasis can be chosen to be for instance $\{|x\rangle\}_{x \in \mathbb{R}}$, which are the eigenstates of $\hat{x}$. A system of $D$-qumodes is a tensor product of $D$ qumodes. The qubit and qumode language can be easily interchanged as well, in the limit where we are restricted to a truncation of the infinite-dimensional Hilbert space. In this paper, although most of the language is in the qumode of CV language, this is done for simplicity only, and the results are easily translated into that of qubits.\\

In this paper, we use $[\hat{A}, \hat{B}]=\hat{A}\hat{B}-\hat{B}\hat{A}$ to denote commutation relations and $\{\hat{A}, \hat{B}\}=\hat{A}\hat{B}+\hat{B}\hat{A}$ to denote anticommutation relations. 

\section{Transforming a non-autonomous system to an autonomous system with unitary dynamics} \label{sec:nonautotransformation}

We consider a general linear non-autonomous dynamical system $\vect{u}(t)$ on a Hilbert space $\mathcal{H}$, 
\begin{align}\label{ODE}
&\frac{d\vect{u}(t)}{dt}= -i\vect{A}(t) \vect{u}(t) \qquad \vect{u}(0)=\vect{u}_0. 
\end{align}
Here $\vect{u}(t)$ can be either a finite-dimensional vector $\vect{u}(t)=\sum_{j=1}^J u_j(t)|j\rangle$ (for $J$-dimensional ODEs), or an infinite-dimensional vector $\vect{u}(t)=\int_{\mathbb{R}^D} u(t, x_1, \cdots, x_D)|x_1, \cdots, x_D\rangle dx_1 \cdots dx_D$, where $\vect{A}(t)$ are  differential operators with (non-linear) time-dependent coefficients.
Our aim is to transform this into an autonomous system with unitary dynamics.

Our first observation is that any system in Eq.~\eqref{ODE} can be easily transformed into a linear non-autonomous system with unitary dynamics
\begin{align} \label{eq:timeschr}
\begin{aligned}
    & \frac{d\vect{y}(t)}{dt}=-i\vect{H}(t)\vect{u}(t), \qquad \vect{H}(t)=\vect{H}^{\dagger}(t) , \\
    & \vect{y}(0)=\vect{y}_0.
    \end{aligned}
\end{align}
This is clear in the case when $\vect{A}$ is Hermitian: $\vect{A}(t)=\vect{A}^{\dagger}(t)=\vect{H}(t)$, which describes a quantum system $\vect{u}(t)=\vect{y}(t)$ evolving unitarily under a time-dependent Hamiltonian. The second case is the more general $\vect{A}(t) \neq \vect{A}^{\dagger}(t)$ scenario. In this case we can use a method called Schr\"odingerisation \cite{2023analog, jin2022quantum, jin2212quantum}, where $\vect{H}(t)$ can be easily constructed from $\vect{A}(t)$ and acts on a Hilbert space with only one extra mode. Then there is another simple step to recover $\vect{u}(t)$ from $\vect{y}(t)$. We will review this Schr\"odingerisation method in Section~\ref{sec:quantumsimintro}. \\

We consider the time-dependent Hamiltonian system in Eq~\eqref{eq:timeschr}. 
The solution to Eq.~\eqref{eq:timeschr} can be written as 
\begin{equation}\label{ODE-solution}
\vect{y}(t)=\mathcal{U}_{t, 0}\vect{y}_0, \qquad  
\end{equation}
where 
\begin{align*} 
 \mathcal{U}_{t, s}=   \mathcal{T}e^{-i\int_s^{t}\vect{H}(\tau) d\tau}=\lim_{N \rightarrow \infty} e^{-i\vect{H}(t_N)\Delta t}\cdots e^{-i\vect{H}(t_1)\Delta t}
=I+\sum_{n=1}^\infty (-i)^n\frac{1}{n!}\int_s^t dt_1 \cdots \int_s^t dt_n \mathcal{T}\vect{H}(t_1)\cdots \vect{H}(t_n),
\end{align*}
and $\mathcal{T}$ is the chronological time-ordering operator. 
Note that in the case where $[\vect{H}(t), \vect{H}(t')]=0$, then no time-ordering is required since $\vect{H}(t)$ commutes at different times. It is important to note that when $[\vect{H}(t), \vect{H}(t')] \neq 0$, $\exp(-i\int_0^t \vect{H}(\tau)d\tau)$ is \textit{not} a solution to Eq.~\eqref{eq:timeschr} and the time-ordering operation is necessary to define an ordering of the operations $\exp(-i\vect{H}(t_i) \Delta t)$ where $t_N>t_{N-1}>\cdots>t_1$ with $t=N\Delta t$.
\\

The unitary operator $\mathcal{U}_{t, s}$ satisfies the following properties \cite{reed1975methods}:

\begin{lemma}
For any $t\in \mathbb{R}$, 
\begin{align}
& \frac{d \mathcal{U}_{t,s}}{dt}=-i\vect{H}(t)\mathcal{U}_{t,s},
\qquad \frac{d \mathcal{U}_{s,t}}{dt}=i\,\mathcal{U}_{s,t}\vect{H}(t),\label{eq:vequation} \qquad  \mathcal{U}_{t,t}=\mathbf{1},
\end{align}
and for any $s'\in \mathbb{R}$, 
\begin{align} \label{eq:vequation2}
& \mathcal{U}_{t, s}=\mathcal{U}_{t, s'}\mathcal{U}_{s', s},  \qquad \mathcal{U}^{\dagger}_{t,s}=\mathcal{U}_{s,t}. 
\end{align}
\end{lemma}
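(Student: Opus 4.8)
The plan is to take the time-ordered Dyson series in Eq.~\eqref{ODE-solution} as the primitive object and to derive all five identities from it, exploiting the fact that the forward differential equation is the fundamental one from which the remaining properties follow by elementary manipulations. I would first dispose of $\mathcal{U}_{t,t}=\mathbf{1}$: setting $s=t$ in the series makes every integral $\int_t^t\cdots$ vanish, so only the identity term survives.

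The key step is the forward equation $\frac{d\mathcal{U}_{t,s}}{dt}=-i\vect{H}(t)\mathcal{U}_{t,s}$. I would rewrite the $n$-th term of the series in nested (simplex) form, with the ordering $t\ge t_1\ge\cdots\ge t_n\ge s$ absorbing the time-ordering operator $\mathcal{T}$ and cancelling the factor $1/n!$. Differentiating the nested integral with respect to its upper limit $t$ evaluates the outermost integrand at $t_1=t$, pulling a factor $-i\vect{H}(t)$ out to the left; a shift of the summation index then reassembles exactly $-i\vect{H}(t)$ times the full series, giving the claim. The one genuine subtlety here is justifying term-by-term differentiation: for bounded $\vect{H}$ the series converges in operator norm uniformly on compact $t$-intervals, so this is legitimate, whereas for the unbounded differential-operator case one appeals to the regularity assumptions underlying the treatment in \cite{reed1975methods}. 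This is the step I expect to require the most care.

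With the forward equation in hand, the composition law $\mathcal{U}_{t,s}=\mathcal{U}_{t,s'}\mathcal{U}_{s',s}$ follows by a uniqueness argument: regarding both sides as functions of $t$ with $s,s'$ fixed, each solves the linear initial-value problem $\dot V=-i\vect{H}(t)V$, and at $t=s'$ both equal $\mathcal{U}_{s',s}$ (using $\mathcal{U}_{s',s'}=\mathbf{1}$), so uniqueness of solutions forces equality. For the adjoint relation $\mathcal{U}^{\dagger}_{t,s}=\mathcal{U}_{s,t}$ I would instead use the Trotter product form: taking the Hermitian conjugate reverses the order of the factors $e^{-i\vect{H}(t_k)\Delta t}$ and, since $\vect{H}=\vect{H}^{\dagger}$, turns each into $e^{+i\vect{H}(t_k)\Delta t}$, which is precisely the ordered product running from $t$ down to $s$, i.e. $\mathcal{U}_{s,t}$. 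Finally, the backward equation $\frac{d\mathcal{U}_{s,t}}{dt}=i\,\mathcal{U}_{s,t}\vect{H}(t)$ drops out by conjugating the forward equation, since $(-i\vect{H}(t)\mathcal{U}_{t,s})^{\dagger}=\mathcal{U}_{t,s}^{\dagger}\,i\vect{H}(t)=i\,\mathcal{U}_{s,t}\vect{H}(t)$; as a byproduct, combining the composition law with the adjoint relation at $s'=t$ gives $\mathcal{U}_{s,t}\mathcal{U}_{t,s}=\mathcal{U}_{s,s}=\mathbf{1}$, which simultaneously confirms unitarity.
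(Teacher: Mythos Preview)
Your proof outline is correct and covers all five identities with sound arguments: the simplex/nested form of the Dyson series for the forward equation, uniqueness of ODE solutions for the composition law, the Trotter product for the adjoint relation, and conjugation for the backward equation. The only point worth flagging is that the paper does not actually supply its own proof of this lemma---it merely cites \cite{reed1975methods} as the source of these standard facts---so there is no ``paper's proof'' to compare against. In that sense your write-up goes beyond what the paper provides, and the care you take with the convergence issue (bounded versus unbounded $\vect{H}$) is exactly the caveat that a reader would want to see made explicit.
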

However, solving the problem in Eq.~\eqref{eq:timeschr} through this time-ordering is complicated. The difficulty for this non-autonomous system is that one needs to evolve according to the time-dependent operator $\mathcal{T}\exp(-i\int_{0}^t\vect{H}(\tau)d\tau)$ in a chronological way. For instance, in quantum simulation, not only are time-ordered oracles necessary, but also at each time-interval, a {\it different} Hamiltonian $\vect{H}(t_i)$, corresponding to a different gate, is required. \\

In this section we instead propose a reformulation of the initial value problem in  Eq.~\eqref{eq:timeschr} to arrive at a dynamics that will only evolve according to a {\it time-independent} linear operator. In the case of Hamiltonian simulation for instance, this gives rise to a corresponding time-independent Hamiltonian. The idea is to introduce a new time variable $s$ so that the problem becomes a new PDE system defined in one higher dimension, now with {\it time-independent} coefficients! 

\begin{theorem} \label{thm:one}
For the non-autonomous system in Eq.\eqref{eq:timeschr}, we introduce the following initial-value problem of an  {\it autonomous} PDE
\begin{align}\label{w-eqn}
\begin{aligned}
&\frac{\partial \vect{w}}{\partial t}+ \frac{\partial \vect{w}}{\partial s} = -i\vect{H}(s) \vect{w}
\\
& \vect{w}(0,s)=G(s)\vect{y}_0, \qquad s \in \mathbb{R}.
\end{aligned}
\end{align}
The analytical solution to this problem is
\begin{equation}\label{w-solutiongeneral}
\vect{w}(t,s)=G(s-t)\mathcal{U}_{s, s-t}\vect{y}_0, \qquad \mathcal{U}_{s,s-t}=\mathcal{T}e^{-i\int_{s-t}^s \vect{H}(\tau)d \tau}=\mathcal{T}e^{-i \int_{0}^t \vect{H}(s-t+\tau)d\tau}.
\end{equation}
 When $G(s)=\delta(s)$, one can easily recover $\vect{y}(t)$ in Eq.~\eqref{eq:timeschr} from $\vect{w}(t,s)$ using 
\begin{align}
\label{eqn::yt}
 \vect{y}(t)=\int^{\infty}_{-\infty} \vect{w}(t,s)\, ds.   
\end{align}
Alternatively, when $G(s)=1$, $\vect{y}(t)$ can be recovered with $\vect{w}(t,s=t)=\vect{y}(t)$. 

\end{theorem}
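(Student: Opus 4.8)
The plan is to verify the claimed closed form \eqref{w-solutiongeneral} directly against the autonomous PDE \eqref{w-eqn}, and then to read off the two recovery formulas as essentially immediate corollaries. The cleanest route is the method of characteristics, which also makes transparent \emph{why} the dilation works; I would present that first and, if desired, corroborate it by a direct substitution.

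For the characteristic approach, I would observe that the left-hand side of \eqref{w-eqn} is the directional derivative along the lines $s-t=\text{const}$. Fixing $s_0 := s-t$ and setting $\vect{v}(t):=\vect{w}(t,s_0+t)$, the chain rule converts the PDE into the ODE $d\vect{v}/dt = -i\vect{H}(s_0+t)\vect{v}(t)$ with $\vect{v}(0)=\vect{w}(0,s_0)=G(s_0)\vect{y}_0$. This is exactly a time-dependent Hamiltonian evolution, whose propagator is $\mathcal{T}e^{-i\int_0^t \vect{H}(s_0+\tau)d\tau}$; after the substitution $\tau' = s_0+\tau = s-t+\tau$ this equals $\mathcal{U}_{s,s-t}$ as defined in \eqref{w-solutiongeneral}. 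Re-expressing in the original variables then gives $\vect{w}(t,s)=G(s-t)\mathcal{U}_{s,s-t}\vect{y}_0$, while the initial condition is recovered from $\mathcal{U}_{s,s}=\mathbf{1}$.

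As an alternative (or a cross-check), I would substitute \eqref{w-solutiongeneral} into \eqref{w-eqn} directly. The key computation is $\partial_t \mathcal{U}_{s,s-t}+\partial_s \mathcal{U}_{s,s-t}$. Treating $\mathcal{U}_{a,b}$ as a function of its upper and lower limits $a=s$, $b=s-t$, I would invoke the two differentiation identities from the preceding Lemma, $\partial_a \mathcal{U}_{a,b}=-i\vect{H}(a)\mathcal{U}_{a,b}$ and $\partial_b \mathcal{U}_{a,b}=i\,\mathcal{U}_{a,b}\vect{H}(b)$. The chain rule then yields $\partial_t \mathcal{U}_{s,s-t}=-i\,\mathcal{U}_{s,s-t}\vect{H}(s-t)$ and $\partial_s \mathcal{U}_{s,s-t}=-i\vect{H}(s)\mathcal{U}_{s,s-t}+i\,\mathcal{U}_{s,s-t}\vect{H}(s-t)$. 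The two $\vect{H}(s-t)$ boundary terms cancel, leaving $\partial_t \mathcal{U}_{s,s-t}+\partial_s \mathcal{U}_{s,s-t}=-i\vect{H}(s)\mathcal{U}_{s,s-t}$; since the $G'(s-t)$ contributions from $\partial_t$ and $\partial_s$ also cancel, one obtains $\partial_t \vect{w}+\partial_s \vect{w}=-i\vect{H}(s)\vect{w}$, as required.

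Finally, the recovery formulas follow at once. For $G(s)=\delta(s)$ we have $\vect{w}(t,s)=\delta(s-t)\mathcal{U}_{s,s-t}\vect{y}_0$, so the sifting property at $s=t$ gives $\int_{-\infty}^{\infty}\vect{w}(t,s)\,ds=\mathcal{U}_{t,0}\vect{y}_0=\vect{y}(t)$, using \eqref{ODE-solution}. For $G(s)=1$, evaluating at $s=t$ directly yields $\vect{w}(t,t)=\mathcal{U}_{t,0}\vect{y}_0=\vect{y}(t)$. The main point requiring care, and the only real obstacle, is the bookkeeping of the time-ordering: because $[\vect{H}(t),\vect{H}(t')]\neq 0$ in general one cannot differentiate the ordered exponential naively, and must rely on the Lemma's upper/lower-limit identities, which is precisely where the cancellation of the $\vect{H}(s-t)$ terms occurs. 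For the $G=\delta$ case one should additionally note that \eqref{w-solutiongeneral} is to be read distributionally, with continuity of $s\mapsto \mathcal{U}_{s,s-t}$ ensuring the sifting step is well-defined.
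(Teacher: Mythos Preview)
Your proof is correct. The direct-substitution part is essentially the paper's own argument, just organized differently: the paper splits the propagator via the cocycle identity $\mathcal{U}_{s,\sigma}=\mathcal{U}_{s,0}\mathcal{U}_{0,\sigma}$ and then differentiates the two factors, whereas you apply the chain rule straight to $(a,b)\mapsto\mathcal{U}_{a,b}$ using the Lemma's upper/lower-limit identities. The cancellations are the same ones (the $G'(s-t)$ pieces and the $\vect{H}(s-t)$ boundary terms), so there is no substantive difference here---if anything your version is slightly more economical since it does not invoke the cocycle property.

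What is genuinely different is your method-of-characteristics argument. The paper only \emph{verifies} \eqref{w-solutiongeneral} by plugging it in; your characteristic reduction to the ODE $d\vect{v}/dt=-i\vect{H}(s_0+t)\vect{v}$ along $s-t=s_0$ actually \emph{derives} the formula and explains why the extra variable linearises the problem: each characteristic carries its own time-shifted copy of the original non-autonomous evolution. That is a worthwhile addition. The recovery formulas for $G=\delta$ and $G=1$ are handled identically in both.
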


\begin{proof}[Proof of Theorem~\ref{thm:one}]

We will now prove that $\vect{w}(t,s)=G(\sigma)\mathcal{U}_{s,\sigma}\vect{y}_0$ solves Eq.~\eqref{w-eqn} using $\sigma=s-t$. The LHS of Eq.~\eqref{w-eqn} can be written as
\begin{align*}
   \frac{\partial \vect{w}}{\partial t}+\frac{\partial \vect{w}}{\partial s} &=\left(\frac{\partial G(\sigma)}{\partial t}+\frac{\partial G(\sigma)}{\partial s}\right)\mathcal{U}_{s,\sigma}\vect{y}_0+G(\sigma)\left(\frac{\partial \mathcal{U}_{s,\sigma}}{\partial t}+ \frac{\partial \mathcal{U}_{s,\sigma}}{\partial s}\right)\vect{y}_0 \\
   &=G(\sigma)\left(\mathcal{U}_{s,0}\frac{\partial \mathcal{U}_{0, \sigma}}{\partial t}+\frac{\partial \mathcal{U}_{s,0}}{\partial s}\mathcal{U}_{0, \sigma}+\mathcal{U}_{s,0}\frac{\partial \mathcal{U}_{0,\sigma}}{\partial s}\right)\vect{y}_0\\
   &=G(\sigma)\frac{\partial \mathcal{U}_{s,0}}{\partial s}\mathcal{U}_{0, \sigma}\vect{y}_0+G(\sigma)\mathcal{U}_{s,0}\left(\frac{\partial \mathcal{U}_{0,\sigma}}{\partial t}+\frac{\partial \mathcal{U}_{0,\sigma}}{\partial s}\right)\vect{y}_0 \\
   &=-i\vect{H}(s) \vect{w}+G(\sigma)\mathcal{U}_{s,0}\left(\frac{\partial \mathcal{U}_{0,\sigma}}{\partial t}+\frac{\partial \mathcal{U}_{0,\sigma}}{\partial s}\right)\vect{y}_0\\
 &=-i\vect{H}(s) \vect{w}.  
\end{align*}
Here in the first line the first term in brackets is zero, since letting $\sigma=s-t$, clearly $\partial G(\sigma)/\partial s=-\partial G(\sigma)/\partial t$. In the second term we used the expansion $\mathcal{U}_{s,\sigma}=\mathcal{U}_{s,0}\mathcal{U}_{0,\sigma}$. In the third line we used the definition for $\mathcal{U}_{s,0}$ which obeys Eq.~\eqref{eq:vequation} and $G(\sigma)\mathcal{U}_{s,0}\mathcal{U}_{0,\sigma}\vect{y}_0=\vect{w}(t,s)$. The second term in the third line goes to zero since $
    \partial \mathcal{U}_{0, \sigma}/{\partial s}=-\partial \mathcal{U}_{0, \sigma}/{\partial t}$.

When $G(s)=\delta(s)$, Theorem~\ref{thm:one} easily follows by integrating $\int ds\, \vect{w}(t,s)=\int ds \,\delta(s-t) \mathcal{U}_{s,s-t} \vect{y}_0=\mathcal{U}_{t,0}\vect{y}_0=\vect{y}(t)$. \\
When $G(s)=1$, $\vect{w}(t, s)=\mathcal{U}_{s, s-t}\vect{y}_0$, so $\vect{w}(t,s=t)=\mathcal{U}_{t, 0}\vect{y}_0=\vect{y}(t)$.
\end{proof}

\begin{remark}
    While Theorem~\ref{thm:one} holds for Eq.~\eqref{eq:timeschr} with unitary dynamics,  the same result can also hold for \textit{non-unitary dynamics} or $\vect{H}(t) \neq \vect{H}^{\dagger}(t)$ in the case when $[\vect{H}(t), \vect{H}(t')]=0$ for all $t, t'$. In this case no time-ordering is required in the solution, and $\mathcal{U}_{s, s-t}=\exp(-i\int_{s-t}^s \vect{H}(\tau) d\tau)=\mathcal{U}_{s, 0}\mathcal{U}_{0, s-t}$. Then it is clear that the same proof for Theorem~\ref{thm:one} still holds. However, in the more general case $[\vect{H}(t), \vect{H}(t')] \neq 0$ when $\vect{H}(t) \neq \vect{H}^{\dagger}(t)$, we can no longer write $\mathcal{U}_{s,s-t}=\mathcal{U}_{s,0}\mathcal{U}_{0,s-t}$ and the time-ordered Dyson series itself may not even converge. We leave further investigation on this topic to future work.
    \end{remark}

 \begin{remark} A conventional way to transform a non-autonomous system to an autonomous ones is to add a new variable representing time \cite{guckenheimer2013nonlinear,wiggins1991introduction}
\begin{align}
\begin{aligned}
 &\frac{d \vect{y}}{dt}=-i \vect{H}(\tau)\vect{y}(t)\\
 & \frac{d\tau}{dt}=1\\
 & \vect{y}(0)=\vect{y}_0, \quad \tau(0)=0.
\end{aligned}
\label{ODE-oldidea}
\end{align}
Note that even if the original system is linear, this new autonomous system becomes {\it nonlinear}. On the other hand, 
our new formulated  PDEs \eqref{w-eqn} {\it remains
linear}! Maintaining linearity is crucial for the application to the quantum simulation of non-autonomous PDEs, including applications to time-dependent Hamiltonian simulation. This method in Eq.~\eqref{ODE-oldidea} is actually more akin to the simulation method described in Appendix C.
\end{remark}

\begin{remark}
If the original system is already independent of time, then 
\begin{align*}
\vect{w}(t,s) = G(s-t) e^{-i \vect{h} t} \vect{y_0}.
\end{align*}
One can easily show that after tracing out the degree of freedom of $s$ (whether classically as in Eq.~\eqref{eqn::yt} or quantum mechanically as in Eq.~\eqref{eqn::gamma}), we can readily verify that e.g., $\vect{y}(t) = e^{-i \vect{h} t}\vect{y_0}$ for any distribution $G$. Such a consistency conclusion implies that adding such an extra degree of freedom won't affect a given time-independent Hamiltonian problem.
\end{remark}

For classical simulation, $G(s)=1$ appears like a good choice to recover $\vect{y}(t)$, but $G(s)=\delta(s)$ has its own merits. We will see later that for the quantum simulation protocol that the choice $G(s)=\delta(s)$ corresponds to more precise initial state preparation with no final measurements required, whereas $G(s)=1$ requires a precise final measurement step. \\

We note that in this case, while $t \in [0, T]$ where $T$ is some final time $T$, the \enquote{clock parameter} $s \in \mathbb{R}$ tracks the time. The clock parameter has a much larger range than $t \in [0,T]$ to give flexibility on where we want the clock to start. If we initialise $\vect{w}(0, s)$ using  $G(s)=\delta(s)$, this is equivalent to the clock starting at $s=0=t_0$, precisely. \\

When implementing the algorithms, one often needs to approximate the dirac $\delta$-function by a bounded and narrowly supported function $\delta_\omega$ which becomes zero or vanishingly small outside a domain of $O(\omega)$, for $\omega\ll 1$, namely, we consider
\begin{align}\label{y-omega}
    \vect{y}_{\omega}(t)=\int_{-\infty}^{\infty}\delta_{\omega}(s-t)\mathcal{U}_{s,s-t}\vect{y}_0 ds.
\end{align}
As conventionally done, we choose $\delta_\omega$ to be smooth and to satisfy, for $x\in\mathbb{R}^1$,
\begin{equation} \label{eq:deltaomega2}
\delta_\omega (x) = 0 \quad {\text {if} } \quad |x|> \omega; \quad
\int_{|x|\le \omega} \delta_\omega(x) \, dx =1.
\end{equation}
One usually  approximates $\delta_\omega$ by the form
\begin{equation} \label{eq:deltaomega}
\delta_\omega(x) =
\begin{cases} \frac{1}{\omega}\beta(x/\omega) \quad 
           & |x|\le \omega; \\
            0 \quad 
            &|x|> \omega,
\end{cases}
\end{equation}
where typical choices of $\beta(x)$ include
$\beta(x)=1-|\beta|$ and $\beta(x)=\frac{1}{2}(1+\cos(\pi x))$ \cite{EngTorn}. 
In a discrete-variable formulation, one can choose $\omega=mh$ where $m$ is the number of mesh points within the support of $\delta_\omega$, and $h$ is the grid size. 
For a continuous-variable formulation, one can choose $\delta_{\omega}$ to be a Gaussian function 
\begin{align*} 
    \delta_{\omega}(s-t)=\frac{1}{\sqrt{2\pi \omega^2}}e^{-(s-t)^2/(2\omega^2)},
\end{align*}
which can give good approximation to $\vect{y}(t)$ if the width $\omega$ is small. 

A standard analysis can easily show the following:
\begin{lemma} \label{lemma::error_classical}
Let $\vect{y}_{\omega}(t)$ be given by Eq.~\eqref{y-omega} and $\vect{y}(t)$ be given by Eq.~\eqref{ODE}. Assume $\|\vect{y}_0\|=1$, and the normalised states are expressed as $|y_{\omega}(t)\rangle=\vect{y}_{\omega}(t)/\|\vect{y}_{\omega}(t)\|$, $|y(t)\rangle=\vect{y}(t)/\|\vect{y}(t)\|=\vect{y}(t)$ where $\|.\|$ is the $l_2$-norm. Furthermore, 
Assume that 
\begin{enumerate}[label=(\roman*)]
\item $\delta_w(\cdot)$ is a probability distribution with mean $\mu = o(\omega)$ and second moment $\omega^2$ (presumably $\omega \ll 1$); 

\item the Hamiltonian $\vect{H}(t)$ is continuously differentiable with respect to $t$.
\end{enumerate}

Then, in the small $\omega$ limit, the quantum fidelity $\text{Fid} (. \, , .)$ between the ideal $|y(t)\rangle$ and the approximated state $|y_{\omega}(t)\rangle$ is
\begin{align*}
\text{Fid} \big(|y_{\omega}(t)\rangle, |y(t)\rangle \big) & := |\langle y_{\omega}(t)|y(t)\rangle|^2 = 1 - \omega^2 \mathsf{C}_{\text{R}} + \order{\omega^3},\\
\end{align*}
where $\mathsf{C}_{\text{R}}\ge 0$ is given by 
\begin{align*}
 \mathsf{C}_{\text{R}} = \expval{\vect{H}^2(t)}{y(t)}+\expval{\vect{H}^2(0)}{y(0)} - 2 \Re\Big(\bra{y(t)} \vect{H}(t) \mathcal{U}_{t,0} \vect{H}(0) \ket{y(0)}\Big).
\end{align*}
\end{lemma}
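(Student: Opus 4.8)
My plan is to compute the fidelity by Taylor-expanding the approximate state $\vect{y}_\omega(t)$ in the small-$\omega$ regime and tracking terms up to order $\omega^2$. The starting point is the integral representation in Eq.~\eqref{y-omega}, namely $\vect{y}_\omega(t)=\int \delta_\omega(s-t)\,\mathcal{U}_{s,s-t}\vect{y}_0\,ds$. After the change of variables $r=s-t$, this becomes $\int \delta_\omega(r)\,\mathcal{U}_{t+r,\,r}\vect{y}_0\,dr$, so the integrand is a smoothing of the propagator $\mathcal{U}_{t+r,r}$ against a narrow bump concentrated near $r=0$ with mean $\mu=o(\omega)$ and second moment $\omega^2$. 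The key is that $\mathcal{U}_{t,0}\vect{y}_0=\vect{y}(t)$ is the exact state, so I expand $\mathcal{U}_{t+r,r}\vect{y}_0$ in $r$ around $r=0$.

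First I would produce the derivatives of $r\mapsto \mathcal{U}_{t+r,r}$ at $r=0$, using the evolution equations in Eq.~\eqref{eq:vequation}. Writing $\mathcal{F}(r):=\mathcal{U}_{t+r,r}$ and differentiating via the product/composition structure $\mathcal{U}_{t+r,r}=\mathcal{U}_{t+r,0}\mathcal{U}_{0,r}$, the first derivative at $r=0$ picks up $-i\vect{H}(t)\mathcal{U}_{t,0}+i\,\mathcal{U}_{t,0}\vect{H}(0)$ (the first term from differentiating the upper limit, the second from the lower limit using $d\mathcal{U}_{s,t}/dt=i\,\mathcal{U}_{s,t}\vect{H}(t)$), where the continuous differentiability of $\vect{H}$ assumed in hypothesis (ii) guarantees these expressions are well-defined. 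Acting on $\vect{y}_0$ and collecting, the zeroth order gives $\vect{y}(t)$, the first order gives a vector I will call $\vect{y}^{(1)}$, and the second order $\vect{y}^{(2)}$. Integrating against $\delta_\omega$ then yields $\vect{y}_\omega(t)=\vect{y}(t)+\mu\,\vect{y}^{(1)}+\tfrac12\omega^2\vect{y}^{(2)}+\order{\omega^3}$; since $\mu=o(\omega)$ the linear term is absorbed into the remainder, leaving $\vect{y}_\omega(t)=\vect{y}(t)+\tfrac12\omega^2\vect{y}^{(2)}+\order{\omega^3}$.

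Next I would assemble the fidelity $\fid=|\langle y_\omega(t)|y(t)\rangle|^2$. Because $\vect{y}_\omega$ is generally unnormalised, I must expand both the overlap $\langle \vect{y}_\omega(t)|\vect{y}(t)\rangle$ and the norm $\|\vect{y}_\omega(t)\|^2$ to order $\omega^2$ and divide. Writing $\vect{y}_\omega=\vect{y}+\tfrac12\omega^2\vect{y}^{(2)}$, the overlap is $1+\tfrac12\omega^2\langle \vect{y}^{(2)}|\vect{y}\rangle$ and the squared norm is $1+\omega^2\Re\langle \vect{y}^{(2)}|\vect{y}\rangle$; combining, the real parts involving $\langle \vect{y}^{(2)}|\vect{y}\rangle$ largely cancel and the surviving contribution is governed by $\|\vect{y}^{(1)}\|^2$, which is exactly where the expectation-value structure of $\mathsf{C}_{\text{R}}$ emerges. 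Substituting $\vect{y}^{(1)}=(-i\vect{H}(t)\mathcal{U}_{t,0}+i\mathcal{U}_{t,0}\vect{H}(0))\vect{y}_0$ and computing $\|\vect{y}^{(1)}\|^2$ reproduces $\expval{\vect{H}^2(t)}{y(t)}+\expval{\vect{H}^2(0)}{y(0)}-2\Re(\bra{y(t)}\vect{H}(t)\mathcal{U}_{t,0}\vect{H}(0)\ket{y(0)})$, and its manifest form as a squared norm makes $\mathsf{C}_{\text{R}}\ge 0$ immediate.

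The main obstacle I anticipate is the careful bookkeeping in differentiating the two-parameter propagator $\mathcal{U}_{t+r,r}$ with respect to the single variable $r$ that appears in both slots, and in correctly matching which order-$\omega^2$ terms come from the genuine curvature $\vect{y}^{(2)}$ versus from the square of the first-order term $\vect{y}^{(1)}$. Getting the fidelity coefficient right hinges on the standard fact that for a near-normalised perturbed state the $\order{\omega^2}$ fidelity deficit is controlled by the component of the perturbation orthogonal to $\vect{y}(t)$, i.e.\ effectively by $\|\vect{y}^{(1)}\|^2$ minus $|\langle\vect{y}^{(1)}|\vect{y}\rangle|^2$; verifying that this orthogonalised quantity equals $\mathsf{C}_{\text{R}}$, rather than some unprojected version, is the delicate step. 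The continuous differentiability in (ii) and the moment conditions in (i) are precisely what is needed to justify exchanging the $r$-expansion with the $\delta_\omega$-integration and to discard the $\order{\omega^3}$ tail.
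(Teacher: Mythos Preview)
Your Taylor-expansion set-up and the derivative formulas for $r\mapsto\mathcal{U}_{t+r,r}$ via Eq.~\eqref{eq:vequation} match the paper exactly. The divergence --- and the gap in your plan --- is at the normalisation step. The paper does \emph{not} divide by $\|\vect{y}_\omega(t)\|$: it asserts that $\|\vect{y}_\omega(t)\|=1$ (arguing that $\vect{y}_\omega$ evolves unitarily from $\vect{y}_\omega(0)=\vect{y}_0$), and then simply computes the raw overlap $\langle\vect{y}(t)|\vect{y}_\omega(t)\rangle = 1-\tfrac12\omega^2\mathsf{C}_{\text{R}}+i\,\mathsf{C}_{\text{Im}}+\order{\omega^3}$ and squares its modulus to obtain $1-\omega^2\mathsf{C}_{\text{R}}$.

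If instead you carry out the division you propose, the $\omega^2$ coefficient \emph{cancels}. Since $\|\mathcal{U}_{t+r,r}\vect{y}_0\|\equiv 1$ for every $r$, differentiating this identity twice at $r=0$ gives $\Re\langle\vect{y}|\vect{y}^{(2)}\rangle=-\|\vect{y}^{(1)}\|^2=-\mathsf{C}_{\text{R}}$; hence both $|\langle\vect{y}|\vect{y}_\omega\rangle|^2$ and $\|\vect{y}_\omega\|^2$ equal $1-\omega^2\mathsf{C}_{\text{R}}+o(\omega^2)$, and their ratio is $1+o(\omega^2)$. Your anticipated mechanism --- that the surviving deficit is the orthogonal part $\|\vect{y}^{(1)}\|^2-|\langle\vect{y}^{(1)}|\vect{y}\rangle|^2$ of the first-order perturbation --- would require that perturbation to enter as $\omega\,\vect{y}^{(1)}$, but it only enters as $\mu\,\vect{y}^{(1)}=o(\omega)\,\vect{y}^{(1)}$ and so contributes $o(\omega^2)$; and even then that orthogonalised quantity is the constant $\mathsf{C}$ of Lemma~\ref{lemma::error_quantum}, not $\mathsf{C}_{\text{R}}$. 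To reproduce the lemma as stated you must follow the paper and take $|\langle\vect{y}(t)|\vect{y}_\omega(t)\rangle|^2$ directly, adopting its premise that $\|\vect{y}_\omega(t)\|=1$ rather than expanding and dividing.
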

\begin{proof}
    See Appendix~\ref{appendix::proof_classical}. 
\end{proof}

\begin{remark} The mean $\mu$ characterises the bias and the second moment characterises how dispersive the distribution is. 
Usually when one regularises the Dirac-delta function one uses even functions thus $\mu=0$, as those examples given above. We keep $\mu$ more general, since the preparation of quantum state may not guarantee that $\mu=0$ so instead here we assumer  $\mu \sim 0$.
\end{remark}

\begin{example}[Hamiltonian commuting at different times]
If $\vect{H}(t) = \lambda(t) \vect{h}$, where $\vect{h}$ is a time-independent Hamiltonian and $\lambda(t)$ is a polynomial. Then $\vect{y}(t)=\exp\big(-i\int_{0}^{t} ds \,\lambda(s) \vect{h}\big) \vect{y}_0$, and we can readily compute that 
\begin{align*}
\mathsf{C}_{\text{R}} = (\lambda(t) - \lambda(0))^2\expval{\vect{h}^2}{y_0}.
\end{align*}
Clearly, the approximation error arising from $\omega>0$ mainly depends on the Hamiltonian at the initial and the final times.
\end{example}

\section{Quantum simulation of a linear non-autonomous system using a time-independent Hamiltonian} \label{sec:quantumsimintro}
We distinguish two kinds of non-autonomous dynamical systems. One obeys unitary dynamics 
\begin{align} \label{eq:timeschr2}
\begin{aligned}
    & \frac{d\vect{y}(t)}{dt}=-i\vect{H}(t)\vect{u}(t), \qquad \vect{H}(t)=\vect{H}^{\dagger}(t), \\
    & \vect{y}(0)=\vect{y}_0.
 \end{aligned}
\end{align}
and can describe closed quantum systems. The second more general dynamical system does not obey unitary dynamics
\begin{align} \label{eq:nonunitary}
\begin{aligned}
    & \frac{d\vect{u}(t)}{dt}=-i\vect{A}(t)\vect{u}(t), \qquad \vect{A}(t) \neq \vect{A}^{\dagger}(t),  \\
    & \vect{u}(0)=\vect{u}_0.
  \end{aligned}
\end{align}
For systems obeying unitary dynamics, we only need to apply the formalism in Section~\ref{sec:nonautotransformation} for transforming a non-autonomous system into an autonomous system with unitary dynamics. For systems obeying non-unitary dynamics, the Sch\"odingerisation procedure is required to turn Eq.~\eqref{eq:nonunitary} into Eq.~\eqref{eq:timeschr2}. We will look at the quantum simulation of these cases in the following. 

\subsection{Non-autonomous unitary dynamics}
We are given a system $\vect{y}(t)$ evolving under unitary dynamics in Eq.~\eqref{eq:timeschr2}, so no Schr\"odingerisation is required. 
To apply the formulation in Theorem~\ref{thm:one},  we first define $\vect{\bar{w}}(t)\equiv \int ds\, \vect{w}(t,s)|s\rangle$ and our aim is to first prepare the quantum state $|\bar{w}(t)\rangle=\vect{\bar{w}}(t)/\|\vect{\bar{w}}(t)\|$ using a \textit{time-independent} Hamiltonian $\vect{\bar{H}}$ and from this state to obtain $|y(t)\rangle \equiv \vect{y}(t)/\|\vect{y}(t)\|$. First we introduce the position operator $\hat{s}$ and the momentum operator $\hat{p}_s$ where $[\hat{s}, \hat{p}_s]=i\mathbf{1}_s$.  Using the standard procedure \cite{2023analog} we can make the replacement $s \rightarrow \hat{s}$ and $\partial/\partial s \rightarrow i\hat{p}_s$. Then we can rewrite the linear autonomous system Eq.~\eqref{w-eqn} in Theorem~\ref{thm:one} as 
\begin{align} \label{eq:wbaroriginal1}
    \frac{d \vect{\bar{w}}}{dt}=-i\vect{\bar{H}}\vect{\bar{w}}, \qquad \vect{\bar{H}}=\mathbf{1}\otimes\hat{p}_s + \vect{H}(\hat{s})=\vect{\bar{H}}^{\dagger}, \qquad \vect{\bar{w}}(0)=\vect{y}_0 \int  ds\, G(s) |s\rangle, 
\end{align}
where the Hamiltonian $\vect{\bar{H}}$ is \textit{time-independent}. In fact, we can even generalise beyond the pure initial state $|\bar{w}(0)\rangle$ to a more general, possibly mixed,  initial state $\sigma=|y_0\rangle \langle y_0| \otimes \rho_0$, which we will prove in Theorem~\ref{thm:two} below. We note that even though in Theorem~\ref{thm:two} (and later Theorem~\ref{thm:three}) $s$ is continuous, we can also discretise $s$ and we will see this in Sections~\ref{sec:qubitformalism} and~\ref{sec:hybridformalism}.

\begin{theorem} \label{thm:two} 
    Given the solution in Eq.~\eqref{ODE-solution} to the linear non-autonomous dynamical system in Eq.~\eqref{eq:timeschr2} with initial condition $\vect{y}_0$.
    Then $|y(t)\rangle$
can be simulated via unitary evolution with respect to the \textit{time-independent} Hamiltonian $\vect{\bar{H}}$ in the following way. We define the quantum state $\sigma(t)$ evolving according to
\begin{equation} 
\label{eq:sigmatseven}
\begin{aligned}
    & \frac{d \sigma(t)}{dt}=-i [\vect{\bar{H}}, \sigma(t)], \qquad  \vect{\bar{H}}=\mathbf{1}\otimes \hat{p}_s+\vect{H}(\hat{s}), \\
    &\sigma(0)=|y_0\rangle \langle y_0|\otimes \rho_0,\qquad \rho_0=\iint ds\, ds'\, g(s, s')|s\rangle \langle s'|, \qquad \int ds\, g(s,s)=1,
\end{aligned}
\end{equation}
   where $\rho_0=\rho^{\dagger}_0$ and $\rho_0$ is also positive semidefinite. With the choice of $\rho_0$ where $g(s, s)=\delta(s)$, then $|y(t)\rangle\langle y(t)|=\text{Tr}_s(\sigma(t))$ where the trace is over the $|s\rangle$ mode. Alternatively, with the choice of measuring $\sigma(t)$ in the mode $|s=t\rangle$, then $|y(t)\rangle\langle y(t)|=\text{Tr}((\mathbf{1} \otimes |s=t\rangle \langle s=t|) \sigma(t))/g(0,0)$, which means retrieving $|y(t)\rangle$ with probability $g(0,0)$.
\end{theorem}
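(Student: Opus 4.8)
The plan is to exploit that $\vect{\bar{H}}$ in Eq.~\eqref{eq:sigmatseven} is \emph{time-independent}, so the von Neumann equation integrates immediately to $\sigma(t)=e^{-i\vect{\bar{H}}t}\sigma(0)e^{i\vect{\bar{H}}t}$, and the whole problem collapses to understanding how the single propagator $e^{-i\vect{\bar{H}}t}$ acts. First I would pin down this action on the (improper) product states $|\phi\rangle\otimes|s_0\rangle$, where $|s_0\rangle$ is a position eigenstate of $\hat{s}$. For this I invoke Theorem~\ref{thm:one}: the equation $d\vect{\bar{w}}/dt=-i\vect{\bar{H}}\vect{\bar{w}}$ in Eq.~\eqref{eq:wbaroriginal1} is exactly the operator form of the transport PDE, so taking $G(s)=\delta(s-s_0)$ in Eq.~\eqref{w-solutiongeneral} and forming $\vect{\bar{w}}(t)=\int ds\,\vect{w}(t,s)|s\rangle$ yields
\[
e^{-i\vect{\bar{H}}t}\big(|\phi\rangle\otimes|s_0\rangle\big)=\big(\mathcal{U}_{s_0+t,\,s_0}|\phi\rangle\big)\otimes|s_0+t\rangle.
\]
This is the workhorse identity: the propagator advances the clock label by $t$ and dresses the system register with the appropriate segment $\mathcal{U}_{s_0+t,s_0}$ of the time-ordered evolution.

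Next I would expand the initial state as $\sigma(0)=\iint ds\,ds'\,g(s,s')\,(|y_0\rangle\otimes|s\rangle)(\langle y_0|\otimes\langle s'|)$ and propagate term by term using the identity above together with its adjoint $(\langle\phi|\otimes\langle s'|)e^{i\vect{\bar{H}}t}=(\langle\phi|\mathcal{U}^{\dagger}_{s'+t,s'})\otimes\langle s'+t|$, where $\mathcal{U}^{\dagger}_{s'+t,s'}=\mathcal{U}_{s',s'+t}$ by Eq.~\eqref{eq:vequation2}. This gives
\[
\sigma(t)=\iint ds\,ds'\,g(s,s')\,\big(\mathcal{U}_{s+t,s}|y_0\rangle\langle y_0|\mathcal{U}^{\dagger}_{s'+t,s'}\big)\otimes|s+t\rangle\langle s'+t|.
\]
For the first claim I would trace out the clock mode; the overlap $\langle s'+t|s+t\rangle=\delta(s-s')$ collapses the double integral onto its diagonal, leaving $\int ds\,g(s,s)\,\mathcal{U}_{s+t,s}|y_0\rangle\langle y_0|\mathcal{U}^{\dagger}_{s+t,s}$, and inserting $g(s,s)=\delta(s)$ together with $\mathcal{U}_{t,0}\vect{y}_0=\vect{y}(t)$ from Eq.~\eqref{ODE-solution} yields $\text{Tr}_s(\sigma(t))=|y(t)\rangle\langle y(t)|$. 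For the second claim I would instead apply the projector $\mathbf{1}\otimes|s=t\rangle\langle s=t|$ before tracing the clock mode; now the overlaps $\langle t|s+t\rangle\langle s'+t|t\rangle=\delta(s)\delta(s')$ localize both integrals at $s=s'=0$, producing $g(0,0)\,\mathcal{U}_{t,0}|y_0\rangle\langle y_0|\mathcal{U}^{\dagger}_{t,0}=g(0,0)|y(t)\rangle\langle y(t)|$, so dividing by $g(0,0)$ recovers the state, while the trace of the unnormalized post-measurement operator identifies $g(0,0)$ as the success probability.

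The main obstacle is handling the continuous-variable subtleties rigorously: the workhorse identity is extracted from distributional ($\delta$-function) initial data, and the position eigenstates $|s\rangle$ are non-normalizable, so the manipulations $\langle s|s'\rangle=\delta(s-s')$ and $g(s,s)=\delta(s)$ are formal. I expect these to be legitimized in the standard CV sense, or made rigorous by passing to the regularized $\delta_\omega$ of Eq.~\eqref{eq:deltaomega2} and taking $\omega\to0$ as in Lemma~\ref{lemma::error_classical}; the algebraic skeleton above is otherwise routine. A secondary point worth verifying is that the Hermiticity, positivity, and unit-trace condition $\int ds\,g(s,s)=1$ imposed on $\rho_0$ guarantee that $\sigma(t)$ remains a valid density operator throughout the unitary evolution.
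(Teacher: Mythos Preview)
Your proposal is correct and follows essentially the same route as the paper: both compute $\sigma(t)=e^{-i\vect{\bar{H}}t}\sigma(0)e^{i\vect{\bar{H}}t}$ by invoking Theorem~\ref{thm:one} to identify the action of $e^{-i\vect{\bar{H}}t}$, arrive at the same integral expression for $\sigma(t)$ (your formula and the paper's differ only by the shift $s\mapsto s+t$ in the integration variables), and then trace out or project onto $|s=t\rangle$. Your ``workhorse identity'' on position eigenstates is just the paper's computation specialized to $G(s)=\delta(s-s_0)$, and your caveat about the distributional nature of the CV manipulations is apt and mirrors the paper's later regularization via $\delta_\omega$.
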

\begin{proof}
    For simplicity we can assume $\|\vect{y}_0\|=1=\|\vect{u}_0\|$ without losing generality. Using Theorem~\ref{thm:one}, we can write the solution of Eq.~\eqref{eq:wbaroriginal1} as
    \begin{align*}
        \vect{\bar{w}}(t)=e^{-i\vect{\bar{H}}t}\vect{\bar{w}}(0)=e^{-i\vect{\bar{H}}t}\vect{y}_0\int ds\, G(s)|s\rangle=\int ds\, \mathcal{U}_{s, s-t} \vect{y}_0 G(s-t)|s\rangle.
    \end{align*}
    Similarly when $G(s) \in \mathbb{R}$
    \begin{align*}
        \vect{y}^{\dagger}_0\int ds\, \langle s|G(s) e^{i\vect{\bar{H}}t}=\int \vect{y}^{\dagger}_0 \mathcal{U}^{\dagger}_{s,s-t}G(s-t)\langle s|. 
    \end{align*}
    We can apply this to solving Eq.~\eqref{eq:sigmatseven}
\begin{align*}
    \sigma(t)=e^{-i\vect{\bar{H}}t}\sigma(0) e^{i\vect{\bar{H}}t}, \qquad \sigma(0)=\vect{y}_0 \vect{y}_0^T\otimes \rho_0=\vect{y}_0 \vect{y}_0^T \iint ds ds'g(s,s')|s\rangle \langle s'|,
\end{align*}
where we note that $g(s,s) \in \mathbb{R}$ since $\rho_0=\rho^{\dagger}_0$. 
 Thus
 \begin{align*}
    & \sigma(t)=e^{-i\vect{\bar{H}}t}\sigma(0) e^{i\vect{\bar{H}}t}=e^{-i\vect{\bar{H}}t}\left(\vect{y}_0\vect{y}^{\dagger}_0 \iint ds\, ds'\, g(s,s')|s\rangle \langle s'| \right)e^{i\vect{\bar{H}}t} \\
     &=\left(\iint ds\, ds'\, \mathcal{U}_{s,s-t}\vect{y}_0 \vect{y}^{\dagger}_0 g(s-t, s')|s\rangle \langle s'| \right)e^{i\vect{\bar{H}}t}=\iint ds \,ds'\, g(s-t, s'-t) \mathcal{U}_{s, s-t} \vect{y}_0 \vect{y}^{\dagger}_0  \mathcal{U}^{\dagger}_{s', s'-t} |s\rangle\langle s'|.
 \end{align*}
 For the first protocol of tracing out the $|s\rangle$ mode in $\sigma(t)$, we have
 \begin{equation}
 \label{eqn::gamma}
    \gamma(t)=\text{Tr}_s(\sigma(t))=\int ds \,g(s-t, s-t)\mathcal{U}_{s, s-t}\vect{y}_0\vect{y}^T_0\mathcal{U}^{\dagger}_{s, s-t} .
 \end{equation}
 In the case $g(s-t, s-t)=\delta(s-t)$, then clearly $\gamma(t)=\mathcal{U}_{t,0}\vect{y}_0\vect{y}^{\dagger}_0\mathcal{U}^{\dagger}_{t,0}=|y(t)\rangle \langle y(t)|$. This is the simplest protocol, since not measurements are required at the end, and we simply just need to throw away the final clock register. \\

 For an alternative protocol of measuring $\sigma(t)$ in the state $|s=t\rangle$, then
 \begin{align*}
     \text{Tr}\big((\mathbf{1}\otimes |s=t\rangle\langle s=t|)\sigma(t)\big)=g(0,0)\mathcal{U}_{t,0}\vect{y}_0\vect{y}^{\dagger}_0\mathcal{U}^{\dagger}_{t,0}=g(0,0)|y(t)\rangle \langle y(t)|.
 \end{align*}
 
\end{proof}

\subsection{Non-autonomous non-unitary dynamics}
For a general linear dynamical system
\begin{equation} \label{eq:uequation}
    \frac{d \vect{u}}{d t}=-i\vect{A}(t)\vect{u}, \qquad \vect{A}(t) \neq \vect{A}^{\dagger}(t) , 
\end{equation}
so we use the following decomposition
\begin{equation*}
\vect{A}(t)=\vect{A}_1(t)-i \vect{A}_2(t), \qquad {\text {where}} \quad \vect{A}_1(t)=\frac{1}{2}(\vect{A}(t)+\vect{A}^{\dagger}(t))=\vect{A}^{\dagger}_1(t), \qquad \vect{A}_2(t)=\frac{i}{2}(\vect{A}(t)-\vect{A}^{\dagger}(t))=\vect{A}^{\dagger}_2(t).
\end{equation*}
 To turn this system into Eq.~\eqref{eq:timeschr2}, we can apply Schr\"odingerisation \cite{jin2212quantum, jin2022quantum, 2023analog}.  The first step  is the application of a warped phase transformation to $\vect{u}$ by defining a new parameter $\xi \in \mathbb{R}$, where for $\xi>0$, and let $\vect{v}(t, \xi)=\exp(-\xi)\vect{u}(t)$. We then  extend evenly to $\xi<0$, see details in \cite{jin2022quantum}. Then Eq.~\eqref{eq:uequation} becomes
\begin{align*}
    \frac{\partial \vect{v}}{\partial t}=\vect{A}_2(t)\frac{\partial \vect{v}}{\partial \xi}-i \vect{A}_1(t) \vect{v}, \qquad \vect{v}(0, \xi)=e^{-|\xi|}\vect{u}_0.
\end{align*}
Taking the Fourier transform of $\vect{v}(t)$ with respect to $\xi$, $\vect{v} \rightarrow \tilde{\vect{v}}$, gives

\begin{align}
    \frac{\partial \tilde{\vect{v}}}{\partial t}=-i( \eta \vect{A}_2(t)+\vect{A}_1(t)) \tilde{\vect{v}}=-i\vect{H}_{\eta}(t)\tilde{\vect{v}} ,
\end{align}
where $\eta \in (-\infty, \infty)$ is the Fourier mode. Now the new Hamiltonian of the system is $\vect{H}_{\eta}(t)=\eta \vect{A}_2(t)+\vect{A}_1(t)=\vect{H}_{\eta}^{\dagger}(t)$. The only difference is that now this is a system of Schr\"odinger type equations, one for each $\eta$. It can be represented  by augmenting $\tilde{\vect{v}}$ by a single continuous-variable mode $\vect{\eta}$, and $\hat{\eta}$ acts on $\vect{\eta}$ by $\hat{\eta} \vect{\eta}=\eta \vect{\eta}$. Note that we have a corresponding conjugate operator $\hat{\xi}$ where $[\hat{\xi}, \hat{\eta}]=iI_{\eta}$ and $\hat{\xi}$ acts on mode $\vect{\xi}$ by $\hat{\xi}\vect{\xi}=\xi \vect{\xi}$. Since $\eta, \xi$ are conjugate variables, we can write $\langle \eta|\xi\rangle=\exp(-i\xi \eta)$. Defining $\vect{y}(t)=\int \tilde{\vect{v}}(t, \eta)|\eta\rangle d\eta$, we then arrive at Eq.~\eqref{eq:timeschr} 
\begin{align} \label{eq:y}
\begin{aligned}
    & \frac{d \vect{y}}{d t}=-i \vect{H}(t)\vect{y}, \qquad \vect{H}(t)=\hat{\eta} \otimes \vect{A}_2(t)+I_{\eta} \otimes \vect{A}_1(t), \qquad \vect{H}(t)=\vect{H}^{\dagger}(t), \\
    & \vect{y}_0=|\Xi\rangle \vect{u}_0, \qquad |\Xi\rangle=\int \frac{2}{1+\eta^2}|\eta\rangle d\eta=\int e^{-|\xi|}|\xi\rangle d\xi,
    \end{aligned}
\end{align}
where $\vect{H}(t)$ is the time-dependent Hamiltonian and $I_{\eta}$ is the identity operation acting on the ancillary $\vect{\eta}$ mode. To recover $\vect{u}(t)$ from $\vect{y}(t)$, we just need to apply an inverse Fourier transform to $\vect{y}(t)$ with respect to $\eta$ to get $\vect{v}(t, \xi)$ and then to recover $\vect{u}(t)$ by
\begin{align*}
    \vect{u}(t)=\int_{0}^{\infty} \vect{v}(t, \xi) d\xi.
\end{align*}

However, Eq.~\eqref{eq:y} is a non-autonomous system, thus it may not be convenient in all formalisms to recover $|y(t)\rangle$ using the time-dependent $\vect{H}(t)$. Since Eq.~\eqref{eq:y} is now in the form of Eq.~\eqref{eq:timeschr2}, we can apply Theorem~\ref{thm:two} to turn Eq.~\eqref{eq:y} into an autonomous system with unitary dynamics and recover $|y(t)\rangle$ using a time-independent Hamiltonian $\vect{\bar{H}}$. Then the state $|u(t)\rangle \equiv \vect{u}(t)/\|\vect{u}(t)\|$ can be recovered from $|y(t)\rangle$ through a measurement procedure. See Figure~\ref{fig:perfect} for a schematic diagram. 

\begin{theorem} \label{thm:three}
    The linear non-autonomous dynamical system in Eq. \eqref{eq:nonunitary}
has a corresponding non-autonomous system $\vect{y}(t)$ that evolves with respect to the unitary $\mathcal{U}_{t,0}=\mathcal{T}\exp(-i\int_0^t\vect{H}(\tau) d\tau)$ and obeys Eq.~\eqref{eq:y} with initial condition $\vect{y}_0=|\Xi\rangle \vect{u}_0$, where $|\Xi\rangle=\int \exp(-|\xi|)|\xi\rangle d\xi$. Then $|u(t)\rangle$
can be simulated via unitary evolution with respect to the \textit{time-independent} Hamiltonian $\vect{\bar{H}}$ in the following way. We define the quantum state $\sigma(t)$ evolving according to
\begin{align} \label{eq:sigmat}
\begin{aligned}
    & \frac{d \sigma(t)}{dt}=-i [\vect{\bar{H}}, \sigma(t)], \qquad \vect{\bar{H}}=I_{\eta}\otimes \mathbf{1}\otimes \hat{p}_s+\vect{H}(\hat{s})=I_{\eta}\otimes \mathbf{1}\otimes \hat{p}_s +\hat{\eta}\otimes \vect{A}_2(\hat{s})+I_{\eta}\otimes \vect{A}_1(\hat{s}),\\ 
    &\sigma(0)=|y_0\rangle \langle y_0|\otimes \rho_0, \qquad |y_0\rangle=\frac{\vect{y}_0}{\|\vect{y}_0\|},\qquad \rho_0=\iint ds ds' g(s, s')|s\rangle \langle s'|, \qquad \iint ds g(s,s)=1,
\end{aligned}
\end{align}
where $\rho_0=\rho^{\dagger}_0$ and $\rho_0$ is also positive semidefinite.
   With the choice of $\rho_0$ where $g(s, s)=\delta(s)$, then $|y(t)\rangle\langle y(t)|=\text{Tr}_s(\sigma(t))$ where the trace is over the $|s\rangle$ mode. Alternatively, with the choice of measuring $\sigma(t)$ in the mode $|s=t\rangle$, then $|y(t)\rangle\langle y(t)|=\text{Tr}((I_{\eta} \otimes \mathbf{1} \otimes |s=t\rangle \langle s=t|) \sigma(t))/g(0,0)$, which means retrieving $|y(t)\rangle$ with probability $g(0,0)$. Given any measurement $\hat{P}_{>0}=\int_0^{\infty}f(\xi)|\xi\rangle \xi|d\xi$, then $|u(t)\rangle$ can be retrieved from $|y(t)\rangle$ using $\hat{P}_{>0}|y(t)\rangle \propto |u(t)\rangle$ with success probability $O(\int_0^{\infty} f(\xi) e^{-\xi} d \xi(\|\vect{u}(t)\|/\|\vect{u}_0\|)^2)$.
\end{theorem}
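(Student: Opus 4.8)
The proof chains together the two reductions already assembled in this section and closes with a projective recovery step, so the plan is threefold. \textbf{(i)} Use the Schr\"odingerisation of Eq.~\eqref{eq:nonunitary} into Eq.~\eqref{eq:y} to replace the non-unitary system by the non-autonomous \emph{unitary} system $\vect{y}(t)$ with generator $\vect{H}(t)=\hat{\eta}\otimes\vect{A}_2(t)+I_\eta\otimes\vect{A}_1(t)$; this reduction is established in the text preceding the theorem, and its Hermiticity, $\vect{H}(t)=\vect{H}^{\dagger}(t)$, follows from $\vect{A}_1=\vect{A}_1^{\dagger}$, $\vect{A}_2=\vect{A}_2^{\dagger}$ and $\hat{\eta}=\hat{\eta}^{\dagger}$. \textbf{(ii)} Apply Theorem~\ref{thm:two} to this unitary system to recover $|y(t)\rangle$ from the time-independent Hamiltonian $\vect{\bar{H}}$. \textbf{(iii)} Recover $|u(t)\rangle$ from $|y(t)\rangle$ through the measurement $\hat{P}_{>0}$.

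For step (ii) I would observe that Theorem~\ref{thm:two} applies essentially verbatim. The only structural change is the presence of the extra ancilla mode $\vect{\eta}$: since $\vect{\bar{H}}=I_{\eta}\otimes\mathbf{1}\otimes\hat{p}_s+\vect{H}(\hat{s})$ carries $\hat{\eta}$ only inside $\vect{H}(\hat{s})$ while the leading factor $I_{\eta}$ is an inert spectator, the entire computation of $\sigma(t)=e^{-i\vect{\bar{H}}t}\sigma(0)e^{i\vect{\bar{H}}t}$ in the proof of Theorem~\ref{thm:two} is unchanged, yielding the trace formula of Eq.~\eqref{eqn::gamma}, $\gamma(t)=\text{Tr}_s(\sigma(t))=\int ds\,g(s-t,s-t)\,\mathcal{U}_{s,s-t}\vect{y}_0\vect{y}_0^{\dagger}\mathcal{U}^{\dagger}_{s,s-t}$. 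Taking $g(s,s)=\delta(s)$ collapses the integral to $\mathcal{U}_{t,0}\vect{y}_0\vect{y}_0^{\dagger}\mathcal{U}^{\dagger}_{t,0}=|y(t)\rangle\langle y(t)|$, and the alternative projection onto $|s=t\rangle$ gives the same state up to the factor $g(0,0)$, exactly as before. The single-mode $\hat{p}_s$ argument lifts trivially to the larger Hilbert space, so step (ii) requires no new work.

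The genuinely new content is step (iii). Here I would pass to the $\xi$-representation by undoing the Fourier transform in $\eta$, so that $\vect{y}(t)\leftrightarrow\int\vect{v}(t,\xi)|\xi\rangle\,d\xi$, and invoke the defining Schr\"odingerisation relation $\vect{v}(t,\xi)=e^{-\xi}\vect{u}(t)$ valid on the recovery region $\xi>0$. Applying $\hat{P}_{>0}=\int_0^{\infty}f(\xi)|\xi\rangle\langle\xi|\,d\xi$ then gives $\hat{P}_{>0}\vect{y}(t)=\vect{u}(t)\otimes\int_0^{\infty}f(\xi)e^{-\xi}|\xi\rangle\,d\xi$, which factorises as a fixed state of the $\xi$-mode tensored with $\vect{u}(t)$; hence $\hat{P}_{>0}|y(t)\rangle\propto|u(t)\rangle$, as claimed. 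The success probability is then read off from $\|\hat{P}_{>0}\vect{y}(t)\|^2/\|\vect{y}(t)\|^2$: the numerator produces $\|\vect{u}(t)\|^2$ times an integral of the filter $f$ weighted by the recovery envelope, while unitarity gives $\|\vect{y}(t)\|=\|\vect{y}_0\|=\|\Xi\|\,\|\vect{u}_0\|=\|\vect{u}_0\|$ since $\|\Xi\|^2=\int e^{-2|\xi|}d\xi=1$. This reproduces the stated scaling $O\!\big(\int_0^{\infty}f(\xi)e^{-\xi}\,d\xi\,(\|\vect{u}(t)\|/\|\vect{u}_0\|)^2\big)$.

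The main obstacle is making step (iii) rigorous rather than formal. The clean identity $\vect{v}(t,\xi)=e^{-\xi}\vect{u}(t)$ holds only for $\xi$ beyond a threshold fixed by the spectrum of $\vect{A}_1(t)$, and is contaminated near $\xi=0$ by the even extension; so the careful part is to argue that $\hat{P}_{>0}$ is supported on the valid region (or to bound the error from the contaminated one). The remaining delicacy is tracking the factor $\|\vect{u}(t)\|/\|\vect{u}_0\|$, which encodes the net non-unitary growth or decay and can make the success probability small, as it propagates through the normalisation; the use of an $O(\cdot)$ estimate rather than an equality absorbs precisely this recovery-region error.
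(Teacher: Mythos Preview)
Your proposal is correct and follows essentially the same three-step route as the paper's own proof: Schr\"odingerisation to pass from Eq.~\eqref{eq:nonunitary} to the unitary system Eq.~\eqref{eq:y}, then a direct invocation of Theorem~\ref{thm:two} on the enlarged Hilbert space to recover $|y(t)\rangle$, and finally the projective measurement $\hat{P}_{>0}$ to extract $|u(t)\rangle$. If anything, you supply more detail than the paper does in step~(iii)---the paper simply defers the success-probability computation and the recovery-region caveat to \cite{2023analog,jin2022quantum}, whereas you carry out the $\xi$-mode factorisation and the norm bookkeeping explicitly.
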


\begin{proof}
For simplicity we can assume $\|\vect{y}_0\|=1=\|\vect{u}_0\|$ without losing generality. The first step associating $\vect{u}(t)$ in Eq.~\eqref{eq:nonunitary} with $\vect{y}(t)$ obeying Eq.~\eqref{eq:y} is just the Schr\"odingerisation procedure, where $\vect{y}(t)$ evolves with respect to the time-dependent Hamiltonian $\vect{H}(t)=\hat{\eta}\otimes \vect{A}_2(t)+I_{\eta}\otimes \vect{A}_1(t)$ where the corresponding unitary evolution is denoted $\mathcal{U}_{t,0}$. \\

With $\vect{H}(t)$ given in Eq.~\eqref{eq:sigmat},  $\vect{H}(\hat{s})=\hat{\eta}\otimes \vect{A}_2(\hat{s})+I_{\eta}\otimes \vect{A}_1(\hat{s})$. Therefore, applying Theorem~\ref{thm:two}, we have the time-independent Hamiltonian as 
\begin{align*}
    \vect{\bar{H}}=I_{\eta}\otimes \mathbf{1} \otimes \hat{p}_s+\vect{H}(\hat{s})=I_{\eta}\otimes \mathbf{1} \otimes \hat{p}_s+\hat{\eta}\otimes \vect{A}_2(\hat{s})+I_{\eta}\otimes \vect{A}_1(\hat{s}).
\end{align*}
To obtain $|y(t)\rangle$ from $\sigma(t)$, the proof is identical to Theorem~\ref{thm:two}. \\

To retrieve $|u(t)\rangle$ from $|y(t)\rangle$, a simple measurement procedure is sufficient, outlined in more detail in \cite{2023analog, jin2022quantum}. This is a simple projective measurement $\hat{P}=\int_0^{\infty} d\xi |\xi\rangle \langle \xi|$. This can be shown to be equivalent to accepting the remaining state, which is $|u(t)\rangle$, if we postselect on $\xi>0$. We can also generalise this to imperfect projective measurements in the $|\xi\rangle$ basis and define $\hat{P}_{\text{imp}}=\int_0^{\infty} f(\xi)|\xi\rangle \langle \xi|$ where $f(\xi)$ models the imperfection in the detector. For instance, it could be a top-hat function or a Gaussian function  with the width or standard deviation denoting the precision of the detector. In this case, after measurement, the probability of retrieving $|u(t)\rangle$ is now $(\int_0^{\infty} f(\xi)e^{-\xi} d\xi \|\vect{u}(t)\|/\|\vect{u}_0\|)^2$ \cite{2023analog}. 
\end{proof}
\begin{figure}[ht] 
\centering
\includegraphics[width=12cm]{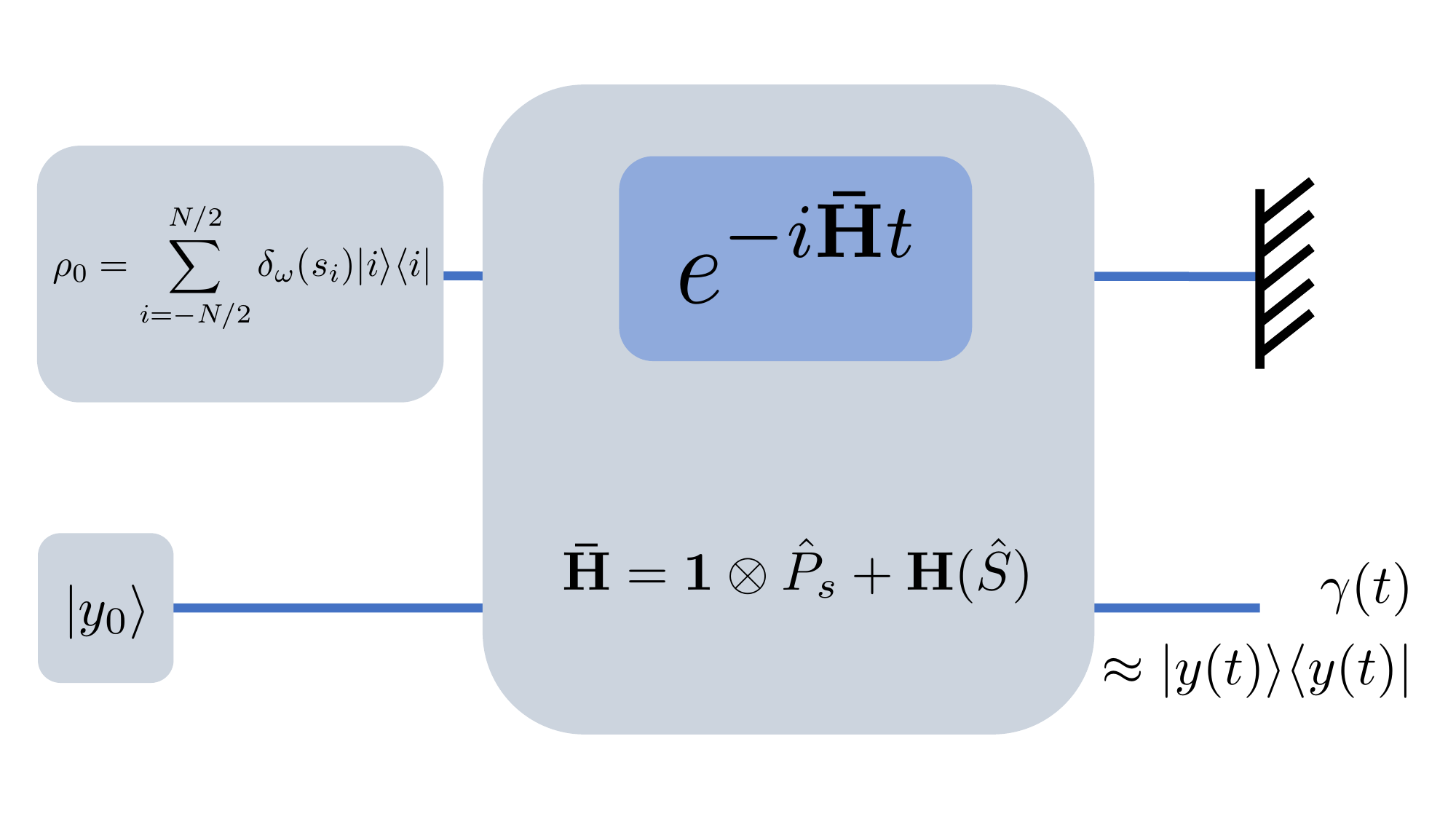} 
\caption{\textit{Time-independent Hamiltonian simulation for non-autonomous quantum dynamics in qubit or hybrid setting}. Here we use the simpler first protocol in Theorem~\ref{thm:two}, where we have a time-dependent Hamiltonian $\vect{H}(t)$. This is in principle suitable for fully continuous-variable systems, but for illustration we assume that we use $\log_2(N)$ qubits for the clock mode. Here we can begin with a mixed quantum initial state $\rho_0$ which is simple to prepare and we are assumed we are given $|y_0\rangle$. Then the system can evolve under a time-independent Hamiltonian $\vect{\bar{H}}$. Retrieval of an approximation to $|y(t)\rangle$ is very simple, where we simply throw away the clock mode at the end.}
\label{fig:quantumdynamics}. 
\end{figure}

\begin{figure}[ht] 
\centering
\includegraphics[width=12cm]{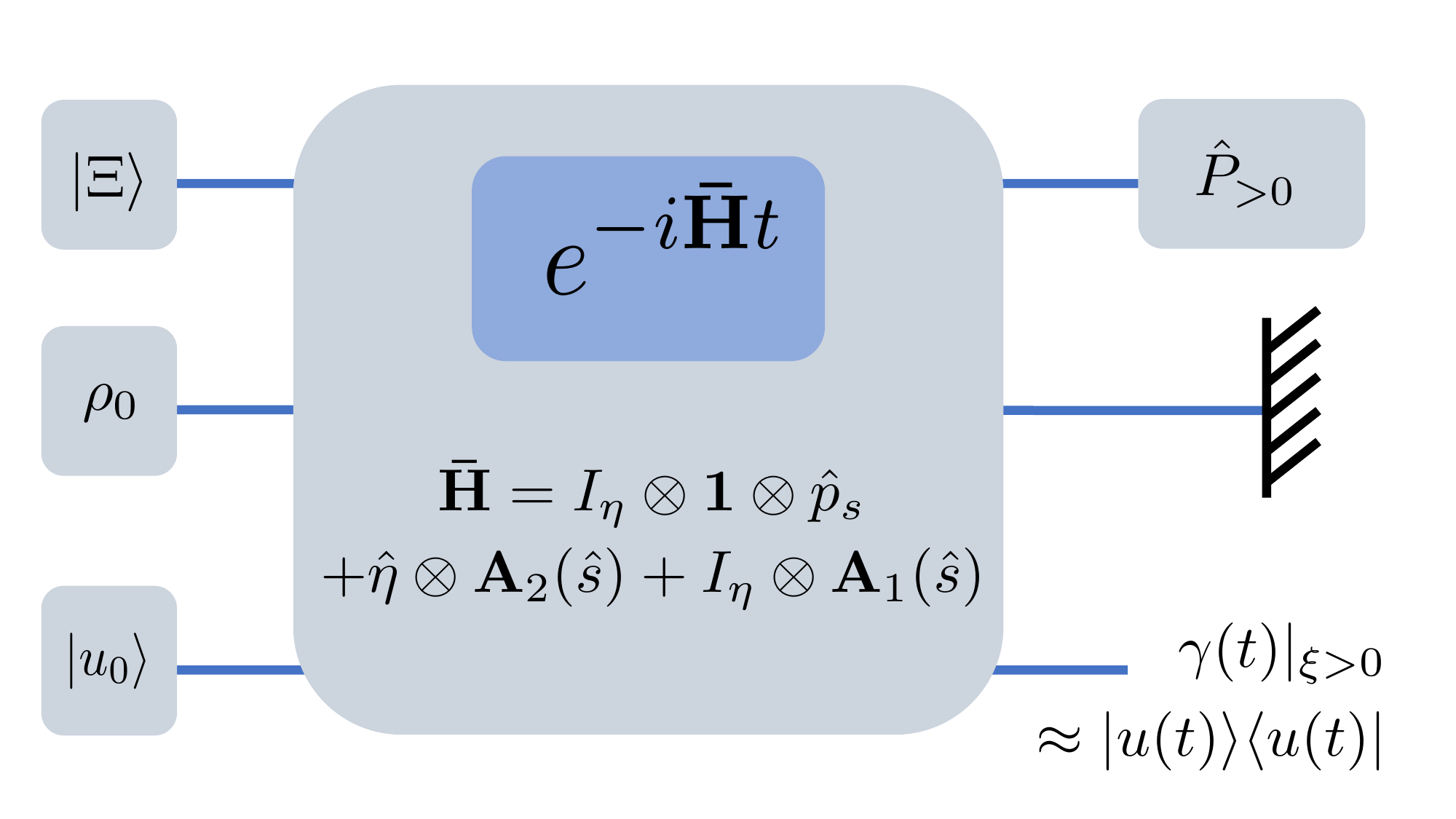} 
\caption{\textit{Time-independent Hamiltonian simulation for general non-autonomous linear PDEs}. Here we use the simpler first protocol in Theorem~\ref{thm:three}. In the ideal initial state preparation, we have the ancillary initial state as the $s=0$ eigenstate $\rho_0=|s=0\rangle \langle s=0|$. In this scenario, after the evolution of the initial state with respect to the time-independent Hamiltonian $\vect{\bar{H}}$, we perform the measurement $\hat{P}_{>0}$ postselected on $\xi>0$, and we retrieve the state $|\bar{w}(t)\rangle=\vect{\bar{w}}(t)/\|\vect{\bar{w}}(t)\|$ with probability $\|\vect{\bar{w}}(t)\|^2/\|\vect{\bar{w}}(0)\|^2$ where $\vect{\bar{w}}(t)=\exp(-i\vect{\bar{H}}t)\vect{\bar{w}}(0)$, with $\vect{\bar{w}}(0)=\vect{y}_0|s=0\rangle$. Then $|\bar{w}(t)\rangle=|s=t\rangle |u(t)\rangle$ and by tracing out the ancilla state $ |s=t\rangle$, we retrieve $|u(t)\rangle$ exactly. However, in general $|s=0\rangle$ is an ideal state preparation that cannot be achieved in reality and we can use an approximation. We can also prepare more general and possibly mixed state $\rho_0=\iint g(s,s')|s\rangle \langle s'| ds ds'$, where the output state $\gamma(t)=\text{Tr}_s(\sigma(t)) \approx |y(t)\rangle \langle y(t)|$ when $g(s,s)$ is close enough to $\delta(s)$.}
\label{fig:perfect}. 
\end{figure}
For applications to different ODEs and PDEs, we see that we only need the corresponding $\vect{A}(t)$ for the problem \cite{2023analog}, and we can use either Theorem~\ref{thm:two} when $\vect{A}(t)=\vect{A}^{\dagger}(t)$ or Theorem~\ref{thm:three} when $\vect{A}(t) \neq \vect{A}^{\dagger}(t)$. It can be applied to both linear homogeneous and inhomogeneous PDEs and also a system of nonlinear ODEs and classes of nonlinear PDEs. \\

\subsection{Preparing $|u(t)\rangle$ with an (imperfect) auxiliary clock register}
The first protocol in Theorems~\ref{thm:two} and~\ref{thm:three} that prepares $|y(t)\rangle$ by tracing out the clock register $|s\rangle$ entirely is generally simpler for implementation, since no measurement process is required at the end. An important observation from Theorems~\ref{thm:two} and~\ref{thm:three} is that only the diagonal components $g(s,s)$ of the initial ancilla state $\rho_0$ matter for retrieving $|y(t)\rangle$ -- and subsequently $|u(t)\rangle$ -- rather than any off-diagonal term $g(s,s')$ for $s \neq s'$. This means it is not the quantum coherence that acts as a resource but rather the `classical precision' in the clock register $|s\rangle$ through using $g(s,s)=\delta(s)$. Since $g(s,s)=\delta(s)$ refers to a clock register with infinite precision, this is not possible to implement, and we will instead prepare physically realisable states, so we require $\delta \rightarrow \delta_{\omega}$. In an implementation, we can also prepare a mixed state $\rho_0 = \int ds \delta_{\omega}(s)|s\rangle \langle s|$ instead and this achieves the same precision as when using a pure state $|\psi_0\rangle=\int ds \, \sqrt{\delta_{\omega}(s)}|s\rangle$, since we retrieve the same $\gamma(t)=\text{Tr}_s(\sigma(t))$ state in cases. \\

We want to bound the quantum fidelity $\text{Fid}(. \, , .)$ between the target $|u(t)\rangle$ and the true output state $\gamma(t)=\text{Tr}_s(\sigma(t))$, using the first protocol in Theorems~\ref{thm:two} and~\ref{thm:three}. For simplicity, we will not include the part requiring Schr\"odingerisation and work with $\vect{A}(t)=\vect{H}(t)$, so $|u(t)\rangle=|y(t)\rangle$ and 
\begin{align*}
    \text{Fid}(\gamma(t), |y(t)\rangle)&=\text{Tr}\ (\gamma(t)|y(t)\rangle \langle y(t)|)=\int ds\, g(s-t, s-t)|\langle y(t)|\mathcal{U}_{s, s-t}|y_0\rangle|^2 \geq 1-\zeta.
\end{align*}
When $g(s,s)=\delta(s)$, then clearly $\text{Fid}(\gamma(t), |y(t)\rangle)=1$. However, if we can only access an approximation, $g(s,s)=\delta_{\omega}(s)$. As we see from Lemma~\ref{lemma::error_quantum}, this will give $1-\text{Fid}(\gamma(t), |y(t)\rangle) \approx O(\omega^2)$, thus we need to choose widths $\omega \lesssim O(\sqrt{\zeta})$. \\

The second protocol in Theorems~\ref{thm:two} and~\ref{thm:three} that require a final projective measurement in $|s=t \rangle$ doesn't exist in the limit of $g(0,0)=1$. Instead, suppose we prepare a thermal (Gaussian) state with 
\begin{align*}
g(s,s) &=\exp(-s^2/2\omega^2)/\sqrt{2\pi \omega^2},\ \forall s \text{ and } g(s,s') = 0,\  \forall s\neq s'.
\end{align*}
Then by measuring the last register in $|s=t\rangle$, we obtain $|y(t)\rangle$ with probability $g(0,0)=1/\sqrt{2 \pi \omega^2}$, which is only possible with perfect final measurement. In this case, although a larger $\omega$ corresponding to a higher temperature state which might be simpler to prepare in some cases, a final more precise measurement is still required. Furthermore, the larger the $\omega$, the smaller $g(0,0)$ so the final probability of retrieval of $|y(t)\rangle$ and hence $|u(t)\rangle$ will be multiplied by the factor $\omega$. Thus, in most cases where there is not much difference in resource in the preparation of $\rho_0$, which does not require any quantum coherence, the first protocol is preferred. \\

\begin{lemma}
\label{lemma::error_quantum}
Let the reduced density matrix $\gamma_{\omega}(t)=\text{Tr}_s(\sigma(t))$, where $\sigma(t)$ is defined by either Theorem~\ref{thm:two} or~\ref{thm:three} and let $g(s,s)=\delta_{\omega}(s)$. Then 
\begin{align*}
 \gamma_{\omega}(t) = \int \delta_{\omega}(s-t) \mathcal{U}_{s,s-t} \dyad{\vect{y}_0} \mathcal{U}_{s,s-t}^\dagger\ ds.
\end{align*}
The quantum fidelity and trace-distance between the approximated state $\gamma_{\omega}(t)$ and the target state $|y(t)\rangle$ are given by 
 \begin{align*}
& \text{Fid}\big(\gamma_{\omega}(t), |y(t)\rangle \big) = 
 1 - \mathsf{C} \omega^2  + o(\omega^2)\\
& \norm{\gamma_{\omega}(t) - \dyad{y(t)}}_{\tr} \le \sqrt{1- \text{Fid}\big(\gamma_{\omega}(t), \dyad{y(t)} \big)} = \sqrt{\mathsf{C}}\ \omega + o(\omega).
 \end{align*}
 and the prefactor
 \begin{align*}
 \mathsf{C} &= \expval{\vect{H}^2(t)}  - 2\text{Re}\bra{y(t)} \vect{H}(t) \mathcal{U}_{t,0} \vect{H}(0) \ket{y_0} +  \expval{\vect{H}^2(0)}  - \Big(\expval{\vect{H}(t)} - \expval{\vect{H}(0)}\Big)^2  \le \mathsf{C}_{\text{R}}.
 \end{align*}
\end{lemma}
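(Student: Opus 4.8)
The plan is to reduce the fidelity to a one-dimensional Taylor expansion of a scalar overlap and then quote a standard fidelity--trace-distance inequality, so that no machinery beyond the propagator identities of \eqref{eq:vequation} is needed. The stated formula for $\gamma_\omega(t)$ is immediate: it is Eq.~\eqref{eqn::gamma} from the proof of Theorem~\ref{thm:two}, specialised to the diagonal weight $g(s,s)=\delta_\omega(s)$ so that $g(s-t,s-t)=\delta_\omega(s-t)$; for Theorem~\ref{thm:three} the extra $\eta$-mode is carried inertly inside $\ket{y_0}$ and $\mathcal{U}_{s,s-t}$, so the reduction over the clock register is structurally identical. Since $\gamma_\omega(t)$ is an incoherent mixture of pure states and the target is pure, its fidelity is just the expectation
\begin{align*}
\text{Fid}\big(\gamma_\omega(t),\ket{y(t)}\big)=\expval{\gamma_\omega(t)}{y(t)}=\int \delta_\omega(s-t)\,\abs{\bra{y(t)}\mathcal{U}_{s,s-t}\ket{y_0}}^2\,ds.
\end{align*}
First I would substitute $r=s-t$ and set $f(r):=\bra{y(t)}\mathcal{U}_{t+r,r}\ket{y_0}$, reducing the whole problem to expanding $\abs{f(r)}^2$ to second order in $r$ and integrating against $\delta_\omega$.

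Next I would Taylor expand $f$ about $r=0$. Since $\mathcal{U}_{t,0}\ket{y_0}=\ket{y(t)}$ and $\bra{y(t)}\mathcal{U}_{t,0}=\bra{y_0}$ (using $\mathcal{U}_{t,0}^\dagger=\mathcal{U}_{0,t}$), we have $f(0)=1$. Differentiating the two-argument propagator with the chain rule and \eqref{eq:vequation} gives $\tfrac{d}{dr}\mathcal{U}_{t+r,r}=-i\vect{H}(t+r)\mathcal{U}_{t+r,r}+i\mathcal{U}_{t+r,r}\vect{H}(r)$, whence $f'(0)=-i(\expval{\vect{H}(t)}-\expval{\vect{H}(0)})$ is \emph{purely imaginary}. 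This is the key structural fact: it makes $\Re f'(0)=0$, killing the linear term in $r$ and preventing the nonzero mean $\mu=o(\omega)$ of $\delta_\omega$ from contributing at an order below $\omega^2$. A second differentiation (where assumption (ii), the continuous differentiability of $\vect{H}$, enters) yields $f''(0)$; after discarding its manifestly imaginary pieces proportional to $\vect{H}'(t)$ and $\vect{H}'(0)$, its real part is $\Re f''(0)=-\expval{\vect{H}^2(t)}-\expval{\vect{H}^2(0)}+2\Re\bra{y(t)}\vect{H}(t)\mathcal{U}_{t,0}\vect{H}(0)\ket{y_0}$.

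Assembling $\abs{f(r)}^2=1+2\Re f'(0)\,r+(\Re f''(0)+\abs{f'(0)}^2)r^2+o(r^2)$ and integrating against $\delta_\omega$ with $\int\delta_\omega=1$, $\int r\,\delta_\omega=o(\omega)$ and $\int r^2\delta_\omega=\omega^2$ (the localisation of $\delta_\omega$ controlling the remainder at $o(\omega^2)$) produces $\text{Fid}=1-\mathsf{C}\omega^2+o(\omega^2)$ with $\mathsf{C}=-\Re f''(0)-\abs{f'(0)}^2$. Inserting $\abs{f'(0)}^2=(\expval{\vect{H}(t)}-\expval{\vect{H}(0)})^2$ reproduces the stated prefactor exactly, and comparing term-by-term with $\mathsf{C}_{\text{R}}$ from Lemma~\ref{lemma::error_classical} shows the two differ only by the nonnegative quantity $(\expval{\vect{H}(t)}-\expval{\vect{H}(0)})^2$, giving $\mathsf{C}\le\mathsf{C}_{\text{R}}$ at once. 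The trace-distance claim is then not a separate computation but the Fuchs--van de Graaf inequality for a state against a pure state, $\norm{\gamma_\omega(t)-\dyad{y(t)}}_{\tr}\le\sqrt{1-\text{Fid}(\gamma_\omega(t),\dyad{y(t)})}$, into which the fidelity expansion is substituted to give $\sqrt{\mathsf{C}}\,\omega+o(\omega)$.

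I expect the main obstacle to be the bookkeeping of $f''(0)$: one must apply the product rule to $\mathcal{U}_{t+r,r}$ in both slots, track which cross terms survive, and verify that every purely imaginary contribution (the $\vect{H}'$ terms, together with $\Re f'(0)=0$) cancels out of the final real part. This cancellation is exactly what guarantees the leading error is $O(\omega^2)$ rather than $O(\omega)$, so it deserves the most care. A secondary, more routine point is justifying that the Peano remainder $o(r^2)$ integrates to $o(\omega^2)$, which follows from $\delta_\omega$ being supported in $\abs{r}\le\omega$ (or decaying rapidly, as for the Gaussian regulariser).
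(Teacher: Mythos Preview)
Your proposal is correct and follows essentially the same route as the paper's proof: both are second-order Taylor expansions of the fidelity integrand using the propagator identities from Eq.~\eqref{eq:vequation}, with the same cancellation of the first-order (purely imaginary) contribution. The only organisational difference is that the paper Taylor-expands the operator-valued integrand $\widetilde{V}(t,s)=\mathcal{U}_{s,s-t}\dyad{y_0}\mathcal{U}_{s,s-t}^\dagger$ and then sandwiches with $\bra{y(t)}\cdot\ket{y(t)}$, whereas you first pass to the scalar amplitude $f(r)=\bra{y(t)}\mathcal{U}_{t+r,r}\ket{y_0}$ and expand $\abs{f(r)}^2$; this avoids the ``$+\,h.c.$'' bookkeeping but is otherwise the same computation term by term.
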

\begin{proof}
    See Appendix~\ref{appendix::proof_quantum}. 
\end{proof}
We remark that  the above error analysis implicitly assumes that the case $\mu \gg \order{\omega}$ is disregarded, where $\mu$ is the bias. This case can be avoided with most state preparations.  

\begin{example}
If $\vect{H}(t) = \lambda(t) \vect{h}$, where $\vect{h}$ is a time-independent Hamiltonian and $\lambda(t)$ is a polynomial, we can easily show
\begin{align*}
\mathsf{C} =& (\lambda(t) - \lambda(0))^2 \Big(\expval{\vect{h}^2}{y_0} - (\expval{\vect{h}}{y_0})^2\Big).
\end{align*}
\end{example}

\subsection{Qubit-based system} \label{sec:qubitformalism}
If $\vect{H}(t)$ acts on a finite-dimensional Hilbert space, this protocol can also be formulated entirely in terms of operations on qubits instead of qumodes. We begin with the linear non-autonomous unitary dynamical system
\begin{align} \label{eq:hsystem}
    \frac{d \vect{u}}{dt}=-i\vect{H}(t)\vect{u}, \qquad \vect{H}^{\dagger}=\vect{H}, \qquad \vect{u}_0=\vect{u}(0),
\end{align}
which can either represent a quantum dynamical system of interest or the already Schr\"odingerised equation from a non-Schr\"odinger-like PDE or dynamical system. We assume unitarity in Eq.~\eqref{eq:hsystem} since the discretisation of Schr\"odingerisation is already dealt with in \cite{jin2022quantum, jin2212quantum} so here we can focus on the effect of the protocol in going from a non-autonomous to an autonomous system. Since this is operating on a system of qubits, $|u(t)\rangle=\vect{u}(t)/\|\vect{u}(t)\|$ resides in finite-dimensional Hilbert space and we can write
\begin{align*}
    \vect{u}(t)=\sum_{j=1}^J u_j(t)|j\rangle .
\end{align*}
Here Eq.~\eqref{eq:hsystem} can represent a quantum dynamical system of $n=\log_2 J$ qubits or two-level systems. If Eq.~\eqref{eq:hsystem} represents a discretised $D$-dimensional linear PDE, then $J=M^D$, where $M$ is the numnber of mesh points in the spatial discretisation used for each spatial dimension.\\

To implement the dynamics in Theorem~\ref{thm:three} entirely on a system of qubits, we must choose a discretisation for $s$. We can choose a uniform mesh size $\Delta s=1/(N+1)\sim 1/N$ with grid points denoted $s_{-N/2}<\cdots<s_{N/2}$. We can also define a discretised momentum operator $\hat{P}_s$ that discretises $-i\partial/\partial s$ and a discretised position operator $\hat{S}$ that discretises $\hat{s}$. We will use the first protocol in Theorem~\ref{thm:three} where the ideal initial state $\rho_0$ has $g(s,s)=\delta(s)$. We need to use a discretised $\delta_\omega(s_i)$ instead, defined in Eq.~\eqref{eq:deltaomega}. Then we can define a discretised $\sigma(t)$ which is composed of $\log_2((N+1)J)\sim \log_2(NJ)$ qubits. This $\sigma(t)$ evolves under unitary evolution
\begin{align} \label{eq:sigmaqubitevolution}
    \sigma(t)=e^{-i\vect{\bar{H}}t}\sigma(0)e^{i\vect{\bar{H}}t}, \qquad  \vect{\bar{H}}=\mathbf{1}\otimes \hat{P}_s+ \vect{H}(\hat{S}) ,
\end{align}
with $\vect{\bar{H}}$ is now represented by a $(N+1)J \times (N+1)J$ Hermitian matrix given in Eq.~\eqref{eq:sigmat}. That the ancilla $|\eta\rangle$ mode here is not required since we started with unitary dynamics in Eq.~\eqref{eq:hsystem}. We can choose our state $\rho_0$ to be a mixed state with only diagonal components $\delta_{\omega}(s_i)$, so the initial state $\sigma(0)$ can also be discretised 
\begin{align} \label{eq:sigma0discrete}
    \sigma(0) \rightarrow \sum_{i=-N/2}^{N/2} \sum_{j, j'=1}^J \sigma_{i, j, j'}|ji\rangle \langle j'i|=\sum_{i=-N/2}^{N/2} \sum_{j, j'=1}^J u_j(0)u^*_{j'}(0)|j\rangle \langle j'| \otimes \delta_{\omega}(s_i) |i\rangle \langle i|.
    \end{align}
It is important to note that $t$ remains continuous throughout the evolution. Thus, if $\vect{\bar{H}}$ can be naturally prepared by some analogue quantum simulator, then the evolution in Eq.~\eqref{eq:sigmaqubitevolution} can be performed directly. \\

 Since $\vect{\bar{H}}$ is already a time-independent Hamiltonian simulation, we can now apply any conventional or improved quantum simulation algorithms for time-independent methods, without resorting to methods especially designed for time-dependent systems that require time-ordered queries or multiple measurements throughout the evolution. The first protocol in Theorem~\ref{thm:three} (see also Figure~\ref{fig:perfect}) requires no measurements at all to deal with the time-dependency of $\vect{H}(t)$. Note also that unlike previous qubit-based methods for time-dependent Hamiltonians where the Dyson series expansion requires that time evolves by discrete steps, here $t$ is a continuous quantity. In principle it is not necessary to break up $t$ into discrete time steps unless $\vect{\bar{H}}$ cannot be naturally realised in an analogue setting. \\

For instance, let $s_{\bar{H}}$ denote the sparsity of matrix $\vect{\bar{H}}$ and let $\|\vect{\bar{H}}\|_{\max}$ denote its max-norm (value of its largest entry in absolute value). Let the $(j,k)^{\text{th}}$ entry of $\vect{\bar{H}}$ be denoted $(\vect{\bar{H}})_{jk}$. A common set of black-boxes used in Hamiltonian simulation is known as sparse access. 
\begin{definition}
Sparse access to Hermitian matrix $\vect{\bar{H}}$ refers to two unitary black-boxes $O_M$ and $O_F$ such that $O_M|j\rangle|k\rangle|z\rangle=|j\rangle|k\rangle|z\oplus H_{jk}\rangle$ and $O_{F} |j\rangle|k\rangle=|j\rangle|F(j,k)\rangle$. 
Here the function $F$ takes the row index $j$ and a number $k=1,2,...,s_{\bar{H}}$ and outputs the column index of the $k^{\text{th}}$ non-zero elements in row $j$. 
\end{definition}
There are quantum simulation protocols for quantum simulation with respect to time-independent Hamiltonians $\vect{\bar{H}}$, i.e.,  $\exp(-i\vect{\bar{H}}t)$,  in terms of query complexity that scale linearly in $t$ \cite{low2019hamiltonian} using sparse access or linearly in $t$ up to logarithmic factors \cite{berry2015hamiltonian}.

\begin{lemma} \cite{berry2015hamiltonian} ~\label{eq:lemmasimulation}
Let $\tau=s_{\bar{H}} T\|\vect{\bar{H}}\|_{\text{max}}$. Then $\exp(-i\vect{\bar{H}}T)$ acting on $m_{\bar{H}}$ qubits can be simulated to within error $\epsilon$ with query complexity $\mathcal{O}(\tau \log (\tau/\epsilon)/(\log\log (\tau/\epsilon)))$ and gate complexity 
$\mathcal{O}\Big(\tau ( m_{\bar{H}} + \log^{2.5}(\tau/\epsilon))\log (\tau/\epsilon)/(\log\log (\tau/\epsilon))\Big)$.
 \end{lemma}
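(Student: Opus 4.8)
The statement is precisely the sparse-Hamiltonian simulation theorem of Berry, Childs and Kothari, so the plan is to reconstruct their truncated-Taylor-series / linear-combination-of-unitaries (LCU) argument, specialised to the oracles $O_M,O_F$ of the preceding definition. Write $H:=\vect{\bar{H}}$, $d:=s_{\bar{H}}$ and $\alpha:=d\,\|\vect{\bar{H}}\|_{\max}$, so that $\tau=\alpha T$. First I would split the total time $T$ into $r=\lceil \tau/\ln 2\rceil=O(\tau)$ equal segments of duration $t=T/r$, chosen so that $\alpha t\le \ln 2$; it then suffices to simulate one segment $e^{-iHt}$ to error $\epsilon/r$ and compose the $r$ results. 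On a single segment I approximate the propagator by its truncated Taylor series $V=\sum_{k=0}^{K}(-iHt)^k/k!$. Since $\|Ht\|\le\alpha t=O(1)$, the truncation error $\|V-e^{-iHt}\|$ decays superexponentially in $K$, so choosing $K=O\!\big(\log(r/\epsilon)/\log\log(r/\epsilon)\big)=O\!\big(\log(\tau/\epsilon)/\log\log(\tau/\epsilon)\big)$ forces the per-segment error below $\epsilon/r$.

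The second ingredient is a block-encoding of $H$ assembled from the sparse oracles. Using $O_F$ to enumerate, for each row $j$, the $\le d$ nonzero column indices, and $O_M$ to load the entry $H_{jk}$, one builds a (prepare, select) pair that realises $H/\alpha$ as the top-left block of a unitary acting on $O(1)$ ancilla qubits, at the cost of $O(1)$ queries to $O_M,O_F$; the standard $1$-sparse / edge-colouring decomposition (or a Szegedy-type quantum walk) yields exactly the normalisation $\alpha=d\,\|\vect{\bar{H}}\|_{\max}$. Composing $K$ copies of this block-encoding with a coefficient-preparation unitary that generates the amplitudes $\propto\sqrt{t^k/k!}$ produces a larger unitary $W$ whose flagged branch implements $V$. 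This is the LCU step, costing $O(K)$ oracle queries per segment, and the relevant $1$-norm of coefficients is $\sum_k (\alpha t)^k/k!=e^{\alpha t}\le 2$ by our choice $\alpha t\le\ln 2$.

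The crucial step is to convert the flagged branch of $W$ into a deterministic implementation of $V$. Because $e^{\alpha t}\le 2$ and $V\approx e^{-iHt}$ is close to unitary, the amplitude on the flag subspace is a fixed constant, \emph{independent of the input state}, and a single round of \emph{oblivious} amplitude amplification—reflecting about the flag rather than about the unknown input—boosts it to $1-O(\epsilon/r)$. I would check that the deviation of $V$ from an isometry is controlled by the same truncation error, so the oblivious step adds no error beyond a constant factor. Multiplying the $O(K)$ per-segment query cost by the $r=O(\tau)$ segments then gives the query complexity $O\!\big(\tau\log(\tau/\epsilon)/\log\log(\tau/\epsilon)\big)$.

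For the gate complexity I would track the non-query overhead: each controlled application of the block-encoding needs reversible arithmetic on the $m_{\bar{H}}$-qubit index and data registers, contributing the additive $m_{\bar{H}}$, while preparing the $K$ Taylor coefficients and the $O(d)$-dimensional column superpositions to precision $\epsilon/r$ contributes the $\log^{2.5}(\tau/\epsilon)$ term; multiplying by the $O\!\big(\tau\log(\tau/\epsilon)/\log\log(\tau/\epsilon)\big)$ repetitions reproduces the stated bound. The main obstacle is the oblivious amplitude amplification: ordinary amplitude amplification would require reflecting about the unknown input state, so the entire argument hinges on the near-unitarity of $V$, which is exactly what permits the amplification to act obliviously and keeps the overhead a constant factor rather than the $\Omega(\sqrt{\cdot})$ penalty one would naively fear.
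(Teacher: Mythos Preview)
The paper does not prove this lemma at all: it is stated with a citation to \cite{berry2015hamiltonian} and then immediately applied, so there is no ``paper's own proof'' to compare against. Your proposal is a faithful and correct sketch of the truncated-Taylor-series / LCU argument from the cited reference, with the right segment length $\alpha t\le\ln 2$, the correct truncation order $K=O(\log(\tau/\epsilon)/\log\log(\tau/\epsilon))$, and the essential use of oblivious amplitude amplification to avoid the input-dependent reflection; nothing more is needed here since the lemma is an imported black box in this paper.
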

 We can directly apply this to our protocol. Since to approximate $|u(t)\rangle$ we only need to throw away the extra $|s_i\rangle$ registers, consisting of $\log_2 N$ qubits, after obtaining $\sigma(t)$, this will not incur any extra costs. The initial state preparation for $\sigma(0)$ involves the preparation of $|u(0)\rangle$ and the mixed state $\rho_0=\sum_{i=-N/2}^{N/2} \delta_{\omega}(s_i)|i\rangle \langle i|$, where we use the approximate discretised $\delta$-function $\delta_{\omega}$ such that $\sum_{i=-N/2}^{N/2} \delta_{\omega}(s_i)=1$. Since $|u(0)\rangle$ is problem dependent, we assume it is given. We note $\rho_0$ is a classical mixture of states $\{|i\rangle\}_{i=-N/2}^{N/2}$, where each $|i\rangle$ is selected with probability $\delta_{\omega}(s_i)$, which means that $\rho_0$ does not exploit any quantum coherence. Thus there is no quantum cost in the preparation of $\rho_0$ once we are given $\{|i\rangle\}_{i=-N/2}^{N/2}$.

\begin{theorem} \label{thm:four}
        Assuming sparse access to $\vect{\bar{H}}$, the query and gate complexities for obtaining the $J$-qubit state $|y(T)\rangle$ to precision $\epsilon$ are respectively given in Lemma~\ref{eq:lemmasimulation}, where $m_{\bar{H}}=\log_2(J/\epsilon)$ and  $\tau \lesssim O(\max_{t \in [0, T]} s(\vect{H}(t))(1/\epsilon+\max_{t \in [0, T]} \|\vect{H}(t)\|_{max})T)$. 
\end{theorem}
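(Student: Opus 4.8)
The plan is to treat Theorem~\ref{thm:four} as a complexity accounting that rests entirely on the reduction already established: by Theorems~\ref{thm:two} and~\ref{thm:three}, producing $|y(T)\rangle$ (and hence, after tracing out the clock, the target system state) amounts to implementing a single time-\emph{independent} evolution $e^{-i\vect{\bar H}T}$ with $\vect{\bar H}=\mathbf{1}\otimes\hat P_s+\vect H(\hat S)$ on the discretised register of Eq.~\eqref{eq:sigmaqubitevolution}, and then discarding the $\log_2(N+1)$ clock qubits at no cost. I would therefore invoke Lemma~\ref{eq:lemmasimulation} directly, so the whole proof reduces to (i) bounding the sparsity $s_{\bar H}$, the max-norm $\|\vect{\bar H}\|_{\max}$ and the qubit number $m_{\bar H}$ of $\vect{\bar H}$, and (ii) choosing the grid size $N$ and the regulariser width $\omega$ so that the total error meets the target $\epsilon$.

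For the structural bounds I would exploit the additive form of $\vect{\bar H}$ in the basis $|j\rangle|s_i\rangle$. The term $\vect H(\hat S)$ is block-diagonal over the clock index, each block being $\vect H(s_i)$, so its per-row sparsity is at most $\max_{t\in[0,T]} s(\vect H(t))$ and its largest entry is at most $\max_{t\in[0,T]}\|\vect H(t)\|_{\max}$. I would discretise the transport operator $\hat P_s$ (the Hermitian discretisation of $-i\partial/\partial s$) by a fixed-order finite-difference stencil rather than a spectral rule, precisely so that it stays sparse: this gives $s(\hat P_s)=O(1)$ and $\|\hat P_s\|_{\max}=O(1/\Delta s)$. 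Since the two summands act along complementary (clock versus system) directions and overlap only on the diagonal, $s_{\bar H}\le s(\hat P_s)+\max_t s(\vect H(t))\lesssim \max_t s(\vect H(t))$ and $\|\vect{\bar H}\|_{\max}\le \|\hat P_s\|_{\max}+\max_t\|\vect H(t)\|_{\max}$. The register carries $\log_2 J$ system qubits and $\log_2(N+1)$ clock qubits, so $m_{\bar H}=\log_2((N+1)J)$.

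The error budget then fixes $N$. There are three independent contributions: the Hamiltonian-simulation error of Lemma~\ref{eq:lemmasimulation}, which I set to $\epsilon$; the regularisation error incurred by replacing $\delta$ with $\delta_\omega$ in Eq.~\eqref{eq:sigma0discrete}, which by Lemma~\ref{lemma::error_quantum} is $O(\omega)$ in trace distance and so forces $\omega\lesssim\epsilon$; and the spatial-discretisation error of the advection equation, which requires $\Delta s\lesssim\omega$ to resolve the width-$\omega$ bump $\delta_\omega$. With a normalised clock domain and $\Delta s=1/(N+1)$, the binding constraint $\Delta s\lesssim\omega\sim\epsilon$ yields $N\sim1/\epsilon$, whence $m_{\bar H}=\log_2((N+1)J)=\log_2(J/\epsilon)$ and $\|\hat P_s\|_{\max}=O(N)=O(1/\epsilon)$. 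Substituting the structural bounds into $\tau=s_{\bar H}T\|\vect{\bar H}\|_{\max}$ then gives $\tau\lesssim \max_t s(\vect H(t))\,(1/\epsilon+\max_t\|\vect H(t)\|_{\max})\,T$, and feeding $\tau$ and $m_{\bar H}$ into Lemma~\ref{eq:lemmasimulation} produces the stated query and gate complexities.

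The step I expect to be the main obstacle is the spatial-discretisation error analysis, i.e.\ showing rigorously that the finite-difference transport operator reproduces the exact advection $\vect w(t,s)=\delta_\omega(s-t)\mathcal U_{s,s-t}\vect y_0$ to within $O(\epsilon)$ once $\Delta s\lesssim\omega$, while simultaneously keeping $\vect{\bar H}$ sparse and Hermitian. This is where the tension between accuracy (wanting a high-order or spectral derivative) and sparsity (wanting a short stencil) must be resolved, and where one must confirm that the numerical dispersion of the chosen stencil does not degrade the fidelity guaranteed by Lemma~\ref{lemma::error_quantum}; the remaining sparsity, norm and qubit estimates are then routine.
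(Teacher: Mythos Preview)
Your proposal is correct and mirrors the paper's proof almost step for step: both reduce to Lemma~\ref{eq:lemmasimulation}, use the block-diagonal structure of $\vect{H}(\hat S)$ to get $s_{\bar H}\approx\max_t s(\vect H(t))$ and $\|\vect H(\hat S)\|_{\max}=\max_t\|\vect H(t)\|_{\max}$, take a low-order (upwind) stencil for $\hat P_s$ so that its sparsity is $O(1)$ and $\|\hat P_s\|_{\max}=O(N)$, and fix $N\sim 1/\epsilon$ via $\omega\sim 1/N$ combined with the $O(\omega)$ trace-distance bound of Lemma~\ref{lemma::error_quantum}. Your worry about the spatial-discretisation error of the transport step is legitimate, but the paper's own proof is equally informal on that point; it simply asserts $\omega\sim 1/N$ as ``the minimum width possible'' and does not carry out a separate consistency analysis for the discretised advection.
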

\begin{proof} Here we just need to be concerned with the cost in obtaining $\sigma(T)$ in Eq.~\eqref{eq:sigmaqubitevolution} with initial condition in Eq.~\eqref{eq:sigma0discrete}, because $\gamma(T)=\text{Tr}_s(\sigma(T)) \approx |y(t)\rangle \langle y(t)|$. In our protocol, it is clear that the total number of qubits required is the number required for the state $|u(t)\rangle$ plus that required to represent the discretised ancillary mode. Thus  $m_{\bar{H}}=\log_2(NJ)=\log_2(N)+\log_2(J)$. For example, if we want to relate the size $N$ to the 2-norm error $\epsilon$ of our state, the fidelity between the target state and the actual state $\text{Fid} \geq 1-\zeta$ where $\epsilon^2 \sim \zeta \sim \omega^2$, so $\omega \sim \epsilon$. Suppose we choose the width $\omega \sim 1/N$, which is the minimum width possible with the discretisation size $N$, then $\epsilon \sim 1/N$. So we can make the replacement $N \sim 1/\epsilon$ up to constants.\\

The sparsity of $\vect{\bar{H}}$ is the sparsity of $\mathbf{1}\otimes \hat{P}_s+\vect{H}(\hat{S})$. In the first-order upwind scheme, for instance, the sparsity of $\hat{P}_s$ is $2$, so the sparsity of $\vect{\bar{H}}$ is dominated by the sparsity of $\vect{H}(\hat{S})$, so $s_{\bar{H}} \approx s (\vect{H}(\hat{S}))$. Since $\hat{S}$ is a diagonal matrix, the sparsity of $\vect{H}(\hat{S})=\sum_{j=0}^{N/2}\vect{H}(s_i) \otimes |i\rangle \langle i|$ (where here $\vect{H}(s_i)=0$ for all $s_i<0$) is only due to the maximum sparsity of all the matrices $\vect{H}(t)$ for every $t \in [0, T]$. So we can rewrite $s_{\bar{H}} \approx \max_t s(\vect{H}(t))$. For the max-norm, we note that $\|\vect{\bar{H}}\|_{\max} \le {\|\vect{H}(\hat{S})\|_{\text{max}}+\|\hat{P}_s\|_{\text{max}}}$ where $\|\hat{P}_s\|_{\text{max}}=N$. Now $\|\vect{H}(\hat{S})\|_{max}=\max_{t} \|\vect{H}(t)\|_{max}$ is the maximum norm of $\vect{H}(t)$ over all time $t \in [0, T]$ involved in the protocol. 
Inserting into Lemma~\ref{eq:lemmasimulation} we have our result. 
\end{proof}
For Theorem~\ref{thm:four}, it's not necessary to only use Lemma~\ref{eq:lemmasimulation}, but any protocol for time-independent Hamiltonians can be used in its place. \\

\begin{remark}
    The quantities $s(\vect{H}(\hat{S}))$ and $\|\vect{H}(\hat{S})\|_{\text{max}}$ cannot be computed without explicit access to the forms of $\vect{H}(t)$. However, we can still makes some remarks on their scaling in different situations. For example,  when $\vect{H}(t)$ results from the discretisation of a PDE with a highest-order $K^{\text{th}}$ derivative, then $\|\vect{H}(t)\|_{\text{max}} \sim O(M^K)$ where  $M$ denotes the number of mesh points in the discretisation of each dimension of the PDE, while $\|\hat{P}_{s}\|_{\text{max}}=O(N)$ where $N$ is the number of mesh points in the discretisation of $s$.  Since $K \ge 1$ and typically $M \sim N$, then we see $\|\hat{P}_{s}\|_{\text{max}}\lesssim \|\vect{H}(\hat{S})\|_{\text{max}}$, which implies $\|\vect{\bar{H}}\|_{\text{max}}\|\sim \vect{H}(\hat{S})\|_{\text{max}}$. Thus the complexity in simulating the time-dependent case is dominated by $O(\max_{t \in [0, T]} s(\vect{H}(t))\max_{t \in [0, T]} \|\vect{H}(t)\|_{max})T)$. This means that the presence of the extra terms in the now time-independent Hamiltonian $\vect{\bar{H}}$ due to the ancillary qubits does not significantly contribute to the complexity. In the case where $\|\hat{P}_s\|_{\text{max}}$ dominates the $\|\vect{H}(\hat{S})\|_{\text{max}}$, the total query and gate complexity is then only of order $\tau \sim O(\max_{t \in [0, T]} s(\vect{H}(t))T/\epsilon)$.  
    \end{remark}

\subsection{Hybrid system} \label{sec:hybridformalism}
We can also consider continuous-variable discrete-variable hybrid protocols, where $\vect{\bar{H}}$ acts on both continuous-variable (qumode) or discrete-variable (qubit) degrees of freedom. The first scenario is if the clock register is kept as a continuous variable, whereas the time-dependent $\vect{H}(t)$ acts on finite-dimensional Hilbert space. In this case only a single qumode is required and is especially suitable for non-autonomous ODEs. Alternatively, one can keep $\vect{H}(t)$ acting only on qumodes whereas the clock register is discretised, in the way outlined in Section~\ref{sec:qubitformalism}. This can be suitable for example for non-autonomous PDEs where the discretisation of the PDE can bring problems, but we know that the preparation of $\rho_0$ is simpler in the qubit setting, since it's just a classical mixture of $\{|i\rangle\}_{i=-N/2}^{N/2}$ states. We will examine these various settings for different problems and different platforms in upcoming work.  \\

See Figure~\ref{fig:quantumdynamics} for a schematic diagram of the process in Theorem~\ref{thm:two} to retrieve the approximation to $|y(t)\rangle$ when the clock mode is replaced by qubits.

\section{Applications} \label{sec:applications}
Here we will provide applications to general non-autonomous PDEs including quantum dynamics with time-dependent Hamiltonians, any linear non-autonomous PDE as well as non-autonomous nonlinear scalar hyperbolic and Hamilton-Jacobi PDEs and systems of nonlinear ODEs. \\

We will retain the language of continuous-variable qumodes only for simplicity. However, the formalism can be easily translated into qubit-based systems. For example, the auxiliary clock register can easily be discretised so the single qumode can be replaced by $\log_2 N$ qubits where $N$ is the size of the discretisation,  defined in Section~\ref{sec:qubitformalism}. Either a single qumode or $\log_2 N$ qubits is the maximum amount of extra spatial resources necessary to turn a non-autonomous unitary system into an autonomous one.\\

We can apply the method reviewed in Section~\ref{sec:quantumsimintro} to Schr\"odingerise linear non-autonomous PDEs. The corresponding time-dependent Hamiltonian is then given in Eq.~\eqref{eq:y}. If the time-dependent parameters of this Hamiltonian can be directly manipulated by good quantum control protocols, then no other methods are necessary. However, if a physical architecture is not capable to engineer the time-dependent control of relevant parameters, then we must dilate the system to obtain a time-independent Hamiltonian $\vect{\bar{H}}$, formulated in Theorem~\ref{thm:two} for quantum dynamics and Theorem~\ref{thm:three} for more general linear dynamics. We will write down the corresponding $\vect{\bar{H}}$ for different non-autonomous PDEs. 

\subsection{Quantum dynamics}

We begin with applications to time-dependent Schr\"odinger's equations governing non-autonomous quantum systems. This can also be applied to Schr\"odinger's equations with complex absorbing potentials and even open quantum systems with time-dependent Hamiltonians and system-environment interactions. 

\subsubsection{Closed quantum systems}

There are many problems in quantum physics that involve time-dependent Hamiltonians. Here we will only briefly mention an application relevant for computation is adiabatic quantum computation \cite{albash2018adiabatic}, and leave descriptions of other applications to future work. Here the time dependence in the simplest case has the form
\begin{align*}
    \vect{H}(t)=(1-t)\vect{H}_0+t\vect{H}_f, \qquad t\in [0,1] ,
\end{align*}
where $\vect{H}_0$ and $\vect{H}_f$ are the initial and final Hamiltonians, and the ground state of $\vect{H}_f$ can embed the solution of certain computational problems. Even for such a simple time-dependency, it is clear that the Hamiltonians don't commute at different times
\begin{align*}
[\vect{H}(t), \vect{H}(t')]=(t'-t)[\vect{H}_0, \vect{H}_f] \neq 0, \qquad t\neq t' ,
\end{align*}
since we don't want the ground state of the initial Hamiltonian to coincide with the ground state of the final Hamiltonian. From our formalism, this evolution can be achieved using a time-independent Hamiltonian
\begin{align*}
    \vect{\bar{H}}=\mathbf{1}\otimes \hat{p}_s+\vect{H}(\hat{s})=\mathbf{1}\otimes \hat{p}_s+\vect{H}_0 \otimes \mathbf{1}_s+\vect{H}_f \otimes \hat{s} ,
\end{align*}
where the auxiliary mode can be either presented by a continuous variable or $\log_2 N$ qubits (where in the latter case $\hat{p}_s \rightarrow \hat{P}_s$ and $\hat{s} \rightarrow \hat{S}$), depending on the physical implementation. More details on the relevance of our protocol for adiabatic quantum computing will be in an upcoming paper. 

\subsubsection{Open quantum systems}
We consider time-dependent Schr\"odinger's equations with a non-Hermitian Hamiltonian, as a way to model open quantum systems \cite{figueira_2006,Rotter_2009}
\begin{align} \label{eq:openquantum}
\frac{\dd}{\dd t} \vect{u}(t) = -i \vect{V}(t) \vect{u}(t), \qquad \vect{V}(t)=\vect{A}_1(t)-i\vect{A}_2(t), \qquad \vect{A}_1(t)=\vect{A}^{\dagger}_1(t), \qquad \vect{A}_2(t)=\vect{A}^{\dagger}_2(t),
\end{align}
where the \enquote{effective Hamiltonian} $\vect{V}$ is non-Hermitian. This is known as the effective non-Hermitian Hamiltonian formalism. This formalism can arise from e.g., a non-trace-preserving quantum channel $\mathcal{E}(\rho) = \vect{M} \rho \vect{M}^\dagger$ such that $\vect{M}^\dagger \vect{M} \le I$; one can rewrite this quantum channel by $\mathcal{E}(\rho) = e^{-i \vect{V}} \rho\ e^{i \vect{V}^\dagger}$ where the effective Hamiltonian $\vect{V} = i\log(\vect{M})$ is non-Hermitian in general. The operator $\vect{A}_1$ can be physically interpreted as the Hamiltonian for the original closed quantum system, whereas the term $\vect{A}_2$ can be viewed as a correction term from the system-environment interaction \cite{Rotter_2009}.
The dissipative rate of the system is governed by 
\begin{align*}
\frac{\dd}{\dd t} \norm{\vect{u}(t)}^2 = 2 \bra{\vect{u}(t)} \vect{A}_2(t) \ket{\vect{u}(t)}.
\end{align*}
This means that for applications of Eq.~\eqref{eq:openquantum} to open quantum systems, $\vect{A}_1, \vect{A}_2 \neq 0$. From Eq.~\eqref{eq:sigmat}, we see that the quantum state $|u(t)\rangle$ can be prepared using two ancillary qumodes where the augmented Hamiltonian takes the following form 
\begin{align*}
\vect{\bar{H}}&=I_{\eta}\otimes \mathbf{1}\otimes \hat{p}_s +\hat{\eta}\otimes \vect{A}_2(\hat{s})+I_{\eta}\otimes \vect{A}_1(\hat{s}),\qquad \vect{A}_1(\hat{s})=\frac{1}{2}(\vect{V}(\hat{s})+\vect{V}^{\dagger}(\hat{s})), \qquad \vect{A}_2(s)=\frac{i}{2}(\vect{V}(\hat{s})-\vect{V}^{\dagger}(\hat{s})).
\end{align*}
We note that Eq.~\eqref{eq:openquantum} is also relevant to modelling quantum systems with time-dependent Hamiltonians and with artificial boundary conditions like complex absorbing potentials \cite{ABC2023quantum, Lin-LCU}.

\subsection{General ODEs and PDEs}

\subsubsection{Linear ODEs} \label{sec:linearode}
A system of $J$ first-order linear ODEs can be written in the form 
\begin{align*}
    \frac{d\vect{u}}{dt}=-i\vect{A}(t)\vect{u}, \qquad \vect{u} \in \mathbb{R}^J,
\end{align*}
where here $\vect{A}(t)$ is now a finite-matrix and does not consist of operators acting in infinite-dimensional Hilbert space.  In the hybrid continuous-variable discrete-variable setting where we can assume the clock mode to be represented by a qumode, whereas $\vect{u}$ remains discrete-variable (i.e., represented by qubits), the formulation in Theorem~\ref{thm:three} can be directly applied, so the corresponding time-independent Hamiltonian is 
\begin{align*}
    \vect{\bar{H}}=I_{\eta} \otimes \mathbf{1}\otimes \hat{p}_s+\hat{\eta}\otimes \vect{A}_2(\hat{s})+I_{\eta}\otimes \vect{A}_1(\hat{s}), \qquad \vect{A}_1(\hat{s})=\frac{1}{2}(\vect{A}(\hat{s})+\vect{A}^{\dagger}(\hat{s})), \qquad \vect{A}_2(\hat{s})=\frac{i}{2}(\vect{A}(\hat{s})-\vect{A}^{\dagger}(\hat{s})).
\end{align*}
This can easily extend to the fully discrete setting by discretising $s$ and using the formulation in Section~\ref{sec:qubitformalism}. \\

Higher-order derivative cases can also be easily treated by adding more qubits, following methods in \cite{2023analog}. For instance, we can look at second-order ODES for $\vect{u}$  
\begin{align*}
       & \frac{d^2 \vect{u}}{dt^2}+\vect{\Gamma}(t)\frac{d\vect{u}}{dt}+i\vect{A}(t)\vect{u}=0, \qquad \vect{\Gamma}(t)=\vect{\Gamma}_1(t)-i\vect{\Gamma}_2(t)  \\
       & \vect{\Gamma}_1(t)=(\vect{\Gamma}(t)+\vect{\Gamma}^{\dagger}(t))/2=\vect{\Gamma}_1^{\dagger}(t), \qquad \vect{\Gamma}_2(t)=i(\vect{\Gamma}(t)-\vect{\Gamma}^{\dagger}(t))/2=\vect{\Gamma}_1^{\dagger}(t), 
\end{align*}
We can dilate the system $\vect{u}(t) \rightarrow \vect{y}(t)=\vect{u}(t) \otimes |0\rangle+(d\vect{u}/dt) \otimes |1\rangle$. Then the following equation for $\vect{y}(t)$ can be obtained
    \begin{align} 
    \begin{aligned}
      &   \frac{d \vect{y}}{d t}=\begin{pmatrix}
          d\vect{u}/dt \\
          d^2\vect{u}/dt^2
      \end{pmatrix}=-i\vect{V} \vect{y}, \qquad \vect{V}(t)=\begin{pmatrix}
            0 & iI \\
            \vect{A}(t) & -i\vect{\Gamma}(t)
        \end{pmatrix}, \\
        & \vect{V}(t)=\vect{V}_1(t)-i\vect{V}_2(t), \qquad \vect{A}(t)=\vect{A}_1(t)-i\vect{A}_2(t), \\ 
        & \vect{V}_1(t)=(\vect{V}+\vect{V}^{\dagger})/2=\vect{V}_1^{\dagger}, \vect{V}_2=i(\vect{V}-\vect{V}^{\dagger})/2=\vect{V}_2^{\dagger}, \\
        & \vect{A}_1=(\vect{A}+\vect{A}^{\dagger})/2=\vect{A}_1^{\dagger}, \vect{A}_2=i(\vect{A}-\vect{A}^{\dagger})/2=\vect{A}_2^{\dagger}, \\
        & \vect{V}_1(t)=\frac{1}{2}\begin{pmatrix}
            0 & \vect{A}^{\dagger}+iI \\
            \vect{A}-iI & -i(\vect{\Gamma}-\vect{\Gamma}^{\dagger})
        \end{pmatrix}=\vect{A}_1\otimes \sigma_x+\vect{A}_2 \otimes \sigma_y-\frac{I}{2}\otimes \sigma_y+\vect{\Gamma}_2 \otimes \frac{1}{2}(I-\sigma_z), \\
        & \vect{V}_2(t)=\frac{i}{2}\begin{pmatrix}
            0 & -\vect{A}^{\dagger}+iI \\
            \vect{A}+iI & -i(\vect{\Gamma}+\vect{\Gamma}^{\dagger})
        \end{pmatrix}=-\vect{A}_2 \otimes \sigma_x+\vect{A}_1 \otimes \sigma_y-\frac{I}{2}\otimes \sigma_x+\vect{\Gamma}_1 \otimes \frac{1}{2}(I-\sigma_z). 
        \end{aligned}
    \label{eq:v12equation}
    \end{align}
    Applying Schr\"odingerisation to $d \vect{y}/d t=-i\vect{V}(t) \vect{y}$, we have  the transformation $\vect{y}(t) \rightarrow \tilde{\vect{v}}(t)$ obeying
\begin{align*}
     \frac{d \tilde{\vect{v}}}{d t}+i(\vect{V}_2(t) \otimes \hat{\eta}+\vect{V}_1(t) \otimes I) \tilde{\vect{v}}=\frac{d \tilde{\vect{v}}}{d t}=-i \vect{H}(t) \tilde{\vect{v}}, \qquad \vect{H}(t)=\vect{V}_2(t)\otimes \hat{\eta}+\vect{V}_1(t) \otimes I=\vect{H}^{\dagger}(t).
    \end{align*}
It is straightforward to generalise to higher-order derivatives in $t$. For instance, it is simple to see that if there is at most an $n^{\text{th}}$-order derivative in $t$, then we need $\log_2(n)$ auxiliary qubits. Then the corresponding time-independent Hamiltonian becomes
\begin{align*}
    \vect{\bar{H}}=I_{\eta} \otimes \mathbf{1}\otimes \hat{p}_s+\hat{\eta} \otimes \vect{V}_2(\hat{s})+\mathbf{1} \otimes \vect{V}_1(\hat{s}).
\end{align*}
For a system of non-autonomous nonlinear ODEs, we will treat later in Section~\ref{sec:nonlinearode}.

\subsubsection{Linear homogeneous PDEs with first-order time-derivative} 
Consider a linear homogeneous non-autonomous PDE for $u(x,t)$ with first-order derivative in $t$ (and any order of derivative in $x=(x_1, \cdots, x_D)$) in the following form:
\begin{align} \label{eq:ulinearhomo}
    \frac{\partial u}{\partial t}+\sum_{k=1}^K\sum_{j=1}^D a_{k,j}(t, x_1,...,x_D) \frac{\partial^k u}{\partial x_j^k}+b(t, x_1,...,x_D)u=0,
\end{align}
where the coefficients $a_{2k,j}$ has the sign of $(-1)^{k}$ for stability. \\

To embed this problem into a quantum simulator, without discretising $x$, we need to identify the corresponding $\vect{A}(t)$ acting on infinite-dimensional Hilbert space so Eq.~\eqref{eq:ulinearhomo} can be turned into Eq.~\eqref{eq:nonunitary}, where we can write
\begin{align*}
    \vect{u}(t)=\int dx\, u(t, x)|x\rangle, \qquad x \equiv (x_1, \cdots, x_D). 
\end{align*}
Therefore $|u(t)\rangle$ is a quantum system consisting of $D$ qumodes. We denote the quadrature operators of each qumode as $\hat{x}_j$ and $\hat{p}_j$, $j=1, \cdots, D$ where $[\hat{x}_j,\hat{p}_k]=i\delta_{jk}\mathbf{1}$. If we let $|x_j\rangle$ and $|p_j\rangle$ denote the eigenvectors of $\hat{x}_j$ and $\hat{p}_j$ respectively, then $\langle x|p\rangle=\exp(ix_jp_j)/\sqrt{2\pi}$. The position and momentum eigenstates each form a complete eigenbasis so $\int dx_j |x_j\rangle \langle x_j|=\mathbf{1}=\int dp_j |p_j\rangle \langle p_j|$. Here the spatial degree of freedom can be associated with the position operator $x_j \rightarrow \hat{x}_j$ while the momentum operator $\hat{p}_j$ can be associated with the spatial derivative $\hat{p}_j \leftrightarrow -i\partial/\partial x_j$. See \cite{2023analog} for more details and relevant literature. \\

Then from Eq.~\eqref{eq:ulinearhomo}, one obtains the following equation for $\vect{u}(t)$
\begin{align*}
    \frac{d \vect{u}}{dt}=-i\vect{A}(t, \hat{x}_1,...,\hat{x}_D,\hat{p}_1,...,\hat{p}_D) \vect{u}, \qquad \vect{u}(0)=\vect{u}_0 ,
\end{align*}
where 
\begin{align*}
\vect{A}(t, \hat{x}_1,...,\hat{x}_D, \hat{p}_1,...,\hat{p}_D)  =-\sum_{k=1}^K \sum_{j=1}^D a_{k,j}(t, \hat{x}_1,...,\hat{x}_D)i^{k+1}\hat{p}_j^k-ib(t, \hat{x}_1,...,\hat{x}_D).
\end{align*}
Here $\vect{A}$, acting on $\vect{u}(t)$,  is a linear operator. This is easy to see, since any $\partial^k/\partial x_j^k$ gives a contribution of $(i \hat{p}_j)^k$, and any $x$-dependent coefficient $a(t, x_1,...,x_D)$ gives a contribution $a(t, \hat{x}_1,...,\hat{x}_D)$. For unitary dynamics, this corresponds to when $\vect{A}(t)=\vect{A}^{\dagger}(t)=\vect{H}(t)$ and we can directly apply the formulation in Theorem~\ref{thm:two}. Otherwise we can use Theorem~\ref{thm:three}. Thus in the most general case, the corresponding time-independent Hamiltonian from Theorem~\ref{thm:three} is then
\begin{align*}
    \vect{\bar{H}}=I_{\eta}\otimes \mathbf{1}\otimes \hat{p}_s+\hat{\eta}\otimes \vect{A}_2(\hat{s})+I_{\eta}\otimes \vect{A}_1(\hat{s}).
\end{align*}
For example, we can look at first-order homogeneous time-dependent PDEs for $u(t, x_1,...,x_D)$
\begin{align*}
    \frac{\partial u}{\partial t}+\sum_{j=1}^D a_j(t, x_1,...,x_D) \frac{\partial u}{\partial x_j}+b(t, x_1,...,x_D)u=0, \qquad a_j \in \mathbb{R}, b \in \mathbb{C}. 
\end{align*}
where $a_j \in \mathbb{R}$ to maintain stability of the equation. We can rewrite
\begin{align*}
    & \frac{d \vect{u}}{d t}=-i\vect{A}(t)\vect{u}, \qquad \vect{A}=\sum_{j=1}^D a_j (t, \hat{x}_1,...,\hat{x}_D) \hat{p}_j-ib(t, \hat{x}_1,...,\hat{x}_D), \\
    &\vect{A}(t)=\vect{A}_1(t)-i\vect{A}_2(t), \qquad \vect{A}_1(t)=\frac{1}{2}(\vect{A}(t)+\vect{A}^{\dagger}(t))=\vect{A}_1^{\dagger}(t), \qquad \vect{A}_2(t)=\frac{i}{2}(\vect{A}(t)-\vect{A}^{\dagger}(t))=\vect{A}_2^{\dagger}(t).
\end{align*}
Following Schr\"odingerisation, we have
\begin{align} \label{eq:vtilde}
\begin{aligned}
    & \frac{d \tilde{\vect{v}}}{dt}=-i\vect{H}(t) \tilde{\vect{v}}, \qquad \vect{H}(t)=\vect{A}_2(t) \otimes \hat{\eta}+ \vect{A}_1(t) \otimes I=\vect{H}^{\dagger}, \\
    &  \vect{A}_1(t)=\frac{1}{2}\sum_{j=1}^D\{\hat{p}_j, a_j(t, \hat{x}_1,...,\hat{x}_D)\}+\frac{i}{2}(b^*(t, \hat{x}_1,...,\hat{x}_D)-b(t, \hat{x}_1,...,\hat{x}_D)), \\
    &\vect{A}_2(t)=\frac{i}{2}\sum_{j=1}^D[a_j(t, \hat{x}_1,...,\hat{x}_D),\hat{p}_j]+\frac{1}{2}(b(t, \hat{x}_1,...,\hat{x}_D)+b^*(t, \hat{x}_1,...,\hat{x}_D)).
    \end{aligned}
\end{align}
Thus the corresponding time-independent Hamiltonian becomes
\begin{align} \label{eq:htildefirstorder}
\begin{aligned}
    & \vect{\bar{H}}=I_{\eta}\otimes \mathbf{f} \otimes \hat{p}_s+\hat{\eta} \otimes \left(\frac{i}{2}\sum_{j=1}^D[a_j(\hat{s}, \hat{x}_1,...,\hat{x}_D),\hat{p}_j]+\frac{1}{2}(b(\hat{s}, \hat{x}_1,...,\hat{x}_D)+b^*(\hat{s}, \hat{x}_1,...,\hat{x}_D))\right) \\
    &+I_{\eta} \otimes \left(\frac{1}{2}\sum_{j=1}^D\{\hat{p}_j, a_j(\hat{s}, \hat{x}_1,...,\hat{x}_D)\}+\frac{i}{2}(b^*(\hat{s}, \hat{x}_1,...,\hat{x}_D)-b(\hat{s}, \hat{x}_1,...,\hat{x}_D))\right)
    \end{aligned}
\end{align}
where we made the replacement $t \rightarrow \hat{s}$ in $a(t, \hat{x}_1, \cdots, \hat{x}_D)$ and $b(t, \hat{x}_1, \cdots, \hat{x}_D)$. Note that since $\hat{s}$ in $\vect{\bar{H}}$ in Eq.~\eqref{eq:htildefirstorder} only acts on the clock mode, the operation commutes with $\hat{p}_j$, $j=1,\cdots, D$. Thus if $a_j(t, x_1, \cdots, x_D)$ has terms that only depend on $t$, then those terms vanish in the commutation relation, and they will not contribute to the term in $\vect{\bar{H}}$ containing $\hat{\eta}$. However, if $b(t, x_1, \cdots, x_D)$ has terms that only depend on $t$, then these terms do still show up in the time-independent Hamiltonian containing $\hat{\eta}$. \\

The simplest example is considering the convection equation with constant $a_j$ in space, which varies linearly in time so $a_j=k_jt$, $k_j \in \mathbb{R}$, and $b=0$. The corresponding time-independent Hamiltonian is then
\begin{align*}
    \vect{\bar{H}}=I_{\eta} \otimes \mathbf{1} \otimes \hat{p}_s+I_{\eta} \otimes \sum_{j=1}^D \hat{p}_j \otimes k_j \hat{s} ,
\end{align*}
which only involves Gaussian operations. A slightly more involved example is that of a $D$-dimensional Liouville equation where $a_j=k_j x_j t$ and $b=lt$ where $k_j, l \in \mathbb{R}$. In this case, $\vect{\bar{H}}$ reduces to 
\begin{align*}
    \vect{\bar{H}}=I_{\eta} \otimes \mathbf{1} \otimes \hat{p}_s+ \hat{\eta} \otimes \frac{1}{2}\left(-\sum_{j=1}^D \mathbf{1} \otimes k_j \hat{s} + \mathbf{1} \otimes l\hat{s}\right) +I_{\eta} \otimes \frac{1}{2}\sum_{j=1}^D (\hat{x}_j\hat{p}_j+\hat{p}_j\hat{x}_j) \otimes k_j \hat{s},
\end{align*}
where the time-independent Hamiltonian has a non-Gaussian term.  \\

We can also look at other examples. For instance, consider the $D$-dimensional general heat equation with a source term 
\begin{align*}
    \frac{\partial u}{\partial t}-\sum_{i=1}^D\frac{\partial}{\partial x_i}\left(\sum_{j=1}^D D_{ij}(t, x_1,...,x_D)\frac{\partial u}{\partial x_j}\right)+V(t, x_1,...,x_D)u=0, \qquad D_{ij}(t, x_1,...,x_D)>0, \, V(t, x_1,...,x_D) \in \mathbb{R},
\end{align*}
where $D_{ij}$ are the components of a symmetric positive definite  diffusion matrix. In this case the corresponding time-independent Hamiltonian is 
\begin{align*}
    \vect{\bar{H}}=I_{\eta} \otimes \mathbf{1} \otimes \hat{p}_s+\hat{\eta} \otimes \vect{A}_2(\hat{s}), \qquad \vect{A}_2=\sum_{i,j=1}^D \hat{p}_iD_{ij}(\hat{s}, \hat{x}_1,...,\hat{x}_D)\hat{p}_j+V(\hat{s},\hat{x}_1,...,\hat{x}_D). 
\end{align*}
We can also consider the Fokker-Planck equation, which gives the time evolution of the probability density function $u(t,x_1,...,x_D)$ of the velocity of a particle under the impact of drag and random forces. In the Fokker-Planck equation below, $\mu_j$ are the components of the drift vector and $D_{j}$ are the components of the diffusion vector
\begin{align*}
\frac{\partial u}{\partial t}+\sum_{j=1}^D \frac{\partial}{\partial x_j}(\mu_j (t, x_1,...,x_D) u)-\sum_{j=1}^D \frac{\partial^2}{\partial x_j^2}(D_{j}(t, x_1,...,x_D) u)
=0, \qquad \mu_j, D_{j} \in \mathbb{R}.
\end{align*}
The drift vector and diffusion coefficients $D_{j}$ only have real-valued components, and $D_j>0$. The corresponding time-independent Hamiltonian then becomes
\begin{align*}
     \vect{\bar{H}}=I_{\eta} \otimes \mathbf{1} \otimes \hat{p}_s+\hat{\eta} \otimes \vect{A}_2(\hat{s})+I_{\eta} \otimes \vect{A}_1(\hat{s}) ,
\end{align*}
with
\begin{align*}
    & \vect{A}_1(\hat{s})=\frac{1}{2}\sum_{j=1}^D\{\hat{p}_j,\mu_j(\hat{s}, \hat{x}_1,...,\hat{x}_D)\}-\frac{i}{2}\sum_{j=1}^D[\hat{p}^2_j, D_{j}(\hat{s}, \hat{x}_1,...,\hat{x}_D)]  \\
    & \vect{A}_2(\hat{s})=\frac{i}{2}\sum_{j=1}^D [\mu_j(\hat{s}, \hat{x}_1,...,\hat{x}_D),\hat{p}_j]-\frac{1}{2}\sum_{j=1}^D\{\hat{p}_j^2,  D_{j}(\hat{s}, \hat{x}_1,...,\hat{x}_D)\} .
\end{align*} 
Suppose we have additive noise $D_j=a_j$ and when the drift is linear in time 
$$
\mu_j=c_jt, \qquad  \vect{A}=-\sum_{j=1}^D c_j \hat{p}_j \hat{s}-i\sum_{j=1}^D a_j\hat{p}_j^2 ,
$$
and the corresponding time-independent Hamiltonian in $D+2$ qumodes is 
\begin{align*}
    \vect{\bar{H}}=I_{\eta} \otimes \mathbf{1} \otimes \hat{p}_s+\hat{\eta} \otimes \frac{1}{2}\sum_{j=1}^D \left(c_j I-a_j\hat{p}_j^2\right)\otimes I_s+I_{\eta} \otimes \sum_{j=1}^D c_j\hat{p}_j\otimes \hat{s}.
\end{align*}
\subsubsection{Linear inhomogeneous PDEs with first-order time-derivative}
A linear inhomogeneous PDE for $u(x,t)$ with first derivative in $t$  with an inhomogeneous term $f \neq 0$ can be written as
\begin{align} \label{eq:ulinear2}
    \frac{\partial u}{\partial t}+\sum_{k=1}^K\sum_{j=1}^D a_{k,j}(t, x_1,...,x_D) \frac{\partial^k u}{\partial x_j^k}+b(t, x_1,...,x_D)u=f(t, x_1,...,x_D),
\end{align}
where the coefficients satisfy  $\text{sign}(a_{2k,j})=(-1)^k$. We can rewrite this as 
\begin{align*}
    \frac{d \vect{u}}{dt}=-i\vect{A}(t, \hat{x}_1,...\hat{x}_D, \hat{p}_1,...,\hat{p}_D)\vect{u}+\vect{f}(t, \hat{x}_1,...,\hat{x}_D).
\end{align*}
In the case where the time-dependence of the inhomogeneous term can be factored out like $f(t, x_1, \cdots, x_D)=\exp(g_1(t))g_2(x_1, \cdots, x_D)$, then we only need to augment the system by a single qubit and perform the dilation $\vect{y} \rightarrow \vect{u} \otimes |0\rangle+\vect{f} \otimes |1\rangle$ and we have
\begin{align}
\begin{aligned}
  &  \frac{d \vect{y}}{d t}= \frac{d}{d t}\begin{pmatrix}
        \vect{u} \\
        \vect{f}
    \end{pmatrix}=-i\vect{B}(t) \vect{y}, \qquad \vect{B}(t)=\begin{pmatrix}
        \vect{A}(t)& iI \\
        0 & idg_1(t)/dt
    \end{pmatrix}, \\
    & \vect{A}(t)=\vect{A}_1(t)-i\vect{A}_2(t),  \qquad \vect{B}(t)=\vect{B}_1(t)-i\vect{B}_2(t),  \\
    & \vect{B}_1(t)=(\vect{B}(t)+\vect{B}^{\dagger}(t))/2=\vect{B}_1^{\dagger}(t), \qquad \vect{B}_2(t)=i(\vect{B}(t)-\vect{B}^{\dagger}(t))/2=\vect{B}_2^{\dagger}(t), \\
    & \vect{B}_1(t)=\begin{pmatrix}
        \vect{A}_1(t) & iI/2 \\
        -iI/2 & \frac{i}{2}\frac{d(g_1-g_1^*)}{dt}
    \end{pmatrix}=\vect{A}_1(t) \otimes \frac{1}{2}(I+\sigma_z)+\frac{I}{2} \otimes \sigma_y+ \frac{i}{2}\frac{d(g_1-g_1^*)}{dt}\mathbf{1}\otimes \frac{1}{2}(I-\sigma_z), \\
    & \vect{B}_2(t)=\begin{pmatrix}
        \vect{A}_2(t) &  -I/2 \\
         -I/2 & \frac{1}{2}\frac{d(g_1+g_1^*)}{dt}
    \end{pmatrix}=\vect{A}_2(t) \otimes \frac{1}{2}(I+\sigma_z)-\frac{I}{2} \otimes \sigma_x+ \frac{1}{2}\frac{d(g_1+g_1^*)}{dt}\mathbf{1}\otimes \frac{1}{2}(I-\sigma_z), 
    \end{aligned}
\end{align}
where $I=\begin{pmatrix}
    1 & 0 \\
    0 & 1
\end{pmatrix}$. Note that the functions $dg_1/dt$ and $dg_1^*/dt$ can be computed classically offline. Then the time-dependent Hamiltonian after Schr\"odingerisation acting on $D+1$ qumodes and a single qubit is 
\begin{align*}
    \vect{H}(t)=\hat{\eta} \otimes \vect{B}_2(t)+I_{\eta} \otimes \vect{B}_1(t).
    \end{align*}
The corresponding time-independent Hamiltonian from Theorem~\ref{thm:three} is then
\begin{align*}
    \vect{\bar{H}}=I_{\eta} \otimes \mathbf{1}\otimes \hat{p}_s+\hat{\eta} \otimes \vect{B}_2(\hat{s})+\mathbf{1} \otimes \vect{B}_1(\hat{s}), 
\end{align*}
which is a Hamiltonian acting on a system of $D+2$ qumodes and a single qubit. 

\subsubsection{Linear homogeneous PDEs with second-order time derivatives and beyond}
It is also straightforward to extend to higher-order derivatives in time, following the same methods in \cite{2023analog}. We look at the following homogeneous linear PDE (includes for example the wave equation as a special case)
\begin{align} \label{eq:t2}
    \frac{\partial^2 u}{\partial t^2}+c_0(t, x_1,...,x_D) \frac{\partial u}{\partial t}+\sum_{j=1}^d c_j(t, x_1,...,x_D) \frac{\partial^2 u}{\partial x_j \partial t}+\sum_{l=1}^L\sum_{j=1}^D a_{j,l}(t, x_1,...,x_D) \frac{\partial^l u}{\partial x_j^l}+b(t, x_1,...,x_D)u=0,
\end{align}
where the coefficients $a_{j,l}<0$ when $l$ is an even integer.
One can rewrite Eq.~\eqref{eq:t2} into the following equation for $\vect{u}$
\begin{align*}
       & \frac{d^2 \vect{u}}{dt^2}+\vect{\Gamma}(t)\frac{d\vect{u}}{dt}+i\vect{A}(t)\vect{u}=0, \qquad \vect{\Gamma}(t) \equiv \vect{c}_0(t)+i\sum_{j=1}^D \vect{c}_j(t)(I^{\otimes j-1}\otimes \hat{p}_j\otimes I^{\otimes D-j}), \qquad \vect{\Gamma}(t)=\vect{\Gamma}_1(t)-i\vect{\Gamma}_2(t) , \\
       & \vect{\Gamma}_1(t)=(\vect{\Gamma}(t)+\vect{\Gamma}^{\dagger}(t))/2=\vect{\Gamma}_1^{\dagger}(t), \vect{\Gamma}_2(t)=i(\vect{\Gamma}(t)-\vect{\Gamma}^{\dagger}(t))/2=\vect{\Gamma}_1^{\dagger}(t),  \\
       & \vect{A}(t)=-i\left(\sum_{l=1}^L\sum_{j=1}^D a_{j,l}(t, \hat{x_1},...,\hat{x}_D)(i\hat{p}_j)^l+b(t, \hat{x}_1,...,\hat{x}_D)\right)
\end{align*}
where $\vect{c}_i(t)=c_i(t, \hat{x_1},...,\hat{x}_D)$.  From Section~\ref{sec:linearode}, we can dilate the system $\vect{u}(t) \rightarrow \vect{y}(t)=\vect{u}(t) \otimes |0\rangle+(d\vect{u}/dt) \otimes |1\rangle$. The formulation with Section~\ref{sec:linearode} is the same, except here $\vect{\Gamma}(t)$ and $\vect{A}(t)$ act on infinite dimensional Hilbert space. Then the corresponding time-independent Hamiltonian becomes
\begin{align*}
    \vect{\bar{H}}=I_{\eta} \otimes \mathbf{1}\otimes \hat{p}_s+\hat{\eta} \otimes \vect{V}_2(\hat{s})+\mathbf{1} \otimes \vect{V}_1(\hat{s}), 
\end{align*}
with $\vect{V}_1(\hat{s}), \vect{V}_1(\hat{s})$ given in Eq.~\eqref{eq:v12equation}. This is easily generalisable to higher-order derivatives in $t$. If there is at most an $n^{\text{th}}$-order derivative in $t$, then we need $\log_2(n)$ auxiliary qubits.\\

For instance, consider a $D$-dimensional wave equation of the form
\begin{align*}
    \frac{\partial^2 u}{\partial t^2}-\sum_{j=1}^D a_j(t, x_1,...,x_D) \frac{\partial^2 u}{\partial x_j^2}=-V(t, x_1,...,x_D)u, \qquad u(0, x), \qquad x_j \in \mathbb{R}^D, a_j \in \mathbb{R}^+, V \in \mathbb{R}. 
\end{align*}
In this case
\begin{align*}
& \vect{V}_1(\hat{s})=\vect{A}_1 (\hat{s}) \otimes \sigma_x+\left(\vect{A}_2(\hat{s})-\frac{I}{2}\right)\otimes \sigma_y, \qquad \vect{V}_2(t)=-\vect{A}_1(\hat{s}) \otimes \sigma_y+\left(\vect{A}_2(\hat{s})+\frac{I}{2}\right)\otimes \sigma_x,  \\
& \vect{A}_1(\hat{s})=\frac{i}{2}\sum_{j=1}^D [\hat{p}^2_j,a_j(\hat{s}, \hat{x}_1,...,\hat{x}_D)], \qquad \vect{A}_2(\hat{s})=\frac{1}{2}\sum_{j=1}^D \{\hat{p}_j^2,a_j(\hat{s}, \hat{x}_1,...,\hat{x}_D)\}+V(\hat{s}, \hat{x}_1,...,\hat{x}_D).
\end{align*}

\subsubsection{Nonlinear ODEs} \label{sec:nonlinearode}
Suppose we have a system of $J$ nonlinear ODEs for $\gamma_n(t)$ where each $\gamma_n(t)$ obeys a nonlinear ODE
\begin{align*}
    \frac{d\gamma_n(t)}{dt}=F_n(t, \gamma_0,...,\gamma_{J-1}), \qquad n=0,...,J-1 , 
\end{align*}
and $F_n$ is a nonlinear function of its arguments. Here we introduce $J$ auxiliary variables $q_0,...,q_{J-1}$ and a function $\Phi(t, q_0,...,q_{J-1})$ defined by 
\begin{align*}
    \Phi(t,q_0,...,q_{J-1})=\prod_{n=0}^{J-1}\delta(q_n-\gamma_n(t)), \qquad q_n \in \mathbb{R}.
\end{align*}
Then it is simple to check that $\Phi(t,q_0,...,q_{N-1})$ satisfies, in the weak sense \cite{JinOsher, jin2022quantum}, the \textit{linear} $N+1$-dimensional PDE
\begin{align} \label{eq:nonlinearphi}
    \frac{\partial \Phi(t,q_0,...,q_{J-1})}{\partial t}+\sum_{n=0}^{J-1}\frac{\partial}{\partial q_n}(F_n(t, q_0,...,q_{J-1})\Phi(t,q_0,...,q_{N-1}))=0.
\end{align}
We can treat this linear PDE with the same methods as before. We can define $\hat{q}_n$ to be a quadrature operator whose eigenstate is $|q_n\rangle$ and its conjugate quadrature operator as $\hat{Q}_n$. Thus $[\hat{q}_n, \hat{Q}_n]=i$ and we can use the correspondence $-i\hat{Q}_n \leftrightarrow \partial/\partial q_n$. Then defining  $\vect{\Phi}(t) \equiv \int \Phi(t,q_0,...,q_{N-1})|q_0,...,q_{N-1}\rangle dq_0...dq_{N-1}$, we see that Eq.~\eqref{eq:nonlinearphi} becomes
\begin{align*}
    \frac{d\vect{\Phi}(t)}{dt}=-i\vect{A}(t)\vect{\Phi}(t), \qquad \vect{A}(t)=\sum_{n=0}^{N-1}\hat{Q}_n F_n(t, \hat{q}_0,...,\hat{q}_{N-1}).
\end{align*}
Using Schr\"odingerisation $\vect{\Phi}(t) \rightarrow \tilde{\vect{v}}(t)$ we obtain
\begin{align*}
   &  \frac{d \tilde{\vect{v}}}{dt}=-i\vect{H}(t)\tilde{\vect{v}}, \qquad \vect{H}(t)=\vect{A}_2(t)\otimes \hat{\eta}+\vect{A}_1(t)\otimes I=\vect{H}^{\dagger}(t) , \\
   & \vect{A}_1(t)=\frac{1}{2}\sum_{n=0}^{J-1}\{\hat{Q}_n, F_n(t, \hat{q}_0,...,\hat{q}_{J-1})\}, \qquad \vect{A}_2(t)=\frac{i}{2}\sum_{n=0}^{J-1}[\hat{Q}_n, F_n(t, \hat{q}_0,...,\hat{q}_{J-1})] ,
\end{align*}
where $\vect{H}(t)$ is a time-dependent Hamiltonian on $J+1$ qumodes. From Theorem~\ref{thm:three}, the corresponding time-independent Hamiltonian on $J+2$ qumodes is then 
\begin{align*}
    \vect{\bar{H}}=I_{\eta}\otimes \mathbf{1}\otimes \hat{p}_s+\hat{\eta}\otimes \vect{A}_2(\hat{s})+I_{\eta}\otimes \vect{A}_1(\hat{s}).
\end{align*}
This method can also be applied to an approximation of general nonlinear PDEs, which will first be discretised spatially to form a system of ODEs.  Here $J$ can be larger or smaller depending on the discretisation technique. \\

\subsubsection{Nonlinear PDEs}

There are some nonlinear PDEs where one does not need to take spatial discretisation first to form a system of nonlinear ODEs. Such problems belong to the cases where we can find a linear representation in higher dimension which is linear and equivalent to the original PDEs.  These include the scalar hyperbolic and Hamilton-Jacobi equations \cite{2022nonlinearpde, 2023nonlinearjcp}. \\

{\it Scalar nonlinear hyperbolic PDEs}

A $D$-dimensional scalar nonlinear hyperbolic PDEs for $u(t, x_1, \cdots, x_D)$ with explicit time-dependence can be written
\begin{align}\label{hyp-PDE2}
  \frac{\partial u}{\partial t} + \sum_{j=1}^D F_j(t, x_1, \cdots, x_D, u)\frac{\partial u}{\partial x_j} + Q(t, x_1,...,x_D,u)=0, \qquad  u(0,x)=u_0(x) ,
  \end{align}
  Following the same formalism as in \cite{2022nonlinearpde}, we introduce a level set function $\phi(t,x_1,...,x_D,\chi)$, where $\chi \in \mathbb{R}$. It is defined so its zero level set is the solution
  $u$:
  \begin{equation*}
        \phi(t,x_1,...,x_D,\chi)=0  \quad {\text{at}} \quad \chi=u(t,x_1,...,x_D) .
  \end{equation*}
  By defining
  \begin{align*}
    \Psi(t, x_1,...,x_D, \chi)=\delta(\phi(t,x_1,...,x_D,\chi)),
\end{align*}
it can shown that $\Psi$ (like $\phi$) evolves according to the linear PDE in one extra dimension
\begin{align*}
     & \frac{\partial \Psi}{\partial t} + \sum_{j=1}^{D} F_j(t, x_1, \cdots, x_D, \chi)\frac{\partial \Psi}{\partial x_j} +Q(t, x_1, \cdots, x_D, \chi) \frac{\partial \Psi}{\partial \chi}= 0\\
      & \Psi(0,x_1,...,x_D,\chi)=\delta(\chi-u_0(x_1,...,x_D)).
\end{align*}
If $\vect{\Psi}(t) \equiv \int dx_1 \cdots dx_D d \chi \Psi(t, x_1, \cdots, x_D, \chi)|x_1, \cdots, x_D, \chi \rangle$, then it satisfies the following linear ODE 
\begin{align*}
\frac{d \vect{\Psi}}{dt}=-i\vect{A}(t)\vect{\Psi}, \qquad \vect{A}(t)=\sum_{j=1}^D F_j (t, \hat{x}_1, \cdots, \hat{x}_D, \hat{\chi}) \hat{p}_j+Q(t, \hat{x}_1,...,\hat{x}_D, \hat{\chi}) \hat{\zeta} ,
\end{align*}
where $i\hat{p}_j=\partial/\partial x_j$, $i\hat{\zeta}=\partial/\partial \chi$. We can apply Schr\"odingerisation to this system $\vect{\Psi} \rightarrow \tilde{\vect{v}}$ to get 
\begin{align*}
   & \frac{d\tilde{\vect{v}}}{dt}=-i\vect{H}(t)\tilde{\vect{v}}, \qquad \vect{H}(t)=\vect{A}_2(t) \otimes \hat{\eta}+\vect{A}_1(t) \otimes I=\vect{H}^{\dagger} \\
   & \vect{A}_1(t)=\sum_{j=1}^DF_j(t, \hat{x}_1, \cdots, \hat{x}_D, \hat{\chi})\hat{p}_j+\frac{1}{2}\sum_{j=1}^D\{\hat{\zeta}, Q(t, \hat{x}_1,...,\hat{x}_D, \hat{\chi})\}, \qquad \vect{A}_2(t)=\frac{i}{2}\sum_{j=1}^D[ \hat{\zeta}, Q(t, \hat{x}_1,...,\hat{x}_D, \hat{\chi})].
\end{align*} 
Here $\vect{\Psi}$ consists of $D+1$ modes and the Hamiltonian $\vect{H}$ operates on $D+2$ modes. Then from Theorem~\ref{thm:three}, the corresponding time-independent Hamiltonian acting on $D+3$ qumodes is then
\begin{align*}
\begin{aligned}
    \vect{\bar{H}}&=I_{\eta} \otimes \mathbf{1} \otimes \hat{p}_s+\hat{\eta} \otimes \frac{i}{2}\sum_{j=1}^D[\hat{\zeta}, Q(\hat{s}, \hat{x}_1,...,\hat{x}_D, \hat{\chi})] \\
    &\qquad +I_{\eta} \otimes \left(\sum_{j=1}^D F_j(\hat{s}, \hat{x}_1, \cdots, \hat{x}_D, \hat{\chi})\hat{p}_j+\frac{1}{2}\sum_{j=1}^D\{\hat{\zeta}, Q(\hat{s}, \hat{x}_1,...,\hat{x}_D, \hat{\chi})\}\right). 
    \end{aligned}
\end{align*}

{\it Hamilton-Jacobi equations}

A similar methodology applies to nonlinear Hamilton-Jacobi equations. If $S$ obeys the nonlinear Hamilton-Jacobi PDE and $u=\grad S$, then $u$ solves the nonlinear hyperbolic system of conservation equations in gradient form
\begin{align*}
    \frac{\partial u_j}{\partial t}+\frac{\partial}{\partial x_j} H(t, x_1,...,x_D, u_1,...,u_D)=0, \qquad \forall j=1,...,D, \qquad H \in \mathbb{R}
\end{align*}
where the function $H=H(t, x_1,...,x_D, p_1,...,p_2)$ is a Hamiltonian. We now introduce $D$ variables $\chi_1,...,\chi_D \in \mathbb{R}$. 
Then one can define a system of level set functions $\phi=(\phi_1,...,\phi_D)$ where each $\phi_j(t,x_1,...,x_D, \chi_1,...,\chi_D)=0$ at $\chi_j=u_j$. We can then define 
\begin{align*}
    \Psi(t,x_1,...,x_D, \chi_1,...,\chi_D)=\prod_{j=1}^D \delta(\phi_j(t,x_1,...,x_D, \chi_1,...,\chi_D)), \qquad \Psi(t=0)=\prod_{j=1}^D \delta(\chi_j-u_j(t=0)).
\end{align*}
which obeys the following linear transport PDE
\begin{align*}
     \frac{\partial \Psi}{\partial t}+\sum_{j=1}^D \frac{\partial H}{\partial \chi_j}\frac{\partial \Psi}{\partial x_j}-\sum_{j=1}^D\frac{\partial H}{\partial x_j}\frac{\partial \Psi}{\partial \chi_j}=0.
\end{align*}
Then 
\begin{align*}
    \frac{d \vect{\Psi}}{dt}=-i\vect{H}(t)\vect{\Psi}, \qquad \vect{H}(t)=i\sum_{j=1}^D(\hat{\zeta}_jH(t, \hat{x}_1,...,\hat{x}_D, \hat{p}_1,...,\hat{p}_D) \hat{p}_j-\hat{p}_jH(t, \hat{x}_1,...,\hat{x}_D, \hat{p}_1,...,\hat{p}_D)\hat{\zeta}_j)=\vect{H}^{\dagger}(t).
\end{align*}
which acts on $2D$ qumodes. Since $\vect{H}(t)=\vect{H}^{\dagger}(t)$ because $H$ is interpreted as a Hamiltonian and each of the quadratures is hermitian, no Schr\"odingerisation is required and we can directly apply Theorem~\ref{thm:three} to obtain the corresponding time-independent Hamiltonian on $2D+1$ qumodes
\begin{align*}
    \vect{\bar{H}}=\mathbf{1}\otimes \hat{p}_s+\vect{H}(\hat{s}), \qquad \vect{H}(\hat{s})=i\sum_{j=1}^D(\hat{\zeta}_jH(\hat{s}, \hat{x}_1,...,\hat{x}_D, \hat{p}_1,...,\hat{p}_D) \hat{p}_j-\hat{p}_jH(\hat{s}, \hat{x}_1,...,\hat{x}_D, \hat{p}_1,...,\hat{p}_D)\hat{\zeta}_j).
\end{align*}
For both the scalar hyperbolic and Hamilton-Jacobi equations, observables of the PDE can be recovered from $\vect{\Psi}$ using techniques in \cite{2022nonlinearpde}.

We point out that the level set methods for the above quasi-linear PDEs produce {\it multi-valued} solutions \cite{JinOsher}, not the viscosity solutions \cite{CL83, Lax-notes}. The multi-valued solution corresponds to solutions obtained by the method of characteristics,
and has applications in semi-classical limit of quantum mechanics, geometric optics, and seismic waves (for multiple arrivals), and high-frequency limits of general linear waves, for examples \cite{JinOsher}.

 \section{Numerical experiments} \label{sec:numerical}
 
We use a classical computer to emulate the above quantum algorithms for a few time-dependent dynamics with easily accessible solutions for validation: a closed Hamiltonian dynamics, a dynamics for open quantum system characterised via a non-Hermitian linear operator, as well as a 1D Fokker-Planck equation. Emulating high-dimensional PDEs and quantifying the actual quantum resources necessary in practice are interesting, but are beyond the scope of this current work, and we shall leave this for future investigations.\\
 
\subsection{Simulation details}
 
 \smallskip
 To simulate Eqs.~\eqref{w-eqn} and \eqref{eq:sigmat}, we use Galerkin's method to project the $s$-qumode to $N_s$ basis functions with parameter $\varsigma_s$ (which amounts to using $\log_2(N_s)$ additional qubits); similarly, for the $\eta$-qumode for Schr{\"o}dingerisation, we project the wave function into $N_\eta$ basis functions with parameter $\varsigma_\eta$. If the original system is discrete with dimension $N_u$, there is no need to perform any approximation; in case, the original system is itself a PDE (for e.g., the Fokker-Planck equation below), we similarly project this part into $N_u$ basis functions with parameter $\varsigma_u$. Please refer to Appendix~\ref{app::numerics} for details about the family of basis functions used as well as the meaning of $\varsigma_s$ (as well as $\varsigma_\eta$, $\varsigma_u$).  We choose the projection operator $\hat{P}_{\text{imp}} = \int_{0}^{2}d\xi\ \ket{\xi}\bra{\xi}$ (see Theorem~\ref{thm:three}); similarly, we approximate this projection operator using the basis for $\eta$-qumode.

\subsection{A Hamiltonian system}

We consider a simple example for numerical validation. The dimension $n = 2$ and the time-dependent Hamiltonian is $H(s) = a g(s) \vect{h}$ where $\vect{h} = \frac{1}{2}\sigma_X + \frac{1}{3} \sigma_Y + \frac{1}{4} \sigma_Z$, $a > 0$ is a parameter, $g(s)$ is a polynomial of $s$, and $\sigma_X, \sigma_Y$ and $\sigma_Z$ are Pauli matrices.  We choose the initial state to be $\vect{y}_0 = \begin{bsmallmatrix}\frac{1}{\sqrt{2}}\\ \frac{1}{\sqrt{2}}\end{bsmallmatrix}$. The exact solution is accessible for easy validation. 
To simulate \eqref{w-eqn}, we use $N_s = 32$ and $\varsigma_s=0.2$ (which amounts to using $\log_2(32)=5$ additional qubits, and please refer to Appendix~\ref{app::numerics} for details).  In Figure~\ref{fig::hami}, the discrete data points are \emph{errors using the quantum algorithm} in Theorem~\ref{thm:three}, whereas dashed lines represent the \emph{theoretical prediction} of the error by Lemma~\ref{lemma::error_quantum}. When $\omega$ is small enough, clearly, the theoretical estimate agrees with the actual error of $O(\omega^2)$ in general.
 This example not only demonstrate the validity of the first protocol of quantum algorithms by discarding the clock mode ($s$-mode), as well as the asymptotic error scaling.
 
\begin{figure}
\begin{subfigure}[b]{0.31\textwidth}
\includegraphics[width=\textwidth]{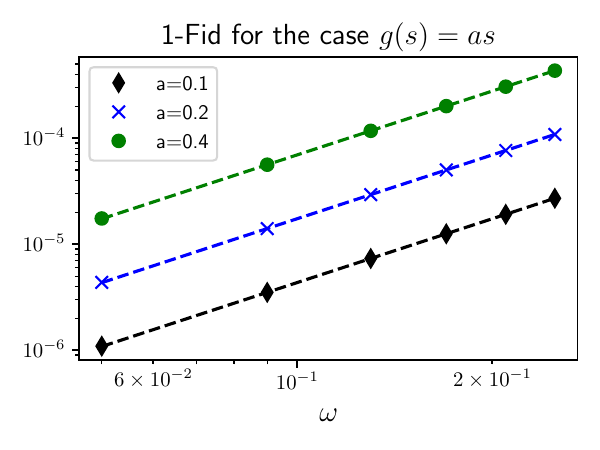}
\caption{Linear function with $T=\nicefrac{1}{2}$}
\end{subfigure}
~
\begin{subfigure}[b]{0.31\textwidth}
\includegraphics[width=\textwidth]{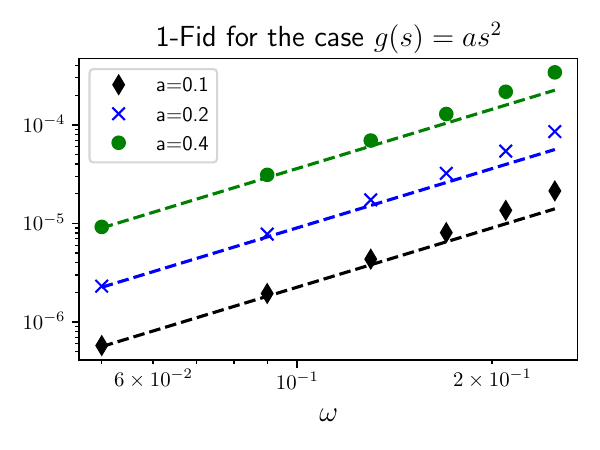}
\caption{Quadratic function with $T=\nicefrac{3}{5}$}
\end{subfigure}
~
\begin{subfigure}[b]{0.31\textwidth}
\includegraphics[width=\textwidth]{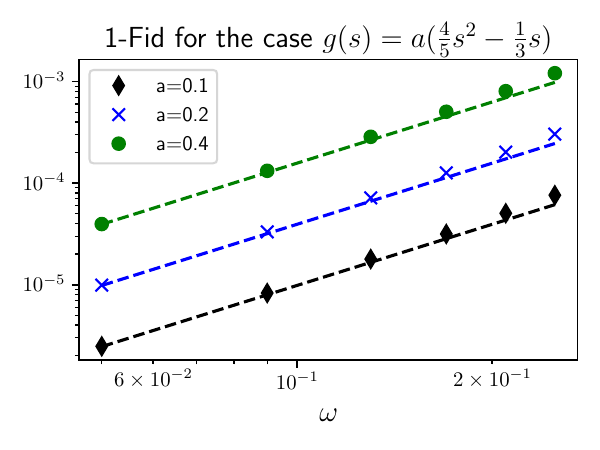}
\caption{Quadratic function with $T=\nicefrac{6}{5}$}
\end{subfigure}
\caption{The error quantified by $1-\text{Fid}$ with respect to $\omega$ for various time functions $g$ and magnitude $a$. The discrete data points are \emph{errors using the quantum algorithm} in Theorem~\ref{thm:three}, whereas dashed lines represent the \emph{theoretical prediction} of the error by Lemma~\ref{lemma::error_quantum}, namely, we draw the line $\mathsf{C}\omega^2$.}
\label{fig::hami}
\end{figure}

\subsection{An open quantum system (a general linear ODE)}

 We consider a two-dimensional ODE with 
 \begin{align}
 \label{eg::ode}
 A_1(s) = a g(s) \begin{bmatrix} \frac{3}{5} & 0 \\ 0 & \frac{7}{5} \end{bmatrix},\qquad A_2(s) = a g(s) \begin{bmatrix} \frac{5}{4} & i \\ -i & \frac{5}{4} \end{bmatrix},\qquad u_0 = \begin{bmatrix}\frac{\sqrt{2}}{\sqrt{3}} \\ \frac{1}{\sqrt{3}}\end{bmatrix},
 \end{align}
 where $a$ characterises the magnitude and $g$ characterises the time-dependence. 
We use an estimator $\widehat{O} = \sigma_Z$ to observe the accuracy that the quantum algorithm achieves. We used $N_s = 64$, $\varsigma_s=0.2$, $N_\eta = 64$, $\varsigma_\eta=2.0$ (see Appendix~\ref{app::numerics}), which amounts to using $6$ qubits for time dilation and $6$ qubits for Schr{\"o}dingerisation. 
 As is clear in Figure~\ref{fig::ode}, with accurate enough $\omega$, the quantum algorithms produces reasonably accurate approximation of the exact value as long as $\omega$ is sufficiently small.
 
 \begin{figure}
 \begin{subfigure}[b]{0.45\textwidth}
 \includegraphics[width=\textwidth]{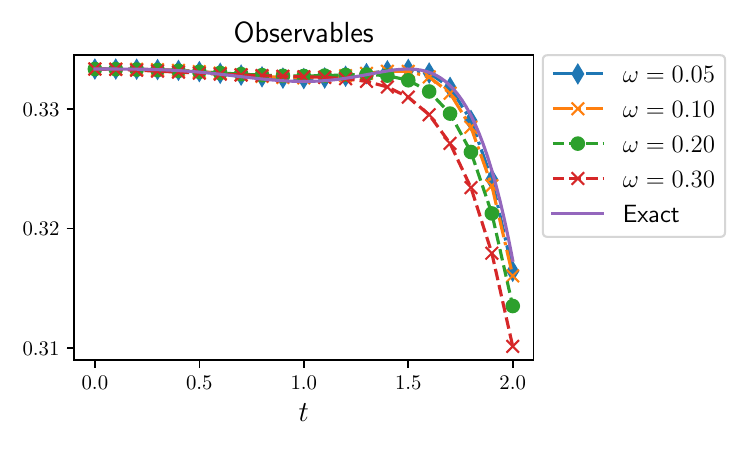}
 \caption{$a=0.3$, $g(s) = s^2 - s$}
 \end{subfigure}
 ~
  \begin{subfigure}[b]{0.45\textwidth}
 \includegraphics[width=\textwidth]{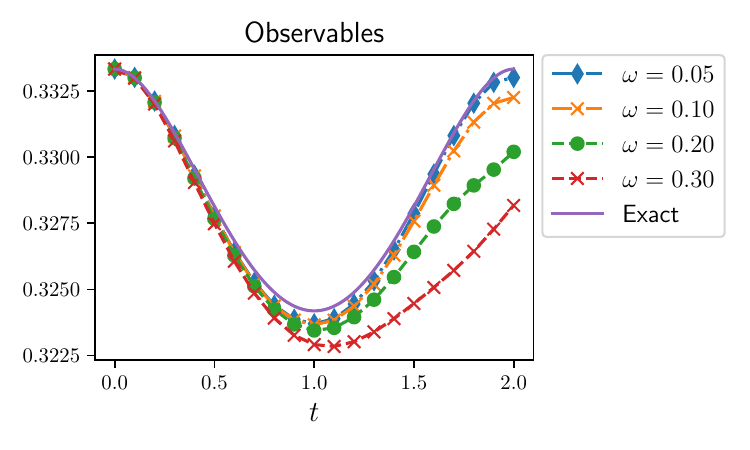}
  \caption{$a=0.3$, $g(s) = 1-s$}
 \end{subfigure}
 \caption{We consider a two-dimensional ODE in \eqref{eg::ode} and visualise observables $\expval{\sigma_Z}$ along the time for the normalised state $\ket{\vect{u(t)}}$ for various time-dependence $g$ and $\omega$. 
 }
 \label{fig::ode}
 \end{figure}

\subsection{1D Fokker-Planck equation.}

We consider the time-dependent Fokker-Planck equation
\begin{align*}
\partial_t \vect{q}(t,x) = g(t) \div \big(x \vect{q}(t,x)\big) + \beta(t) \Delta \vect{q}(t,x) =: -i \vect{A}  \vect{q}(t,x),
\end{align*}
which characterises the evolution of the probability density function for a time-dependent Ornstein-Uhlenbeck process
$dX(t)= - \partial_x U\big(t, X(t)\big)\ dt + \sqrt{2 \beta(t)}\ d W(t)$, where the time-dependent potential $U(t,x) = g(t) \frac{x^2}{2}$.
By Theorem~\ref{thm:three}, the Hamiltonian conservation part $\vect{A}_1$ and the interaction part $\vect{A}_2$ are 
\begin{align*}
\vect{A}_1 = - g(t) \hat{x}\ \hat{p} + i \frac{g(t)}{2}, \qquad \vect{A}_2 = - \frac{g(t)}{2} + \beta(t) \hat{p}^2.
\end{align*}

We use the following observables $\expval{\hat{x}}{q(t)}$, $\expval{\hat{x}^2}{q(t)}$ where $\ket{q(t)}$ is the normalised quantum state of $\vect{q}(t,x)$. In total, we consider three cases where explicit solutions are available: 
\begin{align}
\label{eqn::fp_eg}
\left\{
\begin{aligned}
\text{Case (1):}&\qquad  g(t) = 0.5 s,\ \beta(t) = 0.5 s;\\
\text{Case (2):}&\qquad g(t) = 0.5 s,\ \beta(t) = 0.3;\\
\text{Case (3):}&\qquad g(t) = 0.5 s^3, \beta(t) = 0.3 s.
\end{aligned}\right.
\end{align}

In all three cases, the initial condition $\vect{q}(0,\cdot) = \mathcal{N}\big(0.8, 0.3^2\big)$. In simulation, we used the Galerkin method with parameters $N_s=128$, $\varsigma_s=0.2$,  $N_\eta=128$, $\varsigma_\eta=2.0$, $N_u=64$, $\varsigma_u=2.0$, which amounts to using $7$ qubits for time dilation, $7$ qubits for Schrodingersation, $6$ qubits for approximating the original Fokker-Planck equation. 
Notably, we will numerically demonstrate below that with a noiseless Hamiltonian simulation, we can simulate a 1D time-dependent PDE only \emph{with a total of $20$ qubits}. Such an amount of noisy qubits is already available at present. 
However, to actually implemented it on a quantum hardware, we need e.g., error correcting techniques to deal with the still noisy environment for Hamiltonian simulations, which will be left as future investigations.
In Figure~\ref{fig::fp_eg_3}, we visualise the observables as well as their errors (compared with theoretically exact values) for the case (3) in which $g$ is a cubic function in time and $\beta$ is linear in time. With a sufficiently small $\omega$ in the initial state preparation, the error is around $10^{-2}$. The simulation results for two simpler cases (1) and (2) are deferred to Appendix~\ref{app::numerics}, in which the error has an order $10^{-3}$.

\begin{figure}
\begin{subfigure}[b]{0.65\textwidth}
\includegraphics[width=\textwidth]{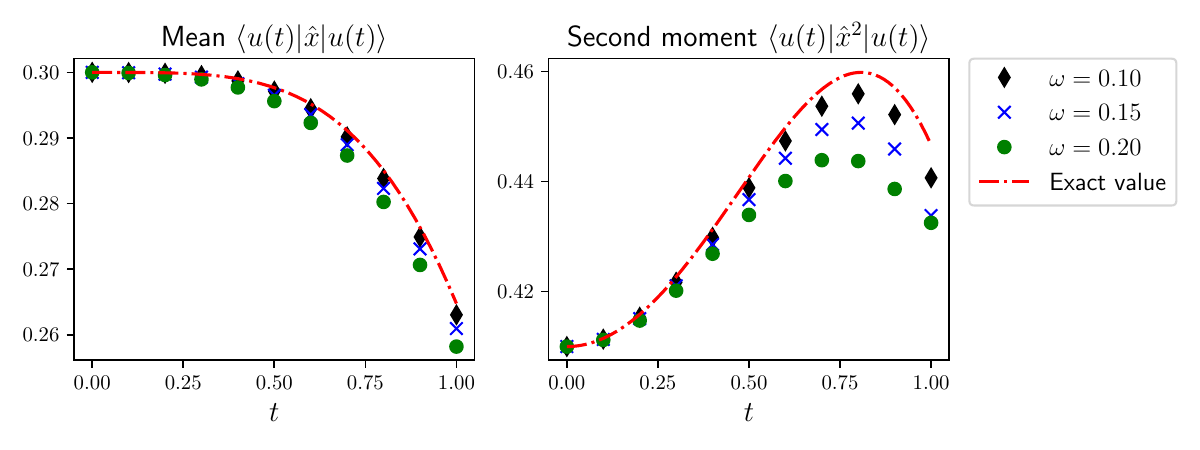}
\caption{Observables with respect to time}
\end{subfigure}
~
\begin{subfigure}[b]{0.65\textwidth}
\includegraphics[width=\textwidth]{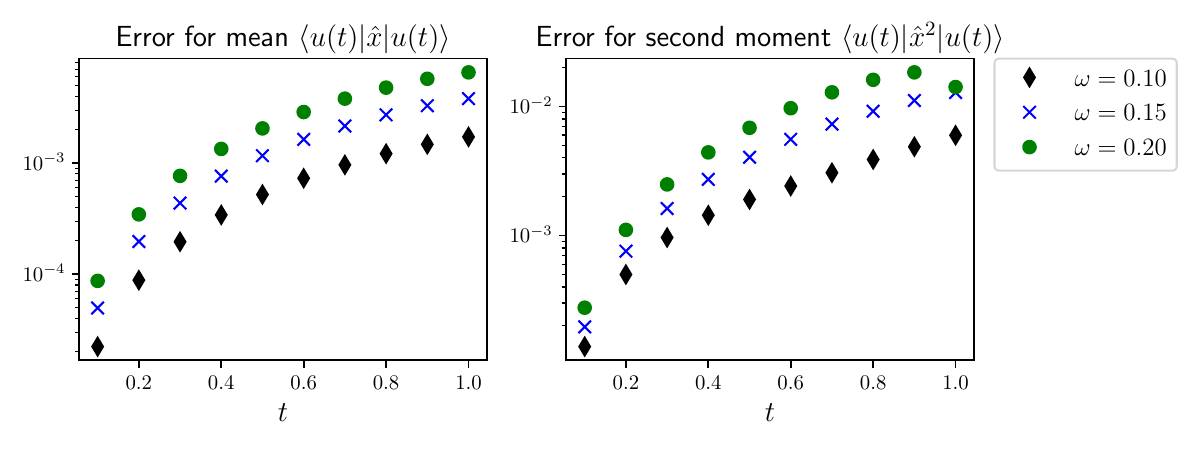}
\caption{Errors for observables}
\end{subfigure}
\caption{We visualise observables and their corresponding errors for quantum algorithms with various imperfect clock modes ($s$-mode). This picture corresponds to the third case in Eq.~\eqref{eqn::fp_eg}, namely, $g(t) = 0.5 s^3, \beta(t) = 0.3 s$.}
\label{fig::fp_eg_3}
\end{figure}

\section{Discussion}
\label{section::discussion}

We have shown how to turn any linear non-autonomous unitary system into an autonomous system obeying unitary dynamics, in one higher dimension. Combined with our method for turning any linear dynamical system into one obeying unitary dynamics in one higher dimension (Schr\"odingerisation), this means that the evolution of any linear system with time-dependent coefficients can be simulated on a quantum simulator with a single time-independent Hamiltonian with at most two additional modes for homogeneous systems with a first-order time derivative. For inhomogeneous systems, there is an additional single qubit, and one additional qubit for each higher-order time derivative. This means that any methods for the simulation of time-independent Hamiltonians can be immediately adapted to this time-dependent setting. \\

Apart from applications to ODEs and PDEs, there are many other interesting applications to explore in future work. These include further applications to both closed and open quantum systems, adiabatic computation, optimisation, and time-dependent iterative methods, that can also be applied to stochastic problems and applications of random matrices and many more.

\section*{Acknowledgements}
NL thanks Lorenzo Maccone for insightful discussions. YC is sponsored by the Shanghai Pujiang Program.
SJ was partially supported by the NSFC grant No.12031013, the Shanghai Municipal Science
and Technology Major Project (2021SHZDZX0102), and the Innovation Program of Shanghai Municipal Education Commission (No. 2021-01-07-00-02-E00087). NL acknowledges funding from the Science and Technology Program of Shanghai, China (21JC1402900).
The authors are also supported by the Shanghai Jiao Tong University 2030 Initiative and the Fundamental Research Funds for the Central
Universities.
\bibliography{Review}

\appendix

\section{Alternative protocol for the special case of the Hamiltonian that commutes at different times}
In the most general case, $[\vect{H}_{\tau}, \vect{H}_{\tau'}]\neq 0$. We can also have a simplified scenario where we have the commutativity between the Hamiltonian at different times $[\vect{H}_{\tau}, \vect{H}_{\tau'}]= 0$, which can be satisfied for $\vect{H}(t)=g(t)\vect{h}$ where $\vect{h}$ is a time-independent Hamiltonian. Then the corresponding unitary operation is 
\begin{align*}
    U(t)=e^{-i\int_0^t g(\tau) d \tau \vect{h}}=e^{-iG(t) t \vect{h}}, \qquad G(t) \equiv \frac{1}{t}\int_0^t g(\tau) d \tau. 
\end{align*}
where no time-ordering operation is required since the Hamiltonian commutes at different times. 
Our aim is to simulate the time evolution
\begin{align*}
  U(t)|\psi(0)\rangle=|\psi(t)\rangle. 
\end{align*}
Now let us define two new time-independent Hamiltonians that act on the original system plus an ancillary mode 
\begin{align*}
    &\hat{H}_1=\mathbf{1} \otimes \hat{p}_s, \qquad U_1(t)=e^{-i\hat{H}_1t} \\
    &\hat{H}_2=\vect{h} \otimes G(\hat{s}), \qquad U_2(t)=e^{-i\hat{H}_2t}, \qquad [\hat{x}, \hat{p}_x]=i\mathbf{1},
\end{align*}
where $\hat{s}$ is a quadrature operator acting on the ancillary mode. We now define an initial state
\begin{align*}
    |0_s\rangle |\psi(0)\rangle
\end{align*}
where $|0_s\rangle$ is the $s=0$ eigenstate so $\hat{s}|0_s\rangle=0$. \\

Applying the unitary $U_2(t)U_1(t)$ to this initial state
\begin{align*}
    U_2(t)U_1(t)|0_s\rangle |\psi(0)\rangle=U_2(t)|t_s\rangle|\psi(0)\rangle=e^{-iG(t)t \vect{h} \otimes \mathbf{1}_s}|t_s\rangle|\psi(0)\rangle.
\end{align*}
Tracing out the first (ancillary) register, we end up with the state
\begin{align*}
    e^{-iG(t)t\vect{h}}|\psi(0)\rangle=U(t)|\psi(0)\rangle,
\end{align*}
which is the initial state that we desired. \\

This means that with the simulation of the unitary system of $D$ qumodes with a time-dependent Hamiltonian, then we need only $D+1$ qumodes and access to the two unitary operations. The most difficult part is to engineer the Hamiltonian $\hat{H}_2$ when $G(t)$ is a complicated function of $t$. We note that this is a different method to the method considered in the main text. This method is more closely akin to solving the first equation first, then inserting the constant value into the second equation in
\begin{align*}
    & \frac{ds}{dt}=1, \quad s_0=t_0 \\
    & \frac{d \vect{u}}{dt}=\vect{H}(s) \vect{u}.
\end{align*}
By solving the first equation first, one gets $s=T$ for some final time $T$. This is equivalent to the $U_1(T)$ step. Then this means $H(s) \rightarrow H(T)$ becomes a time-independent Hamiltonian (since $T$ is now a constant). This method is not equivalent to that considered in the main text since in general $[\mathbf{1}\otimes \hat{p}_s, \vect{H}(\hat{s})] \neq 0$.

\section{Additional proofs}

\subsection{Proof of Lemma \ref{lemma::error_classical}}
\label{appendix::proof_classical}

Let us define $V(t,s):=\mathcal{U}_{s,s-t}\vect{y}_0$. 
Clearly $V(t,t)=\mathcal{U}_{t,0}\vect{y}_0=\vect{y}(t)$.  Using Taylor expansion on $V(t,s)$ at $s=t$,
\begin{align*}
\vect{y}_{\omega}(t) =& \int^{\infty}_{-\infty}\delta_{\omega}(s-t)V(t,s) ds\\
=& \int^{\infty}_{-\infty}\delta_{\omega}(s-t)
[V(t,t)+\partial_sV(t,t) (s-t)+\frac{1}{2}\partial_{ss}V(t,t)(s-t)^2+o((s-t)^2)]\,ds
\\
=& V(t,t)+ \mu \partial_s V(t,t) + \frac{\omega^2}{2}\partial_{ss}V(t,t)+o(\omega^2),
\end{align*}
using $\mu=\int (s-t)\delta_{\omega}(s-t)ds$. Now, for good state preparation, $\mu \sim 0$, but we can keep it more general since we see later the dominant contribution to $\vect{y}_{\omega}$ due to $\mu$ drops out when computing the quantum fidelity in any case. \\

We can rewrite the terms $\partial_{s} V(t,t)$ and $\partial_{ss} V(t,t)$ in the following way. We can first decompose $\mathcal{U}_{s, s-t}=\mathcal{U}_{s,0}\mathcal{U}_{0, s-t}$ and using Eq.~\eqref{eq:vequation} for $\mathcal{U}_{s,0}$ and $\mathcal{U}_{0,s-t}$
Using Eqs.~\eqref{eq:vequation} and \eqref{eq:vequation2}:
\begin{align*}
\partial_{s} V(t,s) &= -i \vect{H}(s) \mathcal{U}_{s, s-t} \vect{y}_0 + \mathcal{U}_{s,s-t} \big(i \vect{H}({s-t})\big) \vect{y}_0, \\
\partial_{ss} V(t,s) &= -i \dot{\vect{H}}(s) \mathcal{U}_{s, s-t} \vect{y}_0 - \vect{H}^2(s) \mathcal{U}_{s, s-t} \vect{y}_0 \\
&\qquad + 2\vect{H}(s) \mathcal{U}_{s,s-t} \vect{H}(s-t) \vect{y}_0 - \mathcal{U}_{s,s-t} \vect{H}^2(s-t) \vect{y}_0 + i \mathcal{U}_{s,s-t} \dot{\vect{H}}(s-t) \vect{y}_0.
\end{align*}
By choosing $s=t$, 
\begin{align}
\label{eqn::deri_V}
\begin{aligned}
\partial_{s} V(t,t) =& -i \vect{H}(t) \vect{y}(t) + i\mathcal{U}_{t,0} \vect{H}(0) \vect{y}_0 \\
\partial_{ss} V(t,t) =& -i \dot{\vect{H}}(t) \mathcal{U}_{t, 0} \vect{y}_0 - \vect{H}^2(t) \mathcal{U}_{t, 0} \vect{y}_0 + 2 \vect{H}(t) \mathcal{U}_{t,0} \vect{H}(0) \vect{y}_0 - \mathcal{U}_{t,0} \vect{H}^2(0) \vect{y}_0 + i \mathcal{U}_{t,0} \dot{\vect{H}}(0) \vect{y}_0\\
=&-i \dot{\vect{H}}(t) \vect{y}_t - \vect{H}^2(t) \vect{y}_t + 2 \vect{H}(t) \mathcal{U}_{t,0} \vect{H}(0) \vect{y}_0 - \mathcal{U}_{t,0} \vect{H}^2(0) \vect{y}_0 + i \mathcal{U}_{t,0} \dot{\vect{H}}(0) \vect{y}_0.
\end{aligned}
\end{align}
First note that since both $\vect{y}$ and $\vect{y}_{\omega}$ evolve under a unitary operation, their $l_2$ norms are preserved under the evolution, so $\|\vect{y}(t)\|=\|\vect{y}_0\|=1$, where we can assume $\|\vect{y}_0\|=1$ without losing generality, and $\|\vect{y}_{\omega}(t)\|=\|\vect{y}_{\omega}(0)\|$. By definition, $\vect{y}_{\omega}(0)=\int \delta_{\omega}(s)ds \vect{y}_0=\vect{y}_0$ when we choose the definition of $\delta_{\omega}$ satisfying Eq.~\eqref{eq:deltaomega2}. Thus the overlap between the ideal state $|y(t)\rangle=\vect{y}(t)$ and the approximate state $|y_{\omega}(t)\rangle=\vect{y}_{\omega}(t)$ is 
\begin{align*}
 &\ \langle y(t) |y_{\omega}(t)\rangle \\
=&\ 1+i \mu \big(\expval{\vect{H}(0)} - \expval{\vect{H}(t)}\big)\\
&\qquad + \frac{\omega^2}{2} \bigg(-i \expval{\dot{\vect{H}}(t)} - \expval{\vect{H}^2(t)} + 2 \bra{y(t)} \vect{H}(t) \mathcal{U}_{t,0} \vect{H}(0) \ket{y(0)} - \expval{\vect{H}^2(0)} + i \expval{\dot{\vect{H}}(0)}\bigg) + \order{\omega^3} \\
=&\ 1 - \frac{\omega^2}{2} \mathsf{C}_{\text{R}} + i \mathsf{C}_{\text{Im}} + \order{\omega^3},
\end{align*}
where we had used the fact that $\ket{y(t)}$ is a normalised state in the second line, and we define $\expval{\vect{H}^2(t)} := \ev{\vect{H}^2(t)}{y(t)}$, $\expval{\dot{\vect{H}}(t)} := \ev{\dot{\vect{H}}(t)}{y(t)}$ for any $t\in \mathbb{R}$, and we had introduced short-hand notations
\begin{align*}
\mathsf{C}_{\text{R}} &=\ \expval{\vect{H}^2(t)}+\expval{\vect{H}^2(0)}-2 \bra{y(t)} \vect{H}(t) \mathcal{U}_{t,0} \vect{H}(0) \ket{y(0)},\\
\mathsf{C}_{\text{Im}} &=\mu \big(\expval{\vect{H}(0)} - \expval{\vect{H}(t)}\big) +\frac{\omega^2}{2} \Big(-\expval{\dot{\vect{H}(t)}} + \expval{\dot{\vect{H}(0)}}\Big) = o(\omega).\  
\end{align*}
By direct computation,
\begin{align*}
\text{Fid}\big(|y_{\omega}(t)\rangle, |y(t)\rangle\big) &= \abs{\langle y(t) |y_{\omega}(t)\rangle}^2 \\
&= \big(1-\frac{\omega^2}{2} \mathsf{C}_{\text{R}}\big)^2 - \mathsf{C}_{\text{Im}}^2 + \order{\omega^3} \\
&= 1 - \omega^2 \mathsf{C}_{\text{R}} + o(\omega^2),
\end{align*}
where we had used the fact that the bias $\mu = o(\omega)$. Here $\mathsf{C}_{\text{R}}$ depends on the variance of the initial and final Hamiltonians, which of course are all bounded for realistic physical systems. For physical systems, the norms of the initial and final Hamiltonians are also bounded, and the last term $\mathsf{C}_{\text{R}}$ is also bounded. 

\subsection{Proof of Lemma~\ref{lemma::error_quantum}}
\label{appendix::proof_quantum}

The calculation is similar to Lemma \ref{lemma::error_classical}.
Recall that $\gamma_{\omega}(t) =
 \int \delta_{\omega}(t-s)\ \mathcal{U}_{s,s-t} \dyad{\vect{y}_0} \mathcal{U}_{s,s-t}^\dagger\ d s,$ and we want to quantify the overlap between $\gamma_{\omega}(t)$ and $\ket{y(t)} := \mathcal{U}_{t,0}\ket{y_0}$. After introducing 
\begin{align*}
\widetilde{V}(t,s) := \mathcal{U}_{s,s-t} \dyad{\vect{y}_0} \mathcal{U}_{s,s-t}^\dagger,
\end{align*}
we can easily derive the following 
\begin{align*}
\gamma_{\omega}(t) = \dyad{y(t)} + \mu \partial_s\widetilde{V}(t,t) + \frac{\omega^2}{2} \partial_{ss} \widetilde{V}(t,t) + o(\omega^2).
\end{align*}
 
Then we can similarly calculate $\partial_s\widetilde{V}(t,t)$ and $\partial_{ss}\widetilde{V}(t,t)$ similarly:
 \begin{align*}
 \partial_s\widetilde{V}(t,t) &= \partial_s V(t,t) V(t,t)^\dagger + h.c. \\
 &\myeq{\eqref{eqn::deri_V}}  -i \vect{H}(t) \dyad{y(t)} + i\mathcal{U}_{t,0} \vect{H}(0) \ketbra{y_0}{y(t)} + h.c.
\end{align*}
and
\begin{align*}
 & \partial_{ss}\widetilde{V}(t,t) = \partial_{ss} V(t,t) V(t,t)^\dagger + \partial_{s} V(t,t) \partial_s V(t,t)^\dagger + h.c.\\
 &\myeq{\eqref{eqn::deri_V}} -i \dot{\vect{H}}(t) \dyad{y(t)} - \vect{H}^2(t) \dyad{y(t)} + 2 \vect{H}(t) \mathcal{U}_{t,0} \vect{H}(0) \ketbra{y_0}{y(t)} \\
 &\qquad - \mathcal{U}_{t,0} \vect{H}^2(0) \ketbra{y_0}{y(t)} + i \mathcal{U}_{t,0} \dot{\vect{H}}(0) \ketbra{y_0}{y(t)} \\
 &\qquad  +\Big(-i \vect{H}(t) |y(t)\rangle + i\mathcal{U}_{t,0} H(0) |y_0\rangle\Big) \Big(i \bra{y(t)} \vect{H}(t)  - i \bra{y_0} \vect{H}(0) \mathcal{U}_{t,0}^\dagger\Big)  + h.c.
\end{align*}
As a remark, \enquote{+h.c.} means adding the Hermitian conjugate of all previous terms in the equation.

Then the Fidelity between the reduced quantum state $\gamma_{\omega}(t)$ and $\ket{y(t)}$ is 
 \begin{align*}
& \expval{\gamma_{\omega}(t)}{y(t)} \\
=& 1 + \mu \Big(-i \expval{\vect{H}(t)}{y(t)}  + i \expval{\vect{H}(0)}{y_0} + h.c.\Big) \\
& + \frac{\omega^2}{2} \left(\begin{aligned}
&-i \expval{\dot{\vect{H}}(t)}{y(t)} - \expval{\vect{H}^2(t)}{y(t)} \\
& + 2 \bra{y(t)} \vect{H}(t) \mathcal{U}_{t,0} \vect{H}(0) \ket{y_0} - \expval{\vect{H}^2(0)}{y_0} + i  \expval{\dot{\vect{H}}(0)}{y_0} \\
 &  +\Big(-i \expval{\vect{H}(t)}{y(t)} + i \expval{\vect{H}(0)}{y_0}\Big) \Big(i \expval{\vect{H}(t)}{{y(t)}}  - i \expval{\vect{H}(0)}{{y_0}}\Big)  + h.c.
\end{aligned}\right) + o(\omega^2)\\
=& 1 + \omega^2 \left(\begin{aligned}
&  - \expval{\vect{H}^2(t)}{y(t)} \\
& + 2\text{Re}\bra{y(t)} \vect{H}(t) \mathcal{U}_{t,0} \vect{H}(0) \ket{y_0} - \expval{\vect{H}^2(0)}{y_0}  \\
 &  + \Big(\expval{\vect{H}(t)}{y(t)} - \expval{\vect{H}(0)}{y_0}\Big)^2\Big) 
\end{aligned}\right) + o(\omega^2).
 \end{align*}
This proves our lemma by substituting the above expressions via the constant $\mathsf{C}$.

\section{Page-Wotters mechanism} \label{app:PW}

Although at first our protocol might seem similar to the Page-Wotters mechanism,  it is in fact a different protocol, but which still recovers the state $|y(t)\rangle$. The Page-Wooter's mechanism relies on the preparation of the \textit{ground state} of $\vect{\bar{H}}=\mathbf{1}\otimes \hat{p}_s+\vect{H}(\hat{s})$. Here we discretise in $s$ and can write the ground state as 
\begin{align*}
    \vect{\bar{H}}|\Psi\rangle=0, \qquad |\Psi\rangle=\frac{1}{\sqrt{N}}\sum_{i=-N/2}^{N/2}|y(s_i)\rangle \otimes |i\rangle,
\end{align*}
which is a quantum superposition of the time-dependent state $|y(t)\rangle$ entangled with the clock register. To recover $|y(t)\rangle$, one would require a method to prepare the ground state of a given $\vect{\bar{H}}$ (which can even be NP-hard depending on the Hamiltonian $\vect{H}(t)$). Given $|\Psi\rangle$, one would further need a projective measurement onto $|s=t\rangle$ so 
\begin{align*}
    |y(t)\rangle \propto \langle s=t|\Psi\rangle.
\end{align*}
This is clearly very resource intensive compared to our protocol. Its objective is also different to ours, namely to create a time-independent state $|\Psi\rangle$, whereas we only aim to create a state $|y(t)\rangle$ that evolves according to a time-independent Hamiltonian. 

\section{Simulation details and additional pictures}
\label{app::numerics}
We use the classical computer to emulate the above quantum algorithm in Theorem~\ref{thm:three} (see also Figure~\ref{fig:perfect}). To approximate the clock mode ($s$-mode), we use the Galerkin projection method to map the target PDE via a family of fixed discrete basis functions. For simplicity, we use the following orthonormal basis on $L^2(\mathbb{R})$,
\begin{align*}
\phi_n(x) = \frac{H_n(\frac{x}{\varsigma}) \exp(-\frac{x^2}{2\varsigma^2})}{C_n},\qquad C_n = (\pi\varsigma^2)^{1/4} \sqrt{n!} 2^{n/2},
\end{align*}
$H_n$ are Hermite polynomials, and $\varsigma$ is a tunable parameter. 
For the clock mode, the number of basis functions is denoted as $N_s$ and the parameter is $\varsigma_s$. Similar notations are being used for other continuous modes.
Similarly, the projection operator $\hat{P}_{>0}$ is also implemented via approximating this operator on the basis $\{\phi_n\}_{n=0}^{N_\eta -1}$.
Such a Galerkin projection basis method will enable us to simulate an one-dimensional PDE using quantum algorithms in Theorem~\ref{thm:three}. 
For large-scale PDE simulations using our quantum algorithms, a machine learning based simulator \cite{e_deep_2018} becomes necessary and will be left as future development.

\bigskip
{\noindent \bf Details about the 1D Fokker-Planck.}
\smallskip
Below are some formulas for the above 1D Fokker-Planck equations:
\begin{itemize}[leftmargin=1em]
\item (Observables). If $\vect{q} = \mathcal{N}(\mu, \sigma^2)$ is a Gaussian distribution, then the expectations of the corresponding normalised state are  
\begin{align*}
\expval{\hat{x}}{q} &=  \mu,\qquad \expval{\hat{x}^2}{q} = \frac{\sigma^2}{2} + \mu^2,
\end{align*}
where $\ket{q} = \vect{q}/\norm{\vect{q}}$.

\item (Evolution of the second moment). The second moment $M(t) := \int dx\  x^2 \vect{q}(t,x)$ of the 1D Fokker-Planck equation has the following explicit solution:
\begin{align*}
M(t) = e^{-2 G(t)} M(0) + e^{-2 G(t)} \int_{0}^{t} e^{2G(r)} 2\beta(r)\ dr
\end{align*}
and the mean $\mu(t) = \int dx\ x \vect{q}(t,x)$ has the following explicit solution:
\begin{align*}
\mu(t) = e^{-G(t)} \mu(0).
\end{align*}
Therefore, we use these exact values as reference to examine the accuracy of quantum algorithms.
\end{itemize}

We include simulation results for two simpler cases of the time functions $g, \beta$ below in Figures~\ref{fig::fp_1} and \ref{fig::fp_2}.
In these pictures, we only measure observables at a selective number of discrete time points, and compared them with the theoretically exact values, whose formulas are given above.

\begin{figure}[ht]
\begin{subfigure}[b]{0.65\textwidth}
\includegraphics[width=\textwidth]{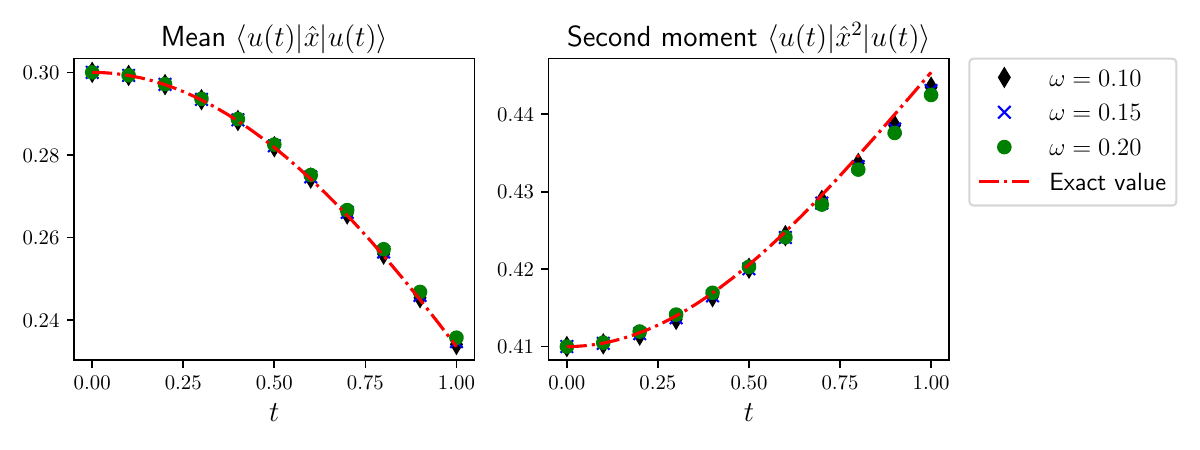}
\caption{Observables with respect to time}
\end{subfigure}
~
\begin{subfigure}[b]{0.65\textwidth}
\includegraphics[width=\textwidth]{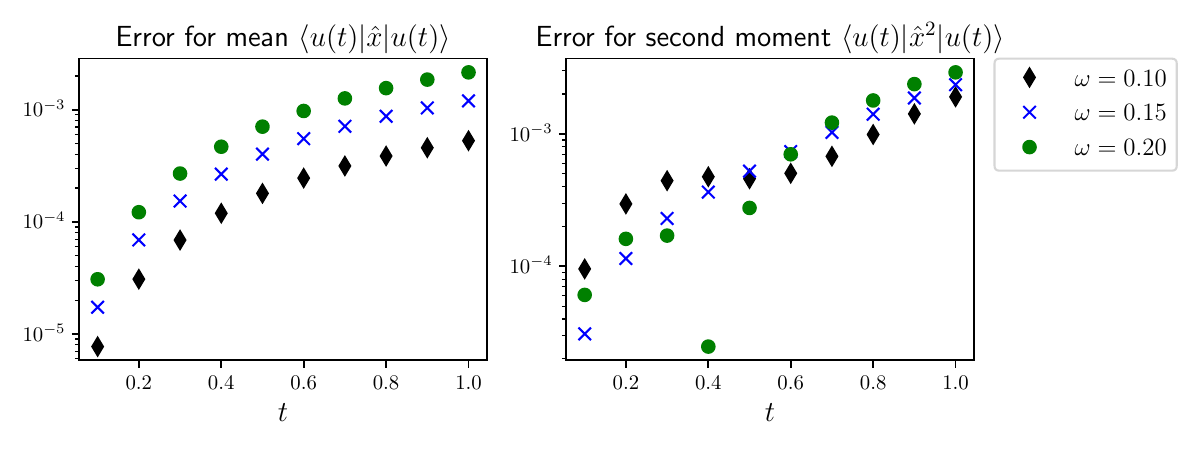}
\caption{Errors for observables}
\end{subfigure}
\caption{We visualise observables and their corresponding errors for quantum algorithms with various imperfect clock modes ($s$-mode). This picture corresponds to the first case in Eq.~\eqref{eqn::fp_eg}, namely, $g(t) = 0.5 s,\ \beta(t) = 0.5 s$.}
\label{fig::fp_1}
\end{figure}

\begin{figure}[ht]
\begin{subfigure}[b]{0.65\textwidth}
\includegraphics[width=\textwidth]{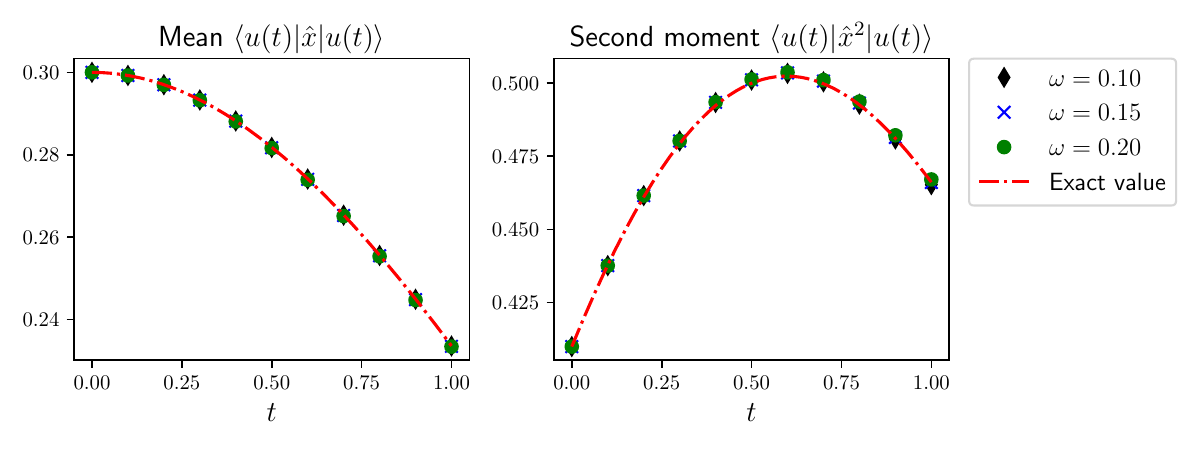}
\caption{Observables with respect to time}
\end{subfigure}
~
\begin{subfigure}[b]{0.65\textwidth}
\includegraphics[width=\textwidth]{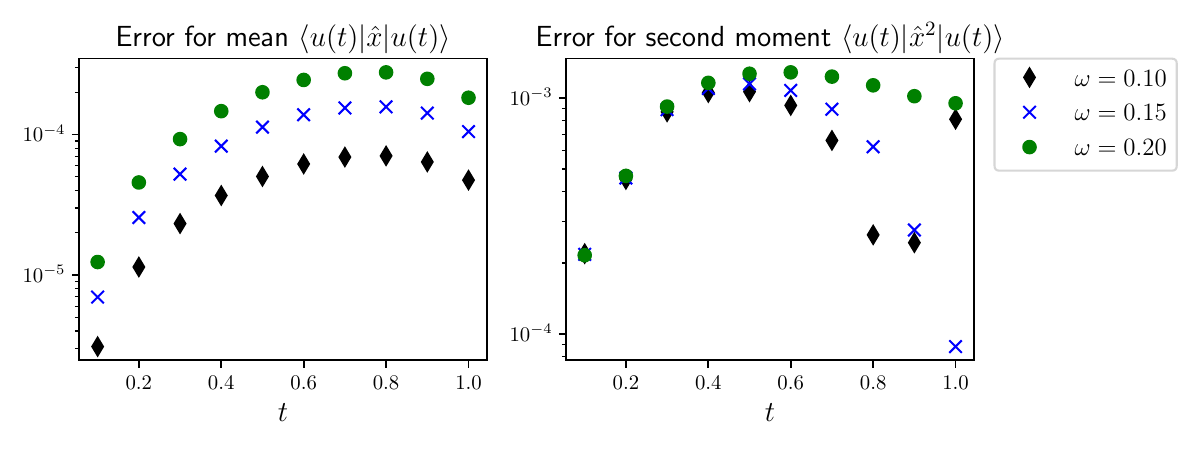}
\caption{Errors for observables}
\end{subfigure}
\caption{We visualise observables and their corresponding errors for quantum algorithms with various imperfect clock modes ($s$-mode). This picture corresponds to the second case in Eq.~\eqref{eqn::fp_eg}, namely, $g(t) = 0.5 s,\ \beta(t) = 0.3$.}
\label{fig::fp_2}
\end{figure}

\end{document}